\documentclass[11pt,twoside]{amsart}
\usepackage{setspace}
\usepackage{amsmath,hyperref}
\usepackage{amstext}
\usepackage{amssymb}
\usepackage{amsthm} 
\usepackage{mathrsfs}
\usepackage{eurosym}
\usepackage{xcolor}
\usepackage{enumerate}   
\usepackage{enumitem}
\setstretch{1.14}
\usepackage{tikz}
\usepackage{booktabs}
\usepackage{geometry}
\geometry{a4paper,hcentering,vcentering,outer=2.5cm,top=2.3cm}

\newtheorem{theorem}{Theorem}[section]
\newtheorem{lemma}[theorem]{Lemma}
\newtheorem{proposition}[theorem]{Proposition}
\theoremstyle{definition}
\newtheorem{defin}[theorem]{Definition}

\newtheorem{assumption}[theorem]{Assumption}
\newtheorem{condition}[theorem]{Condition}
\newtheorem{remark}[theorem]{Remark}

\numberwithin{equation}{section}

\newcommand{\R}{\mathbb{R}}

 \newcommand{\G}{\mathcal{G}}
 \newcommand{\bG}{\bar{G}}
 \newcommand{\wG}{\widetilde{G}}
 \newcommand{\nH}{\mathcal{H}}
 \newcommand{\hH}{\hat{\mathcal{H}}}

\newcommand{\be}{\begin{equation}}
\newcommand{\ee}{\end{equation}}
\newcommand{\ba}{\begin{aligned}}
\newcommand{\ea}{\end{aligned}}

\title{The geometry of multi-curve interest rate models}

\author[C. Fontana]{Claudio Fontana}
\address{Department of Mathematics ``Tullio Levi - Civita'', University of Padova, Italy.}
\email{fontana@math.unipd.it}

\author[G. Lanaro]{Giacomo Lanaro}
\address{Department of Mathematics ``Tullio Levi - Civita'', University of Padova, Italy.}
\email{glanaro@math.unipd.it}

\author[A. Murgoci]{Agatha Murgoci}
\address{Centrica Energy, Denmark.}
\email{Agatha.murgoci@centrica.com}

\date{\today}

\keywords{Term structure modeling; spreads; interest rate benchmarks; Heath-Jarrow-Morton model; consistency problem; finite-dimensional realization; model calibration.}
\thanks{{\em JEL classification}: C02, C60, E43, G12.  \\
{\em MSC2020 classification}: 60H15, 91G30. \\
The authors are grateful to the participants of the XXIV Workshop on Quantitative Finance (Gaeta, Italy, 2023) and to two anonymous reviewers for useful comments on an earlier version of this work. The first author gratefully acknowledges financial support from the Europlace Institute of Finance and the University of Padova (research programmes  STARS StG PRISMA and BIRD190200/19).}

\begin{document}

\maketitle

\begin{abstract}
We study the problems of consistency and of the existence of finite-dimensional realizations for multi-curve interest rate models of Heath-Jarrow-Morton type, generalizing the geometric approach developed by T. Bj\"ork and co-authors for the classical single-curve setting. We characterize when a multi-curve interest rate model is consistent with a given parameterized family of forward curves and spreads and when a model can be realized by a finite-dimensional state process. We illustrate the general theory in a number of model classes  and examples, providing explicit constructions of finite-dimensional realizations. Based on these theoretical results, we perform the calibration of a three-curve Hull-White model to market data and analyse the stability of the estimated parameters.
\end{abstract}

\section{Introduction}
\label{section - introduction}


In dynamic models for the term structure of interest rates, two fundamental problems concern the {\em consistency} between a forward rate model $\mathcal{M}$ and a parameterized family $\mathcal{G}$ of forward curves and the {\em existence of finite-dimensional realizations} (FDRs) for the model. More specifically, consistency between $\mathcal{M}$ and $\mathcal{G}$ means that, if the initial term structure belongs to $\mathcal{G}$, then model $\mathcal{M}$ will only generate forward rate curves belonging to $\mathcal{G}$, at least for a strictly positive time. 
The existence of finite-dimensional realizations corresponds to the existence of a finite-dimensional Markov state process driving the evolution of the inherently infinite-dimensional term structure. These problems have been addressed and solved in a remarkable series of works by T. Bj\"ork and co-authors (see \cite{bjork2004geometry} for an overview), exploiting the geometric properties of interest rate models and revealing the deep connections between the two problems. 

Up to now, the geometric theory of interest rate models has remained restricted to the classical single-curve setup, where a single term structure provides a complete description of the interest rate market.
However, starting from the 2008 global financial crisis, the emergence of credit, liquidity and funding risks in interbank transactions has led to {\em multi-curve} interest rate markets. The multi-curve phenomenon refers to the coexistence of multiple term structures, each of them associated to an interest rate benchmark for a specific tenor (i.e., the time length of the underlying loan). While overnight rates (such as the recently introduced SOFR in the US, SONIA in the UK, \euro STR in the Eurozone) can be considered {\em risk-free rates}, other interest rate benchmarks (such as Euribor and Libor rates, or the newly proposed Ameribor rates) are {\em risk-sensitive rates} associated to distinct term structures, with a specific behavior depending on the tenor.\footnote{After the cessation of Libor rates, market participants have expressed the need of risk-sensitive rates embedding credit, funding and liquidity risk components. In the US, this has led to the proposal of BSBY and Ameribor rates. In the Eurozone, Euribor rates represent risk-sensitive rates. Therefore, even after the cessation of Libor rates, global interest rate markets continue to be multi-curve interest rate markets.}
The mathematical analysis of multi-curve interest rate markets is made complex by the fact that all interest rate benchmarks are quoted in the same financial market, thereby introducing a strong dependence among the multiple term structures. Hence, multi-curve interest rate markets cannot be adequately described by a naive juxtaposition of several single-curve interest rate models. 
We refer the reader to \cite{grbac2015interest} for an overview of multi-curve interest rate models and,  closer to the modeling approach adopted in the present work, to \cite{cuchiero2016general,fontana2020term} for general multi-curve models of Heath-Jarrow-Morton (HJM) type.

In this paper, we address the problems of consistency and of the existence of finite-dimensional realizations for multi-curve interest rate models, extending the geometric approach first proposed by T. B\"ork and co-authors. This objective is made easier by the fact that several foundational results of \cite{bjork1999interest,bjork2001existence} are formulated at an abstract level, which facilitates their application beyond the classical single-curve setting.
We work in a Heath-Jarrow-Morton model driven by a multi-dimensional Brownian motion and we adopt the convenient parameterization of  \cite{fontana2020term} of multi-curve interest rate markets in terms of spreads and fictitious zero-coupon bond prices. This parameterization highlights the analogy between multi-curve interest rate markets and foreign currency markets. By exploiting this analogy, we can adapt to our setting the methodology of \cite{slinko2010finite} for characterizing finite-dimensional realizations of a two-economy HJM model.
We study in detail the classes of constant volatility models and constant direction volatility models, providing explicit constructions of finite-dimensional realizations. We also study the possibility of including directly the spread processes in the state process determining the finite-dimensional realizations. Finally, we propose a calibration methodology that computes the parameterized manifold that achieves the best fit to market data.

The problems of consistency and of the existence of finite-dimensional realizations have relevant practical applications. Indeed, as explained in \cite[Section 3.1]{bjork2004geometry}, the consistency problem is related to parameter recalibration. Forward rate curves are typically described through a parameterized family of functions $\mathcal{G}$ (such as the popular Nelson-Siegel family of \cite{nelson1987parsimonious}) and, once a forward rate curve has been obtained, an interest rate model $\mathcal{M}$ can be calibrated to it. On the next day, a new forward rate curve is computed and the model $\mathcal{M}$ recalibrated to it. If consistency holds between model $\mathcal{M}$ and the parameterized family of functions $\mathcal{G}$, then $\mathcal{M}$ generates forward rate curves that belong to $\mathcal{G}$. 
Concerning the existence of finite-dimensional realizations, it has to be noted that Heath-Jarrow-Morton models are in general infinite-dimensional. However, if a finite-dimensional realization can be found, then the model becomes significantly easier to handle and can be described by an underlying Markovian factor process. 

We close this introduction by briefly discussing some related literature.
The problem of consistency was first addressed in \cite{bjork1999interest}, while the existence of FDRs was studied in \cite{bjork1999minimal}, \cite{bjork2001existence} and \cite{bjork2002construction}. The inclusion of stochastic volatility process has been addressed in \cite{bjork2004finite}. These results are based on the interpretation of the realization of a forward rate model as a curve living on suitable Hilbert space. In \cite{filipovic2003existence,filipovic2004geometry}, finite-dimensional realizations are studied in the context of forward rate models living on Fr\'echet spaces. For simplicity of presentation, in this work we shall only consider the case of Hilbert spaces, also because the conditions obtained by \cite{bjork2001existence} continue to hold in that more general setting.
Geometric properties related to the problem of consistency and the existence of FDRs for L\'evy models are studied in \cite{filipovic2008existence,tappe2010alternative,tappe2012existence}.

The paper is structured as follows. In Section \ref{section - modelling framework}, we introduce the main modeling quantities of multi-curve interest rate models and the general mathematical framework of our work. In Section \ref{section - consistency problem}, we address the consistency problem, while Section \ref{section - FDRs} contains the study of finite-dimensional realizations. In Section \ref{section - alternative invariance}, we propose and characterize an alternative notion of invariance. Finally, in Section \ref{section - calibration algorithm}, we describe a calibration algorithm that determines the parameterized manifold that achieves the best fit to market data of a multi-curve interest rate market.

\subsection*{Notation}
We introduce some general notation that is going to be used in the paper:
\begin{itemize}
\item We denote by $A^\top$ the transpose of a matrix $A$ and by $v \cdot w$ the scalar product between two vectors $v$ and $w$ in $\R^n$. The Euclidean norm of a vector $v$ is denoted by $\|v\|$.
\item For a differentiable function $f:\R_+\times\R_+\rightarrow\R^n$ we introduce the functionals
\[
\mathbf{F}f(t, x)	:= \frac{\partial}{\partial x} f(t,x),\quad 
\mathbf{H}f(t,x)	:= \int_0^x f(t,u) du,\quad
\mathbf{B}f(t,x)	:= f(t,0).
\]
\item For a Fr\'echet-differentiable function $f:\mathcal{H}_1\rightarrow \mathcal{H}_2$ we denote by $\partial_{\hat{r}} f(\hat{r})$ its Fr\'echet derivative at $\hat{r}\in\mathcal{H}_1$. 
\item We denote by $\mathbb{I}$ the identity map on a vector space $\hat{\mathcal{H}}$. Moreover, if $\mathcal{H} = \R^k$, for $k\in\mathbb{N}$, we denote by $\mathbb{I}_k$ the identity map.
\end{itemize}


\section{The modeling framework} 
\label{section - modelling framework}

In this section, we describe the general framework of multi-curve interest rate models. In Section \ref{section - market setup}, we introduce the generic types of interest rates considered in our analysis and the key modeling quantities. The mathematical setup of our work is then described in Section \ref{section - mathematical framework}.

\subsection{Interest rates and spreads}
\label{section - market setup}

We consider a generic interest rate market with a num\'eraire given by the savings account associated to a {\em risk-free rate} (RFR). As mentioned in the Introduction, the RFR can represent one of the recently introduced overnight interest rate benchmarks. As usual, we parametrize the RFR term structure by means of zero-coupon bond (ZCB) prices, denoting by $B^0_t(T)$ the price at time $t$ of a ZCB with maturity $T$, for all $0\leq t\leq T<+\infty$. The simply compounded forward RFR for the time interval $[T,T+\delta]$ evaluated at time $t\leq T$ is given by $L^0_t(T,T+\delta):=(B^0_t(T)/B^0_t(T+\delta)-1)/\delta$, for $\delta>0$.

Besides the risk-free rate, we consider {\em risk-sensitive rates} (RSR) that reflect the presence of credit, funding and liquidity risk in interbank transactions. As mentioned in the Introduction, RSRs can play the role of Libor/Euribor rates as well as of the newly proposed credit-sensitive rates (e.g., Ameribor). Since risk-sensitive rates exhibit a distinct behavior depending on their reference tenor (i.e., the length of time of the underlying loan), we consider a family of RSRs associated to a set of tenors $\Delta:=\{\delta_1,\ldots,\delta_m\}$, with $\delta_1<\ldots<\delta_m$, for some $m\in\mathbb{N}$. The simply compounded forward RSR for tenor $\delta_j\in\Delta$ is denoted by $L^j_t(T,T+\delta_j)$, for all $0\leq t\leq T<+\infty$.

The multi-curve setup consists in the coexistence of the RFR together with the family of RSRs. In line with \cite{cuchiero2016general,fontana2020term}, instead of modeling RSRs directly, we consider multiplicative {\em spreads} between RSR and RFR, defined as follows:
\be\label{spread-definition}
S^j_t:=\frac{1+\delta_j L^j_t(t,t+\delta_j)}{1+\delta_j L^{0}_t(t,t+\delta_j)},
\qquad\text{ for }t\geq0 \text{ and }j=1,\ldots,m.
\ee
The spread $S^j_t$ can be regarded as a spot measure at time $t$ of the credit, funding, liquidity risks of the interbank market over a time period of length $\delta_j$. Under typical market conditions, spreads are greater than one and increasing with respect to the tenor's length.

For modeling purposes, we introduce fictitious ZCB prices, defined as follows:
\be	\label{delta bond}
B^j_t(T):= \frac{B^0_t(T+\delta_j)}{B^0_t(t+\delta_j)}
\frac{1+\delta_j L^j_t(T,T+\delta_j)}{1+\delta_j L^j_t(t,t+\delta_j)},
\quad\text{ for }0\leq t\leq T<+\infty\text{ and }j=1,\ldots,m.
\ee
Observe that \eqref{delta bond} ensures the terminal condition $B^j_T(T)=1$, for all $T\geq0$ and $j=1,\ldots,m$. We point out that we do not assume that the fictitious bonds introduced above are traded in the market. Rather, fictitious bonds serve as a particularly convenient parametrization of the term structures associated to risk-sensitive rates (compare with \cite[Section 2]{fontana2020term}).

\begin{remark}[FX analogy]	\label{rem:FX}
The quantities $S^j_t$ and $B^j_t(T)$ admit an interpretation in the context of foreign exchange (FX) markets. Indeed, for each $j=1,\ldots,m$, one can associate to the tenor $\delta_j$ a foreign economy denominated in a specific currency, whose currency risk is representative of the level of credit and liquidity risks implicit in the interbank market for tenor $\delta_j$.
The spread $S^j_t$ can be thought of as the spot exchange rate at time $t$ between the $j$-th foreign economy and the domestic economy, while $B^j_t(T)$ represents the price (in units of the foreign currency) at time $t$ of a ZCB with maturity $T$ of the $j$-th foreign economy. 
According to this interpretation, a swap referencing $L^j_T(T,T+\delta_j)$ can be thought of as an FX forward contract where one unit of the $j$-th foreign currency is delivered against a fixed payment in the domestic currency. Accordingly, it can be shown that the value at time $t$ of the floating leg of a single-period swap referencing $L^j_T(T,T+\delta_j)$ is given by $S^j_tB^j_t(T)$, see \cite[Section 2]{fontana2020term}.
This FX viewpoint on multi-curve interest rate models goes back to the work of \cite{Bianchetti}, has been further discussed in \cite{fontana2020term,MM16,NguyenSeifried15} (see also \cite{cuchiero2016general}\footnote{We point out that the FX analogy presented in \cite[Appendix B]{cuchiero2016general} is based on a different definition of ZCB prices associated to RSR and does not yield a clear interpretation of the corresponding spot exchange rate process.}) and is consistent with the general framework for multiple term structures first formulated by \cite{JarrowTurnbull98}.
In our context, this FX analogy will enables us to study finite-dimensional realizations of multi-curve interest rate models (see Section \ref{section - FDRs}) by relying on and extending the approach of \cite{slinko2010finite}, where the existence of finite-dimensional realizations for two-currency markets has been analyzed.
\end{remark}

\subsection{Term structure dynamics}
\label{section - mathematical framework}

Let $(\Omega, \mathcal{F}, (\mathcal{F}_t)_{t\geq0},\mathbb{Q})$ be a filtered probability space, endowed with a $d$-dimensional Brownian motion $(W_t)_{t\geq0}$ and where $\mathbb{Q}$ is a risk-neutral probability. 
In order to describe the RFR and RSR term structure dynamics, we adopt the Heath-Jarrow-Morton methodology, referring to \cite{fontana2020term} for additional details on the general framework. 

Adopting the Musiela parametrization, we represent risk-free and fictitious ZCB prices as 
\be	\label{eq:bond_prices}
B^j_t(T) = \exp\left(-\int_0^{T-t}r^j_t(x)dx\right),
\qquad\text{ for all }0\leq t\leq T<+\infty\text{ and }j=0,1,\ldots,m.
\ee
For $(t,x)\in\R^2_+$, the quantity $r^0_t(x)$ represents the risk-free instantaneous forward rate at time $t$ for maturity $t+x$. Similarly, for each $j=1,\ldots,m$ and $(t,x)\in\R^2_+$, the quantity $r^j_t(x)$ represents the risk-sensitive instantaneous forward rate at time $t$ for maturity $t+x$ relative to tenor $\delta_j$.
The savings account num\'eraire associated to the RFR is given by $S^0:=\exp(\int_0^{\cdot}r^0_t(0)dt)$.

Risk-free and risk-sensitive instantaneous forward rates are assumed to satisfy
\be	\label{eq:SDE_forward}
dr^j_t(x) = \alpha^j_t(x) dt + \sigma^j_t(x)dW_t,
\qquad\text{ for all }j=0,1,\ldots,m,
\ee
where $\alpha^j:\Omega\times\R_+^2\to\R$ and $\sigma^j:\Omega\times\R_+^2\to\R^d$ are progressively measurable stochastic processes satisfying suitable integrability requirements to ensure the well-posedness of \eqref{eq:bond_prices} and \eqref{eq:SDE_forward}.

The spreads introduced in \eqref{spread-definition} are modelled as exponentials of It\^o processes:
\be	\label{eq:SDE_spread}
S^j_t = \exp(Y^j_t),
\qquad\text{ where }\;
dY^{j}_t=\gamma^{j}_tdt+\beta^{j}_tdW_t,
\ee
for all $j=1,\ldots,m$, where $\gamma^j:\Omega\times\R_+\to\R$ and $\beta^j:\Omega\times\R_+\to\R^d$ are suitable progressively measurable processes ensuring the existence of a unique strong solution to \eqref{eq:SDE_spread}.
We shall refer to the process $Y^j$ as the {\em log-spread} process associated to tenor $\delta_j$, for $j=1,\ldots,m$.

\begin{remark}
Equation \eqref{eq:SDE_spread} represents a generic modeling framework for multiplicative spreads. As mentioned above, spreads are typically greater than one and ordered with respect to the tenor. These features can be ensured by a suitable specification of the processes $Y^j$, for $j=1,\ldots,m$. 
Depending on the model structure, this can also ensure the positivity of basis swap spreads.
\end{remark}

In the classical single-curve setting, the HJM drift condition implies that the drift term $\alpha^0$ in \eqref{eq:SDE_forward}  is determined by the volatility $\sigma^0$ (see, e.g., \cite[Proposition 1.1]{bjork2004geometry}). 
In the present multi-curve setup, risk-neutrality of $\mathbb{Q}$ implies that, for each $j=1,\ldots,m$, the drift term $\alpha^j$ in \eqref{eq:SDE_forward} is determined by the volatility $\sigma^j$ as well as by the covariation between $r^j$ and the log-spread process $Y^j$. Moreover, the drift term $\gamma^j$ in \eqref{eq:SDE_spread} turns out to be endogenously determined. This is the content of the following proposition, which follows as a special case of \cite[Theorem 3.7]{fontana2020term}.
For convenience of notation, we set $\beta^0:=0$ and $\gamma^0:=0$ in the following.

\begin{proposition}	\label{prop:drift}
Under a risk-neutral probability measure $\mathbb{Q}$, the following holds:
\begin{align*}
\alpha^j_t(x) &= \mathbf{F}r^j_t(x) + \sigma^j_t(x)\cdot\mathbf{H}\sigma^j_t(x) - \beta^j_t\cdot\sigma^j_t(x),\\
\gamma^j_t &= \mathbf{B}r^0_t-\mathbf{B}r^j_t - \frac{1}{2}\|\beta^j_t\|^2,
\end{align*}
for every $(t,x)\in\R^2_+$ and $j=0,1,\ldots,m$.
\end{proposition}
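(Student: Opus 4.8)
The plan is to derive both drift conditions from the requirement that, under the risk-neutral measure $\mathbb{Q}$, the discounted value of every traded asset be a local martingale, exploiting the FX analogy of Remark \ref{rem:FX}. Concretely, for each $j=0,1,\ldots,m$ the process $\exp(Y^j_t)B^j_t(T)/S^0_t$ should be a $\mathbb{Q}$-local martingale: for $j=0$, with the convention $Y^0\equiv 0$ (consistent with $\beta^0=\gamma^0=0$), this is just the discounted risk-free bond $B^0_t(T)/S^0_t$, while for $j\geq 1$ it is the discounted domestic value of the floating leg $S^j_tB^j_t(T)=\exp(Y^j_t)B^j_t(T)$ identified in Remark \ref{rem:FX}. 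Since $dS^0_t/S^0_t=r^0_t(0)\,dt=\mathbf{B}r^0_t\,dt$, this is equivalent to requiring that the drift of $d\bigl(\exp(Y^j_t)B^j_t(T)\bigr)\big/\bigl(\exp(Y^j_t)B^j_t(T)\bigr)$ equal $\mathbf{B}r^0_t$ for every $T\geq t$.

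First I would compute the $t$-dynamics of the log-bond $R^j_t(T):=\int_0^{T-t}r^j_t(x)\,dx$, so that $B^j_t(T)=\exp(-R^j_t(T))$. Combining a stochastic Fubini argument with the Leibniz boundary term coming from the moving upper limit $T-t$, one obtains
\[
dR^j_t(T)=\bigl(\mathbf{H}\alpha^j_t(T-t)-r^j_t(T-t)\bigr)dt+\mathbf{H}\sigma^j_t(T-t)\,dW_t,
\]
where $\mathbf{H}\alpha^j_t(T-t)=\int_0^{T-t}\alpha^j_t(u)\,du$ and similarly for $\sigma^j$. It\^o's formula then gives the drift of $dB^j_t(T)/B^j_t(T)$ as $-\mathbf{H}\alpha^j_t(T-t)+r^j_t(T-t)+\tfrac12\|\mathbf{H}\sigma^j_t(T-t)\|^2$ and its diffusion coefficient as $-\mathbf{H}\sigma^j_t(T-t)$. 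Adding the dynamics of $\exp(Y^j_t)$, whose drift is $\gamma^j_t+\tfrac12\|\beta^j_t\|^2$ and whose diffusion coefficient is $\beta^j_t$, and including the quadratic covariation term $-\beta^j_t\cdot\mathbf{H}\sigma^j_t(T-t)$, the martingale requirement reduces to the single scalar identity
\[
\gamma^j_t+\tfrac12\|\beta^j_t\|^2-\mathbf{H}\alpha^j_t(T-t)+r^j_t(T-t)+\tfrac12\|\mathbf{H}\sigma^j_t(T-t)\|^2-\beta^j_t\cdot\mathbf{H}\sigma^j_t(T-t)=\mathbf{B}r^0_t,
\]
which must hold for all $T\geq t$.

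Both assertions then follow by exploiting the freedom in $T$. Differentiating the identity with respect to $T$ (equivalently, with respect to $x=T-t$) eliminates the $T$-independent terms $\gamma^j_t$, $\tfrac12\|\beta^j_t\|^2$ and $\mathbf{B}r^0_t$, and produces $-\alpha^j_t(x)+\mathbf{F}r^j_t(x)+\sigma^j_t(x)\cdot\mathbf{H}\sigma^j_t(x)-\beta^j_t\cdot\sigma^j_t(x)=0$, after using $\partial_T\mathbf{H}\alpha^j_t(T-t)=\alpha^j_t(x)$, $\partial_T r^j_t(T-t)=\mathbf{F}r^j_t(x)$ and $\partial_T\mathbf{H}\sigma^j_t(T-t)=\sigma^j_t(x)$; rearranging yields the stated formula for $\alpha^j_t(x)$. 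Evaluating instead at $T=t$, where $\mathbf{H}\alpha^j_t(0)=0$ and $\mathbf{H}\sigma^j_t(0)=0$ so that all terms involving $\mathbf{H}\sigma^j$ vanish, leaves $\gamma^j_t+\tfrac12\|\beta^j_t\|^2+r^j_t(0)=\mathbf{B}r^0_t$, i.e.\ $\gamma^j_t=\mathbf{B}r^0_t-\mathbf{B}r^j_t-\tfrac12\|\beta^j_t\|^2$, which is the claimed spread drift.

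The step I expect to require the most care is the identification of the correct martingale quantity together with the bookkeeping of the covariation term. The extra contribution $-\beta^j_t\cdot\sigma^j_t(x)$ in the formula for $\alpha^j$, absent in the classical single-curve HJM condition, arises precisely from the quadratic covariation between the log-spread $Y^j$ and the fictitious bond $B^j_t(T)$, and getting its sign and its dependence on the running maturity right is the crux. The computation of $dR^j_t(T)$ with the moving boundary also deserves attention, but under the integrability assumptions placed on $\alpha^j$ and $\sigma^j$ the stochastic Fubini theorem applies and the boundary term is justified. Finally, since $\beta^0=\gamma^0=0$, the case $j=0$ recovers the classical single-curve drift condition $\alpha^0_t(x)=\mathbf{F}r^0_t(x)+\sigma^0_t(x)\cdot\mathbf{H}\sigma^0_t(x)$ as a special case.
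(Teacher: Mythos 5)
Your proposal is correct. It is worth comparing it with the paper's own treatment, because the paper does not actually prove Proposition \ref{prop:drift}: it obtains the two drift conditions as a special case of the general semimartingale result \cite[Theorem 3.7]{fontana2020term}, which covers far more general drivers than Brownian motion. You instead give a direct, self-contained derivation in the Brownian setting. The two routes rest on the same principle, since your starting point --- the local martingale property of $B^0(T)/S^0$ and $S^jB^j(T)/S^0$ under $\mathbb{Q}$ --- is precisely the defining property of a risk-neutral measure stated in the remark following the proposition. Your mechanics are sound: the stochastic Fubini/Leibniz computation of $dR^j_t(T)$, the sign of the covariation correction $-\beta^j_t\cdot\mathbf{H}\sigma^j_t(T-t)$ coming from $d\langle S^j, B^j(T)\rangle$, and the extraction of both conditions from the single identity (differentiation in $T$ yields the forward-rate drift, evaluation at $T=t$ yields the spread drift) are all correct, and the case $j=0$ consistently collapses to the classical HJM condition. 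The only step you should make explicit is that the drift identity holds a priori for a.e.\ $(t,\omega)$ and each fixed $T$, and upgrading it to a pointwise identity in $x=T-t$ that can be differentiated uses the smoothness of the curves (elements of $\mathcal{H}\subseteq\mathcal{C}^{\infty}(\R_+,\R)$) and of the volatility functionals --- routine in this framework, but it is the hypothesis that licenses your final step. As for what each approach buys: the paper's citation gets the result at zero marginal cost and in greater generality (semimartingale drivers, stochastic discontinuities), while your derivation makes transparent exactly where the multi-curve correction $-\beta^j_t\cdot\sigma^j_t(x)$, absent from the single-curve condition, originates, namely in the quadratic covariation between the log-spread and the fictitious bond.
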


\begin{remark}
(i) The drift conditions stated in Proposition \ref{prop:drift} are equivalent to the local martingale property under $\mathbb{Q}$ of the processes $B^0(T)/S^0$ and $S^jB^j(T)/S^0$, for all $T>0$ and $j=1,\ldots,m$. This property is taken as the defining property of a risk-neutral probability. As clarified in \cite{fontana2020term}, this suffices to ensure absence of arbitrage in the financial market composed by all risk-free ZCBs and  single-period swaps\footnote{Single-period swaps represent the basic building blocks of interest rate derivatives having RSR as underlying.} referencing the risk-sensitive rates $L^j_T(T,T+\delta_j)$, for all $T>0$ and $j=1,\ldots,m$. 
It is important to note that in our setup RSR are benchmark rates that serve to define payoffs of interest rate derivatives
 and do not represent actual borrowing/lending rates in the money market. Indeed, allowing for borrowing/lending at RSR would require the explicit modeling of refinancing risk (stemming from credit, liquidity and funding risks, see \cite{BMSS23,FPR23}), which is beyond the scope of the present work.

(ii) The drift conditions stated in Proposition \ref{prop:drift} are analogous to those appearing in \cite[Section 2]{slinko2010finite} in the context of a two-economy HJM model (compare also with Remark \ref{rem:FX} above).
\end{remark}

As a consequence of \eqref{eq:SDE_forward}-\eqref{eq:SDE_spread}, our modeling framework is described by $2m+1$ stochastic differential equations (SDEs), since we consider $m+1$ forward rates and $m$ spreads. In addition, the SDE \eqref{eq:SDE_forward} depends on the time-to-maturity variable $x\in\R_+$. In line with \cite{bjork2004geometry}, for each $j=0,1,\ldots,m$, we view \eqref{eq:SDE_forward} as an SDE taking values in a function space $\mathcal{H}\subseteq\mathcal{C}^{\infty}(\mathbb{R}_+,\mathbb{R})$.
More specifically, we assume that each forward rate process $r^j$ takes value in the space
\[
\mathcal{H}:=\bigg{\{}r\in\mathcal{C}^{\infty}(\R_+,\R)\text{ such that }||r||^2_{\gamma}:=\sum_{n=0}^{+\infty}2^{-n}\int_0^{+\infty}\Bigl{(}\frac{\partial^n}{\partial x^n}r(x)\Bigr{)}^2e^{-\gamma x}dx<+\infty\bigg{\}},
\]
with $\gamma>0$.  By \cite[Proposition 4.2]{bjork2001existence}, the space $(\mathcal{H},||\cdot||_{\gamma})$ is an Hilbert space for every $\gamma>0$. 

We denote by $\hat{r}:=(r^0,r^1,\ldots,r^m,Y^1,\ldots,Y^m)$ the solution to \eqref{eq:SDE_forward}-\eqref{eq:SDE_spread}, with drift terms determined as in Proposition \ref{prop:drift}. As explained below, existence of a unique solution to \eqref{eq:SDE_forward}-\eqref{eq:SDE_spread} is guaranteed under Assumption \ref{smoothness coefficients}.
The process $\hat{r}$ takes values on the space $\hat{\mathcal{H}}:=\mathcal{H}^{m+1}\times\mathbb{R}^m$, where $\mathcal{H}^{m+1}$ denotes the Cartesian product of $m+1$ copies of $\mathcal{H}$. We shall also adopt the notation $\hat{r} = (r,Y)\in\hat{\mathcal{H}}$, where $r\in\mathcal{H}^{m+1}$ represent the forward rates and $Y\in\R^m$ the log-spreads.

We introduce some technical assumptions on the volatility coefficients of \eqref{eq:SDE_forward} and \eqref{eq:SDE_spread}. In the following, Assumption \ref{smoothness coefficients} is always assumed to be satisfied without further mention.

\begin{assumption}
\label{smoothness coefficients}
(i) For all $j=0,1,\ldots,m$ and $(t,x)\in\R^2_+$, it holds that
\begin{equation}	\label{volatilities}
\sigma^j_t(x) = \sigma^j(\hat{r}_t)(x)
\qquad\text{ and }\qquad
\beta^j_t = \beta^j(\hat{r}_t),  
\end{equation}
where $\sigma^j:\hat{\mathcal{H}}\times\R_+\to\R^d$ and $\beta^j:\hat{\mathcal{H}}\to\R^d$ are deterministic functions.

(ii) For each $j=0,1,\ldots,m$, the functions $\sigma^j$ and $\beta^h$ appearing in \eqref{volatilities} are smooth functions in the Fr\'echet sense (i.e., they admit continuous $n$-th order Fr\'echet derivative, for every $n\in\mathbb{N}$). In addition, the following functions are smooth in the Fr\'echet sense:
\[
\sigma^j(\hat{r})\cdot\textbf{H}\sigma^j(\hat{r})-\frac{1}{2}\partial_{\hat{r}}\sigma^j(\hat{r})\hat{\sigma}(\hat{r})-\sigma^j(\hat{r})\cdot\beta^{j}(\hat{r}),
\qquad \text{ for all }j = 0,1,\ldots,m.
\]
\end{assumption}

Under Assumption \ref{smoothness coefficients}, we can compactly write as follows the dynamics of the process $\hat{r}$:
\be	\label{eq:ito}
d\hat{r}_t=\mu(\hat{r}_t)dt+\hat{\sigma}(\hat{r}_t)dW_t,
\ee
where $\hat{\sigma}(\hat{r}_t):=(\sigma^0(\hat{r}_t),\sigma^1(\hat{r}_t), \ldots,\sigma^m(\hat{r}_t), \beta^1(\hat{r}_t), \ldots, \beta^m(\hat{r}_t))^{\top}\in\hat{\mathcal{H}}^d$ and $\mu(\hat{r}_t)$ is an element of $\hat{\mathcal{H}}$ collecting all drift terms of equations \eqref{eq:SDE_forward}-\eqref{eq:SDE_spread}. 
It is a standard fact that Assumption \ref{smoothness coefficients} ensures the local existence of a unique strong solution to \eqref{eq:ito} in the function space $\hat{\mathcal{H}}$ (see, e.g., \cite[Section 2]{slinko2010finite}). For completeness of presentation, we give the following proposition.

\begin{proposition}	\label{prop:existence_solution}
Under Assumption \ref{smoothness coefficients}, there exists an a.s. strictly positive stopping time $\tau$ such that there exists a unique strong solution $\hat{r}^{\tau}=(\hat{r}_{\tau\wedge t})_{t\geq0}$  to \eqref{eq:ito} in $\hat{\mathcal{H}}$, for every $\hat{r}_0\in\hat{\mathcal{H}}$. 
\end{proposition}
\begin{proof}
Proposition \ref{prop:drift} together with Assumption \eqref{smoothness coefficients} implies that $\mu$ is a smooth vector field in $\hat{\mathcal{H}}$. Hence, again by Assumption \ref{smoothness coefficients}, the coefficients of \eqref{eq:ito} are smooth in the Fr\'echet sense and, therefore, locally Lipschitz continuous. As a consequence of \cite[Theorems 7.2 and 6.5]{DPZ} (see also \cite[Corollary 2.4.1]{Filipovic2001}), there exists a unique continuous weak solution to \eqref{eq:ito} up to a stopping time $\tau>0$ a.s. 
By \cite[Proposition 4.2]{bjork2001existence}, the operator  $\mathbf{F}$ is bounded in $\hat{\mathcal{H}}$. In view of \cite[Proposition F.0.4]{PR07}, this allows to conclude that the unique weak solution to \eqref{eq:ito} is a strong solution.
\end{proof}


In the following, it turns out to be convenient to rewrite \eqref{eq:ito} in terms of the Stratonovich integral (see, e.g., \cite[Definition 3.3.13]{karatzas2012brownian}), the reason being that in Stratonovich calculus It\^o's formula takes the form of the usual chain rule of ordinary calculus.
Denoting by $\int X_s \circ dY_s$ the Stratonovich integral of a semimartingale $X$ with respect to a semimartingale $Y$, we then have
\begin{equation}
\label{forward-rate system}
d\hat{r}_t =\hat{\mu}(\hat{r}_t)dt+\hat{\sigma}(\hat{r}_t)\circ dW_t, 
\qquad\text{ with } 
\hat{\mu}(\hat{r}_t):=  \mu(\hat{r}_t)-\frac{1}{2}\partial_{\hat{r}}\hat{\sigma}(\hat{r}_t)\hat{\sigma}(\hat{r}_t), 
\end{equation}
where, for every $\hat{r}\in\hat{\mathcal{H}}$, the drift term $\hat{\mu}(\hat{r})$ can be explicitly written as follows:
\begin{equation}
\label{stratonovich coefficients}
\hat{\mu}(\hat{r})=
\begin{pmatrix}
\textbf{F}r^0+\sigma^0(\hat{r})\cdot\textbf{H}\sigma^0(\hat{r})\\ 
\textbf{F}r^1+\sigma^1(\hat{r})\cdot\textbf{H}\sigma^1(\hat{r})-\sigma^1(\hat{r})\cdot\beta^{1}(\hat{r})\\ 
\vdots\\ 
\textbf{F}r^m+\sigma^m(\hat{r})\cdot\textbf{H}\sigma^m(\hat{r})-\sigma^m(\hat{r})\cdot\beta^{m}(\hat{r})\\ 
\textbf{B}r^0-\textbf{B}r^1-\frac{1}{2}||\beta^1(\hat{r})||^2\\ 
\vdots\\ 
\textbf{B}r^0-\textbf{B}r^m-\frac{1}{2}||\beta^m(\hat{r})||^2
\end{pmatrix}-\frac{1}{2}
\partial_{\hat{r}}\hat{\sigma}(\hat{r})
\begin{pmatrix}
\sigma^0(\hat{r})\\ 
\sigma^1(\hat{r})\\ 
\vdots\\ 
\sigma^m(\hat{r})\\ 
\beta^1(\hat{r})\\ 
\vdots\\ 
\beta^m(\hat{r})
\end{pmatrix},
\end{equation}
\[
\partial_{\hat{r}}\hat{\sigma}(\hat{r})=
\begin{pmatrix}
\partial_{r^0} \sigma^0(\hat{r})		&	\cdots	&	\partial_{r^m} \sigma^0(\hat{r})		&	\partial_{Y^1} \sigma^0(\hat{r})		&	\cdots 	&	\partial_{Y^m} \sigma^0(\hat{r})	\\ 
\vdots						&			&	\vdots						&	\vdots						&			& 	\vdots						\\ 
\partial_{r^0} \sigma^m(\hat{r}	)	&	\cdots	&	\partial_{r^m} \sigma^m(\hat{r})		&	\partial_{Y^1} \sigma^m(\hat{r})		&	\cdots	&	\partial_{Y^m} \sigma^m(\hat{r})	\\ 
\partial_{r^0} \beta^1(\hat{r})				&	\cdots	&	\partial_{r^m} \beta^1(\hat{r})			&	\partial_{Y^1} \beta^1(\hat{r})		&	\cdots	&	 \partial_{Y^m} \beta^1(\hat{r})	\\ 
\vdots						&			&	\vdots						&	\vdots						&			& 	\vdots						\\ 
\partial_{r^0} \beta^m(\hat{r})			&	\cdots	&	\partial_{r^m} \beta^m(\hat{r})			&	\partial_{Y^1} \beta^m(\hat{r})		&	\cdots	&	 \partial_{Y^m} \beta^m(\hat{r})	\\ 
\end{pmatrix}.
\]

\begin{remark}[On the relation to stochastic volatility models]
For multi-curve interest rate models, the problems of consistency and existence of FDRs cannot be addressed by adapting the techniques used in \cite{bjork2004finite} for stochastic volatility models, replacing the stochastic volatility process with the log-spread processes. Indeed, in \cite{bjork2004finite} the stochastic volatility dynamics do not depend on the forward curves. 
On the contrary, in a general multi-curve interest rate model, the dynamics of the log-spread processes cannot be separated from the forward curves, also as a consequence of the drift conditions stated in Proposition \ref{prop:drift}.
\end{remark}

\section{The consistency problem}
\label{section - consistency problem}

In this section, we study the consistency between a multi-curve interest rate model and a parameterized family of forward curves\footnote{In this section, with some abuse of terminology, we shall refer to $G$ as a parameterized family of forward curves even if, strictly speaking, only the first $m+1$ components of $G$ represent forward curves, the last $m$ components (corresponding to the log-spreads) being real-valued. An alternative approach will be presented in Section \ref{section - alternative invariance}.}. In Section \ref{sec:gen_consistency}, we provide a general characterization of consistency, extending to the multi-curve setting the geometric approach first introduced by \cite{bjork1999interest}. In Section \ref{subsection - HL NS}, we provide a detailed analysis of an example,  addressing the consistency between Hull-White multi-curve models and a modified Nelson-Siegel family of forward curves.

\subsection{Characterization of consistency}
\label{sec:gen_consistency}

We consider a mapping $G$ defined on an open subset $\mathcal{Z}\subseteq\R^n$, for some $n\in\mathbb{N}$. The mapping $G$ determines a manifold $\mathcal{G}\subseteq\hat{\mathcal{H}}$ defined by the image $\text{Im}[G]$. More specifically, in this section we shall work under the following assumption.

\begin{assumption}
\label{assumption - immersion condition} 
The mapping $G:\mathcal{Z}\to\hat{\mathcal{H}}$ is an injective function such that $\partial_zG:\mathbb{R}^n\to\hat{\mathcal{H}}$ is injective, for all $z\in\mathcal{Z}$. As a consequence, $\mathcal{G}:=\text{Im}[G]=\{G(z):z\in\mathcal{Z}\}$ is a submanifold of $\hat{\mathcal{H}}$.
\end{assumption}

For each parameter value $z\in\mathcal{Z}$, the mapping $G$ produces a curve $G(z)\in\hat{\mathcal{H}}$. The value of this curve at the point $x\in\R_+$ is denoted by $G(z,x)$, so that $G$ can also be viewed as a mapping $G:\mathcal{Z}\times\R_+\to\R^{2m+1}$. The mapping $G$ formalizes the idea of a finitely parameterized family of forward curves. The manifold $\mathcal{G}$ represents the set of all curves that can be generated by $G$.

In the following, we denote by $\mathcal{M}$ a multi-curve interest rate model as defined in Section \ref{section - mathematical framework}. 
We recall that, as a consequence of Proposition \ref{prop:drift}, a multi-curve interest rate model $\mathcal{M}$ is entirely determined by the volatilities $\sigma^j$ and $\beta^j$, for $j=0,1,\ldots,m$, which satisfy Assumption \ref{smoothness coefficients}.
A model $\mathcal{M}$ and a submanifold $\mathcal{G}$ are said to be {\em consistent} if model $\mathcal{M}$ generates forward curves that belong to $\mathcal{G}$, at least for a strictly positive time interval. The notion of consistency is precisely defined through the following concept of local invariance (see \cite[Definition 1.1]{bjork1999interest}).

\begin{defin} \label{invariance}
A manifold $\mathcal{G}$ is said to be {\em locally invariant} under the action of $\hat{r}$ if, for each point $(s,\hat{r}_s)\in\R_+\times\mathcal{G}$, there exists a stopping time $\tau(s,\hat{r}_s)$ such that almost surely
\[
\tau(s,\hat{r}_s)>s
\quad\text{ and }\quad
\hat{r}_t\in\mathcal{G},
\text{ for each }t\in[s,\tau(s,\hat{r}_s)).
\]
If $\tau(s,\hat{r}_s)=+\infty$ a.s. for all $(s,\hat{r}_s)\in\R_+\times\mathcal{G}$, the manifold $\mathcal{G}$ is said to be {\em globally invariant}.
\end{defin}

In the following, we give a necessary and sufficient condition for local invariance (and invariance will always be meant in a local sense). To this effect, following the approach of \cite{bjork1999interest}, we exploit the equivalence between the notion of invariance and the notion of $\hat{r}-invariance$.

\begin{defin}	\label{r invariance def}
A parameterized family $G$ is said to be (locally) {\em$\hat{r}$-invariant} under the action of $\hat{r}$ if, for every $\hat{r}_0\in\mathcal{G}$, there exists an a.s. strictly positive stopping time $\tau(\hat{r}_0)$ and a stochastic process $(Z_t)_{t\geq0}$, taking values in $\mathcal{Z}$ and with Stratonovich dynamics 
\begin{equation}\label{eq:Stratonovich_Z}
dZ_t=a(Z_t)dt+b(Z_t)\circ dW_t,
\end{equation}
such that for all $t\in[0,\tau(\hat{r_0}))$ it holds that
\[
r_t(x)=G(Z_t,x)\text{ a.s.},
\qquad\text{for all }x\in\R_+.
\]
\end{defin}

As in the classical single-curve setup considered in \cite{bjork1999interest} and \cite{bjork2004geometry}, the process $(Z_t)_{t\geq0}$ represents an underlying factor process. The notion of $\hat{r}$-invariance is therefore equivalent to the existence of an underlying finite-dimensional factor model. This concept will turn out to be intimately related to the existence of finite-dimensional realizations (see Section \ref{section - FDRs} below). 

Under Assumptions \ref{smoothness coefficients} and \ref{assumption - immersion condition}, it can be easily shown that $\hat{r}$-invariance is equivalent to invariance. 
By relying on this fact, a straightforward adaptation of the proof of \cite[Theorem 4.1]{bjork1999interest} leads to the following characterization of invariance in terms of the vector fields $\hat{\mu}$ and $\hat{\sigma}$.

\begin{theorem} \label{invariance-thm}
The manifold $\mathcal{G}$ is invariant under the action of $\hat{r}$ if and only if the following conditions hold for all $z\in\mathcal{Z}$:
\begin{equation}
\label{invariance condition}
\begin{split}
\hat{\mu}(G(z))		&\in \mathrm{Im}[G_z(z)] = T_{G(z)}\mathcal{G},\\ 
\hat{\sigma}_i(G(z))	&\in \mathrm{Im}[G_z(z)] = T_{G(z)}\mathcal{G},
\quad \text{ for all } i=1,\ldots,d,
\end{split}
\end{equation}
where $T_{G(z)}\G$ denotes the tangent space of $\G$ at the point $G(z)$.
\end{theorem}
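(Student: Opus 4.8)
The plan is to exploit the stated equivalence between invariance and $\hat{r}$-invariance, reducing the problem to showing that $\hat{r}$-invariance holds if and only if the two tangency conditions \eqref{invariance condition} are satisfied. Since the paper explicitly notes that $\hat{r}$-invariance and invariance are equivalent under Assumptions \ref{smoothness coefficients} and \ref{assumption - immersion condition}, I would first establish the necessity direction by assuming $\hat{r}$-invariance and differentiating the relation $\hat{r}_t = G(Z_t)$. The natural tool here is the Stratonovich chain rule: because the paper has already rewritten the dynamics of $\hat{r}$ in Stratonovich form \eqref{forward-rate system}, and the factor process $Z$ is assumed to have Stratonovich dynamics \eqref{eq:Stratonovich_Z}, applying the ordinary chain rule to $G(Z_t)$ yields
\begin{equation*}
d\hat{r}_t = dG(Z_t) = \partial_z G(Z_t)\,a(Z_t)\,dt + \partial_z G(Z_t)\,b(Z_t)\circ dW_t.
\end{equation*}
This is the decisive advantage of working in the Stratonovich framework, since no second-order Itô correction terms appear and the computation is formally identical to the finite-dimensional case treated in \cite{bjork1999interest}.

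Next I would match this expression against the intrinsic dynamics $d\hat{r}_t = \hat{\mu}(\hat{r}_t)\,dt + \hat{\sigma}(\hat{r}_t)\circ dW_t$ from \eqref{forward-rate system}. Because $W$ is a $d$-dimensional Brownian motion and the Stratonovich diffusion and drift terms can be identified separately (a standard uniqueness-of-semimartingale-decomposition argument), equating the $dW^i$ coefficients gives $\hat{\sigma}_i(G(Z_t)) = \partial_z G(Z_t)\,b_i(Z_t)$ for each $i=1,\dots,d$, and equating the $dt$ coefficients gives $\hat{\mu}(G(Z_t)) = \partial_z G(Z_t)\,a(Z_t)$. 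Evaluating along the trajectory and using that $Z_t$ can pass through any point $z\in\mathcal{Z}$ (since the relation must hold for every $\hat{r}_0\in\mathcal{G}$, hence for every starting parameter value), I conclude that both $\hat{\mu}(G(z))$ and each $\hat{\sigma}_i(G(z))$ lie in $\mathrm{Im}[\partial_z G(z)] = T_{G(z)}\mathcal{G}$, which is exactly \eqref{invariance condition}.

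For the sufficiency direction, I would assume \eqref{invariance condition} holds and construct the factor process. By Assumption \ref{assumption - immersion condition}, the map $\partial_z G(z)$ is injective with image equal to the tangent space $T_{G(z)}\mathcal{G}$, so it admits a left inverse on its image; call it $\partial_z G(z)^{-1}$ (well defined on $T_{G(z)}\mathcal{G}$). The tangency conditions then allow me to \emph{define} the coefficients
\begin{equation*}
a(z) := \partial_z G(z)^{-1}\hat{\mu}(G(z)),
\qquad
b_i(z) := \partial_z G(z)^{-1}\hat{\sigma}_i(G(z)),
\quad i=1,\dots,d,
\end{equation*}
which are the candidate drift and volatility of the finite-dimensional SDE \eqref{eq:Stratonovich_Z}. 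Smoothness of $a$ and $b$ follows from the smoothness of $\hat{\mu}$, $\hat{\sigma}$ and $G$ together with the smooth dependence of the left inverse on $z$, so the SDE \eqref{eq:Stratonovich_Z} admits a unique local strong solution $Z$ started from any $z_0$ with $G(z_0)=\hat{r}_0$. Setting $\tilde{r}_t := G(Z_t)$ and applying the Stratonovich chain rule again, one verifies that $\tilde{r}$ solves \eqref{forward-rate system} with $\tilde{r}_0 = \hat{r}_0$; by uniqueness of the solution to \eqref{forward-rate system} one obtains $\hat{r}_t = G(Z_t)$ for all $t$ up to the explosion time of $Z$, establishing $\hat{r}$-invariance.

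I expect the main technical obstacle to be the rigorous justification of the Stratonovich chain rule for $G$ as an infinite-dimensional (Hilbert-space-valued) semimartingale transformation, together with the measurable and smooth selection of the left inverse $\partial_z G(z)^{-1}$ on the tangent bundle. The finite-dimensionality of $\mathcal{Z}$ and the injectivity in Assumption \ref{assumption - immersion condition} make the left inverse well behaved pointwise, but ensuring joint smoothness in $z$ (so that \eqref{eq:Stratonovich_Z} has a genuine strong solution) requires a local-coordinate argument on the submanifold $\mathcal{G}$. Since the paper indicates this is a straightforward adaptation of \cite[Theorem 4.1]{bjork1999interest}, the substance lies in checking that the abstract Hilbert-space setting of $\hat{\mathcal{H}}$ does not introduce pathologies, which is guaranteed by the smoothness hypotheses in Assumption \ref{smoothness coefficients} and the boundedness of the operator $\mathbf{F}$ noted after \eqref{eq:ito}.
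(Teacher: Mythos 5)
Your proposal is correct and follows essentially the same route as the paper: the paper reduces the theorem to the equivalence between invariance and $\hat{r}$-invariance (valid under Assumptions \ref{smoothness coefficients} and \ref{assumption - immersion condition}) and then adapts the proof of \cite[Theorem 4.1]{bjork1999interest}, which is precisely the Stratonovich chain-rule coefficient-matching argument for necessity and the construction of the state process via the left inverse of $\partial_z G$ for sufficiency that you carry out. The technical points you flag (uniqueness of the Stratonovich decomposition, smoothness of the coefficients $a$ and $b_i$, and local uniqueness for \eqref{forward-rate system}) are exactly the ingredients the cited adaptation relies on.
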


Condition \eqref{invariance condition} is equivalent to require that the distribution generated by $\hat{\mu}$ and $\hat{\sigma}$ (i.e., the subspace of the tangent bundle of $\hat{\mathcal{H}}$ generated by $\hat{\mu}$ and $\hat{\sigma_i}$, for $i = 1,\ldots,d$) is a subset of $T\mathcal{G}$, where $T\mathcal{G}$ denotes the tangent bundle of $\mathcal{G}$. 

Differently from the single-curve case, in the present multi-curve setup the study of the consistency between a model $\mathcal{M}$ and a manifold $\mathcal{G}$ depends on the relations among the components of the function $G$, representing forward rates and log-spreads associated to different tenors. This is illustrated in the next section in the case of a Hull-White multi-curve model.

\begin{remark}	\label{rem:joint_spreads}
Even if the log-spread processes are inherently finite-dimensional, it is in general not possible to consider the consistency problem for the forward rate components alone, separately from the log-spread processes. This is due to the fact that the forward rate dynamics can depend on the log-spreads. For this reason, we consider a parameterized family $G$ providing a joint representation of forward rates and log-spreads.
In Section \ref{section - alternative invariance} we will present an alternative approach that allows including the log-spreads explicitly in the state process. 
\end{remark}

\subsection{Example: Hull-White model and modified Nelson-Siegel family}	\label{subsection - HL NS}

For simplicity of presentation, let us consider the case $d=1$ (the results of this section can be extended in a straightforward way to the case of a $d$-dimensional Brownian motion, see \cite[Section 2.3.8]{lanaro2019geometry}). We suppose that, for each $j=0,1,\ldots,m$, the forward rate equation \eqref{eq:SDE_forward} takes the form
\be
\label{Hull White - Musiela parameterization}
dr^j_t(x)=\Big(\textbf{F}r^j_t(x) + \frac{(\sigma^j)^2}{a^j}e^{-a^jx}(1-e^{-a^jx})\Big)dt+\sigma^j e^{-a^jx}dW_t.
\ee
Since the Hull-White forward rate equation has constant volatility, we naturally assume that also the volatilities of the log-spread processes are constant. Hence, the volatility of our Hull-White multi-curve model $\mathcal{M}$ is given by $\hat{\sigma}(x) = (\sigma^0e^{-a^0x},\sigma^1e^{-a^1x}, \ldots, \sigma^me^{-a^mx}, \beta^1, \ldots,\beta^m)$.

We aim at determining a manifold $\G\subseteq\hat{\mathcal{H}}$ such that the conditions of Theorem \ref{invariance-thm} are satisfied. Already in the single-curve setting, it is well-known that the Nelson-Siegel family is inconsistent with the Hull-White model (see \cite[Proposition 5.1]{bjork1999interest}). Therefore, we consider the following {\em modified Nelson-Siegel family}:
\be	\label{Gj augmented}
G_j({z}^j,x) :=z^j_1+z^j_2e^{-a^jx}+z^j_3xe^{-a^jx}+z^j_4e^{-2a^jx},
\ee
denoting $z^j=(z^j_1,z^j_2,z^j_3,z^j_4)\in\R^4$, for each $j=0,1,\ldots,m$.
We then introduce the function
\begin{equation}
\label{structure of G}
G:=(G_0,G_1,\ldots,G_m,G_{m+1},\ldots,G_{2m}),
\end{equation}
where the elements $G_j$, for $j=0,1,\ldots,m$, are given as in \eqref{Gj augmented} and $G_{m+j}$, for $j = 1,\ldots,m$, are suitable real-valued functions to be determined later, corresponding to the log-spread components of the multi-curve model $\mathcal{M}$.
For every $j=0,1,\ldots,m$, letting
\begin{equation}
\label{etaj}
\begin{aligned}
\eta^j(z^j) &:= 
\biggl( 0,	 -a^jz^j_2 + z^j_3 + \frac{(\sigma^j)^2}{a^j}-\beta^j\sigma^j, - a^j z^j_3,  - 2a^j z^j_4 - \frac{(\sigma^j)^2}{a^j}\biggr),\\
\xi^j(z^j) &:= (0,\sigma^j,0,0),
\end{aligned}
\end{equation}
we can easily check that conditions $\partial_{z^j}G(z^j,x)\eta^j(z^j) = \mu^j(x)$ and $\partial_{z^j}G(z^j,x)\xi^j(z^j) = \sigma^je^{-a^jx}$ are satisfied, with $\mu^j(x)$ denoting the drift term of \eqref{Hull White - Musiela parameterization}.
Therefore, the vector fields
\be	\label{eta-xi}
\eta	:= ( \eta^0,\ldots, \eta^m): \R^{4(m+1)} \to \R^{4(m+1)}
\quad\text{ and }\quad
\xi	:= ( \xi^0,\ldots, \xi^m): \R^{4(m+1)} \to \R^{4(m+1)}
\ee
satisfy the first $m+1$ components of the invariance conditions \eqref{invariance condition}. This ensures consistency between the forward rate components of the model and the parameterized manifold $\mathcal{G}=\mathrm{Im}[G]$.

However, we still have to consider the log-spread processes, corresponding to the last $m$ components of the function $G$ in \eqref{structure of G}. To this effect, we can follow two alternative approaches:
\begin{enumerate}
\item[(i)] Exploiting the fact that the log-spread processes are inherently finite-dimensional, we can enlarge the parameter space $\R^{4(m+1)}$ by introducing $m$ additional variables corresponding to the log-spread processes themselves. We therefore consider the enlarged state space $\R^{5m+4}$ and define $G_{m+j}(u^j):=u^j$, for every $u^j\in\R$ and each $j=1,\ldots,m$. With this approach, the conditions of Theorem \ref{invariance-thm} are always satisfied. Indeed, to verify that \eqref{invariance condition} holds, it suffices to take $\xi^{m+j}(z):=\beta^j$ and 
\begin{align*}
\eta^{m+j}(z):= \hat{\mu}^{m+j}(G(z)) &= \mathbf{B}G_0(z^0) - \mathbf{B}G_j(z^j) - \frac{1}{2} (\beta^j)^2	\\
&= z^0_1 + z^0_2 + z^0_4 -  ( z^j_1 + z^j_2 + z^j_4) - \frac{1}{2} (\beta^j)^2,
\end{align*}
for all $z\in\R^{5m+4}$ and $j=1,\ldots,m$.
In this way, we obtain that the parameterized family
\begin{equation}
\label{parameterized family HW - strategy 1}
G({z}):= \bigl(G_0(z^0),G_1(z^1),\ldots, G_m(z^m), u^1,\ldots, u^m\bigr),
\end{equation} 
defined for any ${z}= (z^0,z^1,\ldots, z^m, u^1,\ldots,u^m) \in \R^{5m+4}$, generates a manifold $\mathcal{G}$ that is consistent with the Hull-White multi-curve model under analysis.
\item[(ii)] Instead of enlarging the parameter space $\R^{4(m+1)}$, we can look for conditions that ensure that the last $m$ components of conditions \eqref{invariance condition}, corresponding to the log-spread processes, are automatically satisfied by the vector fields $\eta$ and $\xi$ introduced in \eqref{etaj}-\eqref{eta-xi}, thereby determining implicitly the components $(G_{m+1},\ldots,G_{2m})$ of the function $G$ in \eqref{structure of G}.
\end{enumerate}

Approach (ii) requires the validity of some internal relations among the volatilities of the forward rates and of the log-spread processes, as clarified by the next result.

\begin{proposition}
\label{result HW NS}
Let the manifold $\mathcal{G}$ be given by the image of the function $G:\R^{4(m+1)}\to\hat{\mathcal{H}}$, with $G_j$ given as in \eqref{Gj augmented}, for each $j=0,1,\ldots,m$, and $G_{m+j}$ defined as follows:
\begin{equation}
\label{spread parameterized function - strategy 2}
\begin{aligned}
G_{m+j}(z)	&:= \frac{1}{a^0}\bigg(-z^0_2+\Bigl{(}-z^0_1-\frac{(\sigma^0)^2}{2(a^0)^2}+\frac{1}{2}(\beta^j)^2\Bigr{)}\log{z^0_3} -\frac{z^0_3}{a^0}-\frac{1}{2}z^0_4\bigg)+\\ 
			&\quad +\frac{1}{a^j}\bigg(z^j_2+\Bigl{(}z^j_1+\frac{(\sigma^j)^2}{2(a^j)^2} -\frac{\beta^j\sigma^j}{a^j}\Bigr{)}\log{z^j_3}+\frac{z^j_3}{a^j}+\frac{1}{2}z^j_4\bigg),
\end{aligned}
\end{equation}
for each $j=1,\ldots,m$. If $\beta^j=\sigma^j/a^j-\sigma^0/a^0$, for all $j=1,\ldots,m$, then the manifold $\mathcal{G}$ is consistent with the Hull-White multi-curve model considered in this section.
\end{proposition}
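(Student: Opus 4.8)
The plan is to apply Theorem \ref{invariance-thm}. Since here $d=1$, consistency of $\mathcal{G}=\mathrm{Im}[G]$ with the model is equivalent to the two memberships $\hat{\mu}(G(z))\in\mathrm{Im}[G_z(z)]$ and $\hat{\sigma}(G(z))\in\mathrm{Im}[G_z(z)]$ for every $z\in\R^{4(m+1)}$. The forward-rate part of these conditions is already established in the discussion preceding the statement: the fields $\eta$ and $\xi$ of \eqref{etaj}--\eqref{eta-xi} satisfy $\partial_{z^j}G_j(z^j)\,\eta^j=\mu^j$ and $\partial_{z^j}G_j(z^j)\,\xi^j=\sigma^je^{-a^jx}$, so the first $m+1$ components of $G_z(z)\eta(z)$ and $G_z(z)\xi(z)$ already coincide with those of $\hat{\mu}(G(z))$ and $\hat{\sigma}(G(z))$. (Since each $G_j$ depends only on $z^j$ and the functions $1,e^{-a^jx},xe^{-a^jx},e^{-2a^jx}$ are linearly independent, $\partial_{z^j}G_j$ is injective and $\eta(z),\xi(z)$ are in fact the unique coordinate directions reproducing the forward-rate blocks.) It therefore suffices to show that, with $G_{m+j}$ as in \eqref{spread parameterized function - strategy 2}, the last $m$ components of $G_z(z)\eta(z)$ and $G_z(z)\xi(z)$ equal $\hat{\mu}^{m+j}(G(z))$ and $\beta^j$, respectively.

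I would first pin down the target drift. Because all volatilities $\sigma^j,\beta^j$ are constant, the Fr\'echet derivative $\partial_{\hat{r}}\hat{\sigma}$ vanishes and the Stratonovich correction in \eqref{forward-rate system} disappears; hence the $(m+j)$-th component of $\hat{\mu}$ reduces to the It\^o log-spread drift of Proposition \ref{prop:drift}, namely $\hat{\mu}^{m+j}(G(z))=\mathbf{B}G_0(z^0)-\mathbf{B}G_j(z^j)-\tfrac12(\beta^j)^2$. Evaluating \eqref{Gj augmented} at $x=0$ gives $\mathbf{B}G_i(z^i)=z^i_1+z^i_2+z^i_4$, so $\hat{\mu}^{m+j}(G(z))=z^0_1+z^0_2+z^0_4-(z^j_1+z^j_2+z^j_4)-\tfrac12(\beta^j)^2$.

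The remaining step is to differentiate \eqref{spread parameterized function - strategy 2} and contract with $\eta$ and $\xi$. For the diffusion condition, $\xi^i=(0,\sigma^i,0,0)$ selects only the $z^0_2$- and $z^j_2$-partials, which equal $-1/a^0$ and $1/a^j$; the contraction is thus $\sigma^j/a^j-\sigma^0/a^0$, and this matches $\beta^j$ \emph{exactly} under the hypothesis $\beta^j=\sigma^j/a^j-\sigma^0/a^0$ --- this is where the assumption enters. For the drift condition, contracting $\partial_zG_{m+j}$ with $\eta$ is longer but mechanical: pairing the $\log z^i_3$ terms with $\eta^i_3=-a^iz^i_3$ cancels the logarithms and releases the constant coefficients (including the crucial $-\tfrac12(\beta^j)^2$ and the $z^i_1$ terms), the $z^i_3/a^i$ pieces produced by the $z^i_3$- and $z^i_2$-partials cancel, and the $(\sigma^i)^2/(a^i)^2$ and $\beta^j\sigma^j/a^j$ terms sum to zero. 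What survives is precisely $z^0_1+z^0_2+z^0_4-(z^j_1+z^j_2+z^j_4)-\tfrac12(\beta^j)^2=\hat{\mu}^{m+j}(G(z))$, so the drift condition holds for the given $G_{m+j}$ irrespective of the constraint.

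The main obstacle is conceptual rather than computational: one must satisfy the drift \emph{and} diffusion requirements with the \emph{same} scalar function $G_{m+j}$. The function in \eqref{spread parameterized function - strategy 2} is obtained by integrating the first-order linear transport equation $\partial_zG_{m+j}\cdot\eta=\hat{\mu}^{m+j}\circ G$ along the fixed field $\eta$, which automatically secures the drift condition; the genuinely restrictive requirement is then compatibility with the diffusion condition, and the unique scalar relation making this overdetermined system solvable is exactly $\beta^j=\sigma^j/a^j-\sigma^0/a^0$. Beyond this observation, the only real work is the careful bookkeeping of the cancellations in the $\eta$-contraction described above; assembling the three pieces then yields the full invariance conditions \eqref{invariance condition} and hence, by Theorem \ref{invariance-thm}, consistency.
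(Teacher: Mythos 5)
Your proposal is correct and follows essentially the same route as the paper: invoke Theorem \ref{invariance-thm}, note that the forward-rate components are already handled by $\eta$ and $\xi$ from \eqref{etaj}--\eqref{eta-xi}, reduce the target drift to $\mathbf{B}G_0(z^0)-\mathbf{B}G_j(z^j)-\tfrac12(\beta^j)^2 = z^0_1+z^0_2+z^0_4-(z^j_1+z^j_2+z^j_4)-\tfrac12(\beta^j)^2$, and verify the two log-spread conditions by differentiating \eqref{spread parameterized function - strategy 2}, with the hypothesis $\beta^j=\sigma^j/a^j-\sigma^0/a^0$ entering only in the diffusion condition. Your write-up in fact supplies more of the cancellation bookkeeping than the paper, which defers those computations to a reference.
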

\begin{proof}
By Theorem \ref{invariance-thm}, we must verify that the following two conditions
\begin{align} 
\label{consistency on mu - strategy 2}
\hat{\mu}^{m+j}(G(z))			&= \mathbf{B}G_0(z^0) - \mathbf{B}G_j(z^j) - \frac{1}{2} (\beta^j)^2 = \partial_{z}G_{m+j}(z) \eta(z),\\ 
\label{consistency on sigma - strategy 2}
\hat{\sigma}^{m+j}(G(z))			& = \beta^j  = \partial_{z}G_{m+j}(z) \xi(z),
\end{align}
are satisfied by $\eta$ and $\xi$ given in \eqref{etaj}-\eqref{eta-xi}, for all $j=1,\ldots,m$. Observe first that
\[
\mathbf{B}G_0(z^0) - \mathbf{B}G_j(z^j) - \frac{1}{2} (\beta^j)^2  = z^0_1 + z^0_2 + z^0_4 - z^j_1 - z^j_2- z^j_4 - \frac{1}{2}(\beta^j)^2.
\]
By relying on this identity, differentiating \eqref{spread parameterized function - strategy 2} and making use of \eqref{etaj}, it can be checked that condition \eqref{consistency on mu - strategy 2} holds (we refer to \cite[Section 2.3.5]{lanaro2019geometry} for the detailed computations). Similarly, to check that condition \eqref{consistency on sigma - strategy 2} holds, we differentiate \eqref{spread parameterized function - strategy 2} and make use of the specification $\xi^j=(0,\sigma^j,0,0)$, together with the hypothesis that $\beta^j=\sigma^j/a^j-\sigma^0/a^0$, for all $j=1,\ldots,m$. 
\end{proof}

\begin{remark}
\label{parsimonious strategy}
Approach (i) leads to a parameterized family whose domain is $\R^{5m+4}$, without any further requirement on the model. Approach (ii) leads to a parameterized family whose domain lies in $\R^{4(m+1)}$ and is therefore more parsimonious, but requires the validity of a specific relation. 
It can be shown that the condition $\beta^j=\sigma^j/a^j-\sigma^0/a^0$ implies that the log-spread process $Y^j$ is an affine transformation of the spot rates $(\mathbf{B}r^0,\mathbf{B}r^j)$. If this is the case, a parameterized family that is consistent with the forward rates $(r^0,r^j)$ automatically determines a parameterized family for the log-spread process $Y^j$. This explains why it is not necessary to enlarge the parameter space if $\beta^j=\sigma^j/a^j-\sigma^0/a^0$, for all $j=1,\ldots,m$.
\end{remark}

\section{Finite-dimensional realizations}
\label{section - FDRs}

In this section, we study the existence and the construction of finite-dimensional realizations (FDRs) for a multi-curve interest rate model $\mathcal{M}$ as described in Section \ref{section - mathematical framework}. In Section \ref{subsection - the strategy}, we characterize the existence of FDRs by relying on the geometric approach of \cite{bjork2001existence} and we outline a general procedure for the construction of FDRs. Section \ref{subsection - example constant volatility} illustrates the methodology in the simple case of constant volatility models, while Section \ref{subsection - example constant direction volatility} considers the more complex case of constant direction volatility models. In Section \ref{sec:benchmark}, we show that the state variables of an FDR can be chosen as an arbitrary family of forward rates and log-spreads.

\subsection{Existence and general construction of FDRs} 
\label{subsection - the strategy}

We start by defining the notion of an $n$-dimensional realization for a multi-curve interest rate model $\mathcal{M}$ (see \cite[Definition 3.1]{bjork2001existence}).

\begin{defin}
\label{fdr def}
Model $\mathcal{M}$ has an {\em $n$-dimensional realization} if, for each $\hat{r}_0^M\in\hat{\mathcal{H}}$, there exist a stopping time $\tau(\hat{r}_0^M)>0$ a.s., a point $z_0\in\mathbb{R}^n$, $d+1$ smooth vector fields $a,b_1\ldots,b_d$, defined on a neighborhood $\mathcal{Z}\subseteq\R^n$ of $z_0$, and a function $G:\mathcal{Z}\to\hat{\mathcal{H}}$ satisfying Assumption \ref{assumption - immersion condition}, such that  
\[
\hat{r}_t=G(Z_t)
\qquad\text{ for all }
t\in[0,\tau(\hat{r}_0^M))
\] 
almost surely, where $(Z_t)_{t\geq0}$ is an $n$-dimensional state process given as the strong solution to
\[
dZ_t=a(Z_t)dt+b(Z_t)\circ dW_t,
\qquad
Z_0=z_0.
\]
We say that model $\mathcal{M}$ has a {\em finite-dimensional realization} (FDR) if it has an $n$-dimensional realization, for some $n\in\mathbb{N}$.
\end{defin}

In the above definition, $\hat{r}^M_0\in\hat{\mathcal{H}}$ corresponds to the initially observed term structures of risk-free and risk-sensitive forward rates, together with the vector of log-spreads.
Definition \ref{fdr def} is intimately related to the concept of $\hat{r}$-invariance (see Definition \ref{r invariance def}). Indeed, it is apparent that the existence of an FDR is equivalent to the existence of an $\hat{r}$-invariant parameterized family $G$. Therefore, in view of Theorem \ref{invariance-thm}, given a multi-curve interest rate model $\mathcal{M}$, the existence of an FDR amounts to the existence of a submanifold $\mathcal{G}\subseteq\hat{\mathcal{H}}$ such that $\hat{\mu}(G(z)),\hat{\sigma}(G(z))\in T_{G(z)}\mathcal{G}$, for each $G(z)\in U$, where $U$ is a neighborhood of $\hat{r}^M_0$ and $\hat{r}^M_0\in \mathcal{G}$. In other words, we are looking for the tangential submanifold $\mathcal{G}$ of the distribution $F:=\mathrm{span}\{\hat{\mu},\hat{\sigma}_1,\ldots,\hat{\sigma}_d\}$.

By \cite[Theorem 2.1]{bjork2001existence}, a tangential submanifold for a smooth distribution $F$ exists if and only if $F$ is involutive, i.e., if and only if the Lie bracket between two vector fields in $F$ lies in $F$. We recall that the Lie bracket $[v_1,v_2]$ between two vector fields $v_1$ and $v_2$ on $\hat{\mathcal{H}}$ is defined as
\[
[v_1,v_2](\hat{r}) := \partial_{\hat{r}}v_1(\hat{r}) v_2(\hat{r}) -\partial_{\hat{r}}v_2(\hat{r})v_1(\hat{r}).
\]
\cite[Theorem 2.1]{bjork2001existence} is an infinite-dimensional version of the Frobenius theorem and gives the existence of a tangential submanifold when the distribution $F$ generated by $\hat{\mu}$ and $\hat{\sigma}$ is involutive. However, in general $F$ is not involutive. Therefore, we must consider the smallest involutive distribution that contains $F$. Such distribution is called the Lie algebra of $F$. Since \cite[Theorem 2.1]{bjork2001existence} is an abstract result, it can be immediately applied to our multi-curve framework, leading to the following statement (see  \cite[Theorem B.3.2]{lanaro2019geometry} for additional details).

\begin{theorem}	\label{thm:FDR}
A model $\mathcal{M}$ has an FDR if and only if there exists a finite-dimensional tangential submanifold for $\hat{\mu},\hat{\sigma}_1,\ldots,\hat{\sigma}_d$. In turn, this is equivalent to
\begin{equation}
\label{finite-dimensional realizations condition}
\mathrm{dim}[\mathcal{L}]
<+\infty,
\end{equation}
where $\mathcal{L}:=\{\hat{\mu},\hat{\sigma}_1,\ldots\hat{\sigma}_d\}_{\mathrm{LA}}$ denotes the Lie algebra generated by $\hat{\mu},\hat{\sigma}_1,\ldots,\hat{\sigma}_d$.
\end{theorem}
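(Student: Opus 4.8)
The plan is to establish the two equivalences stated in the theorem separately, drawing on the preparatory discussion preceding the statement together with the abstract Frobenius theorem \cite[Theorem 2.1]{bjork2001existence}. The first equivalence relates the existence of an FDR to the existence of a finite-dimensional tangential submanifold, and is essentially a repackaging of the relations already established; the second equivalence, relating the tangential submanifold to the finiteness of the Lie algebra, is where the substantive geometric content lies.

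For the first equivalence, I would observe that, by Definition \ref{fdr def}, model $\mathcal{M}$ admits an $n$-dimensional realization precisely when there exists an $\hat{r}$-invariant parameterized family $G:\mathcal{Z}\to\hH$, with $\mathcal{Z}\subseteq\R^n$ open and $\hat{r}^M_0\in\G$, in the sense of Definition \ref{r invariance def}. Using the equivalence between $\hat{r}$-invariance and invariance (valid under Assumptions \ref{smoothness coefficients} and \ref{assumption - immersion condition}) and then Theorem \ref{invariance-thm}, such a $G$ exists if and only if its image $\G=\mathrm{Im}[G]$ satisfies $\hat{\mu}(G(z))\in T_{G(z)}\G$ and $\hat{\sigma}_i(G(z))\in T_{G(z)}\G$ for every $z\in\mathcal{Z}$ and every $i=1,\ldots,d$, i.e.\ $\G$ is a finite-dimensional submanifold tangential to the distribution $F:=\mathrm{span}\{\hat{\mu},\hat{\sigma}_1,\ldots,\hat{\sigma}_d\}$ and passing through $\hat{r}^M_0$. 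The only point requiring care is that the submanifold need only be constructed locally around $\hat{r}^M_0$, which is consistent with the local (stopping-time) formulation of Definition \ref{fdr def}.

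For the second equivalence I would argue in both directions. Assume first $\dim[\mathcal{L}]<+\infty$. Since $\mathcal{L}$ is by construction the smallest distribution containing $F$ and closed under the Lie bracket, it is involutive; being finite-dimensional and smooth (smoothness of $\hat{\mu}$ and $\hat{\sigma}$ follows from Assumption \ref{smoothness coefficients}), \cite[Theorem 2.1]{bjork2001existence} produces a tangential submanifold $\G$ for $\mathcal{L}$ through $\hat{r}^M_0$, of dimension $\dim[\mathcal{L}]$. As $F\subseteq\mathcal{L}$, this $\G$ is in particular tangential to $\hat{\mu},\hat{\sigma}_1,\ldots,\hat{\sigma}_d$, so a finite-dimensional tangential submanifold exists. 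Conversely, if a finite-dimensional tangential submanifold $\G$ exists, then $\hat{\mu}$ and the $\hat{\sigma}_i$ are sections of $T\G$; invoking the standard fact that the family of vector fields tangent to a submanifold is closed under the Lie bracket, every iterated bracket of $\hat{\mu},\hat{\sigma}_1,\ldots,\hat{\sigma}_d$ is again a section of $T\G$, so $\mathcal{L}$ is pointwise contained in the finite-dimensional spaces $T_{G(z)}\G$. Hence $\dim[\mathcal{L}]\leq\dim\G<+\infty$.

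I expect that the main obstacle is not the formal algebra of these two equivalences, but the verification that \cite[Theorem 2.1]{bjork2001existence} genuinely applies in the infinite-dimensional Hilbert space $\hH$. Concretely, I would have to confirm that $\mathcal{L}$ defines a \emph{smooth} distribution of \emph{constant} finite dimension in a neighborhood of $\hat{r}^M_0$, so that it admits a local smooth frame to which the abstract Frobenius theorem can be applied, and that the resulting integral submanifold $\G$ inherits the immersion and injectivity properties required by Assumption \ref{assumption - immersion condition}. These constant-rank and regularity conditions are precisely the hypotheses of \cite{bjork2001existence} that must be checked in the present setting; once they are in place, the abstract result being formulated at a level of generality that does not depend on the single-curve structure, the statement follows directly.
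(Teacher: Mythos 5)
Your proposal is correct and follows essentially the same route as the paper: the paper likewise reduces the first equivalence to the chain FDR $\Leftrightarrow$ $\hat{r}$-invariance $\Leftrightarrow$ tangentiality of $\mathrm{span}\{\hat{\mu},\hat{\sigma}_1,\ldots,\hat{\sigma}_d\}$ via Theorem \ref{invariance-thm}, and obtains the second by passing to the smallest involutive distribution $\mathcal{L}$ and invoking the abstract infinite-dimensional Frobenius theorem of \cite[Theorem 2.1]{bjork2001existence}, which the paper applies directly (deferring regularity details to \cite[Theorem B.3.2]{lanaro2019geometry}). The regularity caveat you flag (smoothness and locally constant finite rank of $\mathcal{L}$ near $\hat{r}^M_0$) is precisely the point the paper delegates to that reference rather than verifying in the text.
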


As long as condition \eqref{finite-dimensional realizations condition} is satisfied, an FDR can be constructed. To this end, we outline the general construction strategy provided in  \cite{bjork2002construction}, proceeding along the following three steps:
\begin{enumerate}
\item[I.] choose a finite number of independent vector fields $\xi_1,\ldots\xi_n$  which span $\{\hat{\mu},\hat{\sigma}_1,\ldots\hat{\sigma}_d\}_{\mathrm{LA}}$;
\item[II.] compute the invariant manifold 
\begin{equation}
\label{parameterized function G - finite-dimensional dimension}
G(z_1,\dots,z_n)=e^{\xi_nz_n}\cdots e^{\xi_1z_1}\hat{r}^M_0,
\end{equation}
where $e^{\xi_nz_n}$ denotes the integral curve of the vector field $\xi_n$ passing through $z_n$;
\item[III.] define the state process $(Z_t)_{t\geq0}$ taking values in $\R^n$ such that $\hat{r}=G(Z)$, making the following ansatz for the Stratonovich dynamics of the process $(Z_t)_{t\geq0}$:
\begin{equation}
\label{state-space process}
dZ_t=a(Z_t)dt+b(Z_t)\circ dW_t,
\end{equation}
where the vector fields $a$ and $b$ are determined by the following conditions:
\begin{equation}
\label{conditions on the vector field a and b}
\begin{aligned}
\partial_zG(z)a(z)&=\hat{\mu}(G(z)),\\ 
\partial_zG(z)b_i(z)&=\hat{\sigma}_i(G(z)),\qquad\text{for } i = 1,\ldots,d.
\end{aligned}
\end{equation}
The uniqueness of the vector fields $a$ and $b_i$, for $i=1,\ldots,d$, satisfying \eqref{conditions on the vector field a and b} is guaranteed since $G$ satisfies Assumption \ref{assumption - immersion condition}, hence it is a local diffeomorphism.
\end{enumerate}

The above methodology will be illustrated in Sections \ref{subsection - example constant volatility} and \ref{subsection - example constant direction volatility} in the case of constant volatility models and constant direction volatility models, respectively. In both cases, we will make use of the FX analogy discussed in Remark \ref{rem:FX}, which  enables us to adapt some arguments first used in \cite{slinko2010finite} in the context of an HJM model comprising a domestic and a foreign economy.

\subsection{Constant volatility models} 
\label{subsection - example constant volatility}

Let us first consider the case where the volatility $\hat{\sigma}(\hat{r})$ does not depend on $\hat{r}$, so that $\sigma^0,\sigma^1,\ldots,\sigma^m$ are constants in $\mathcal{H}^d$, while $\beta^1,\ldots,\beta^m$ are elements of $\mathbb{R}^d$ (where $d$ is the dimension of the driving Brownian motion). 
In this case, it holds that
\begin{equation}
\label{constant coefficients}
\hat{\mu}(\hat{r})=
\begin{pmatrix}
\textbf{F}r^0+\sigma^0\cdot\textbf{H}\sigma^0\\ 
\textbf{F}r^1+\sigma^1\cdot\textbf{H}\sigma^1-\beta^1\cdot\sigma^1\\ 
\vdots\\ 
\textbf{F}r^m+\sigma^m\cdot\textbf{H}\sigma^m-\beta^m\cdot\sigma^m\\ 
\textbf{B}r^0-\textbf{B}r^1-\frac{1}{2}||\beta^1||^2\\ 
\vdots\\ 
\textbf{B}r^0-\textbf{B}r^{m}-\frac{1}{2}||\beta^m||^2
\end{pmatrix}
\qquad\text{ and }\qquad
\hat{\sigma}(\hat{r})=
\begin{pmatrix}
\sigma^0\\ 
\sigma^1\\ 
\vdots\\ 
\sigma^m\\ 
\beta^1\\ 
\vdots\\ 
\beta^m
\end{pmatrix}.
\end{equation}

To determine the Lie algebra of $\mathrm{span}\{\hat{\mu},\hat{\sigma}_1,\ldots,\hat{\sigma}_d\}$, we compute the successive Lie brackets $[\hat{\mu},\hat{\sigma}](\hat{r})$ between $\hat{\mu}$ and $\hat{\sigma}$. This can be easily done by observing that $\partial_{\hat{r}}\hat{\sigma}(\hat{r}) = 0$, while 
\begin{equation}
\label{drift derivative constant}
\partial_{\hat{r}}\hat{\mu}(\hat{r})=
\begin{pmatrix}
\textbf{F}		&	0		&	0		&	\cdots	&	0		&	0		&	\cdots	&0\\ 
0			&	\textbf{F}	&	0		&	\cdots	&	0		&	0		&	\cdots	&0\\ 
0			&	0		&	\textbf{F}	&	\cdots	&	0		&	0		&	\cdots	&0\\ 
\vdots		&	\vdots	&	\vdots	&			&	\vdots	&	\vdots	& 			&\vdots\\ 
0			&	0		&	0		&	\cdots	&	\textbf{F}	&	0		&	\cdots	&0\\ 
\textbf{B}		&	-\textbf{B}	&	0		&	\cdots	&	0		&	0		&	\cdots	&0\\ 
\textbf{B}		&	0		&	-\textbf{B}	&	\cdots	&	0		&	0		&	\cdots	&0\\ 
\vdots		&	\vdots	&	\vdots	&			&	\vdots	&	\vdots	&			&\vdots\\ 
\textbf{B}		&	0		&	0		&	\cdots	&	-\textbf{B}	&	0		&	\cdots	&0
\end{pmatrix}.
\end{equation}
Therefore, the Lie bracket of $\hat{\mu}$ and $\hat{\sigma}_i$, for each $i=1,\ldots,d$, is given by
\begin{displaymath}
[\hat{\mu},\hat{\sigma}_i](\hat{r})=\partial_{\hat{r}}\hat{\mu}(\hat{r})\hat{\sigma}_i(\hat{r})-\overbrace{\partial_{\hat{r}}\hat{\sigma}_i(\hat{r})}^{=0}\hat{\mu}(\hat{r})=\begin{pmatrix}
\textbf{F}\sigma^0_i\\ 
\textbf{F}\sigma^1_i\\ 
\vdots\\ 
\textbf{F}\sigma^m_i\\ 
\textbf{B}\sigma^0_i-\textbf{B}\sigma^1_i\\ 
\vdots\\ 
\textbf{B}\sigma^0_i-\textbf{B}\sigma^m_i
\end{pmatrix},
\end{displaymath}
which is constant on $\hat{\mathcal{H}}$. 
Hence, the only vector field in $\mathcal{L}=\{\hat{\mu},\hat{\sigma}_1,\ldots,\hat{\sigma}_d\}_{\mathrm{LA}}$ that is not constant is $\hat{\mu}$. Therefore, to determine explicitly $\mathcal{L}$ it suffices to compute the Lie bracket between $\hat{\mu}$ and the successive Lie bracket between $\hat{\mu}$ and $\hat{\sigma}_i$. 
These arguments lead to 
\begin{equation}
\label{Lie algebra with constant vol}
\mathcal{L}
= \mathrm{span}\bigl\{\hat{\mu},\hat{\sigma}_1,\dots,\hat{\sigma}_d,\nu_i^k ; k\in\mathbb{N}^*, i=1,\dots,d\bigr\},
\end{equation}
where (see \cite[Section 3.2]{lanaro2019geometry} for full details)
\[
\nu_i^k := \bigl(	\textbf{F}^k\sigma_i^0,	\textbf{F}^k\sigma_i^1,	\ldots,	\textbf{F}^k\sigma_i^m,	\textbf{B}\textbf{F}^{k-1}\sigma_i^0-\textbf{B}\textbf{F}^{k-1}\sigma_i^1,			\ldots,	\textbf{B}\textbf{F}^{k-1}\sigma_i^0-\textbf{B}\textbf{F}^{k-1}\sigma_i^m
\bigr)^{\top}.
\]

To state a necessary and sufficient condition for the validity of \eqref{finite-dimensional realizations condition} in constant volatility models, we need to recall the concept of quasi-exponential function (see \cite[Definition 2.2]{bjork2004geometry}).

\begin{defin}
A function $f$ is said to be {\em quasi-exponential} (QE) if it is of the form
\begin{displaymath}
f(x)=\sum_ie^{\gamma_ix}+\sum_je^{\alpha_jx}\bigl(p_j(x)\cos(\omega_jx)+q_j(x)\sin(\omega_jx)\bigr),
\end{displaymath}
where $\gamma_i$, $\alpha_j$, $\omega_j$ are real numbers and $p_i$, $q_j$ real polynomials.
\end{defin}

We recall that QE functions can be characterised as follows (see, e.g.,   \cite[Lemma 2.1]{bjork2004geometry}).

\begin{lemma}
\label{QE functions characterization}
A function $f$ is $QE$ if and only if it is a component of the solution of a vector-valued linear ODE with constant coefficients, i.e., it holds that $f^{(n)}=\sum_{i=0}^{n-1}\gamma_i f^{(i)}$, where $f^{(i)}$ denotes the i-th order derivative of $f$.
\end{lemma}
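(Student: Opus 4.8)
The plan is to prove both implications using the classical structure theory of linear constant-coefficient ordinary differential equations, phrased through the differentiation operator. Write $Df := f'$ and, for a polynomial $P(\lambda) = \lambda^n - \sum_{i=0}^{n-1}\gamma_i\lambda^i$, let $P(D)$ denote the associated differential operator; the ODE in the statement is then exactly $P(D)f = 0$. The proof then splits into the two directions of the equivalence.

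For the implication that every solution of such an ODE is quasi-exponential, I would invoke the standard fact that the solution space of $P(D)f = 0$ is the $n$-dimensional vector space spanned by the functions $x^k e^{\lambda x}$, where $\lambda$ ranges over the (complex) roots of $P$ and $k$ ranges from $0$ to one less than the multiplicity of $\lambda$. Since $P$ has real coefficients, its non-real roots occur in conjugate pairs $\alpha_j \pm i\omega_j$, and passing to real and imaginary parts replaces $x^k e^{(\alpha_j \pm i\omega_j)x}$ by $x^k e^{\alpha_j x}\cos(\omega_j x)$ and $x^k e^{\alpha_j x}\sin(\omega_j x)$. Any solution $f$ is a real linear combination of these real basis functions, which is precisely the quasi-exponential form, the polynomials $p_j, q_j$ arising by collecting the coefficients of the various powers $x^k$.

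For the converse, that every quasi-exponential $f$ satisfies an ODE of the stated type, I would show that the linear span $V := \mathrm{span}\{f, f', f'', \ldots\}$ of $f$ and all its derivatives is finite-dimensional. Differentiating a term $x^k e^{\gamma x}$ keeps it within the span of $\{x^\ell e^{\gamma x} : 0 \le \ell \le k\}$, and similarly $x^k e^{\alpha x}\cos(\omega x)$ and $x^k e^{\alpha x}\sin(\omega x)$ remain in the span of the analogous lower-degree building blocks; hence $V$ is contained in the finite-dimensional space generated by all the monomial-times-exponential terms appearing in $f$. Since $V$ is finite-dimensional and invariant under $D$, the functions $f, Df, \ldots, D^n f$ are linearly dependent for $n = \dim V$, so the top derivative is a linear combination of the lower ones, yielding the ODE. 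Equivalently, one exhibits the annihilator directly as $P(D) = \prod_i (D-\gamma_i)\prod_j\bigl((D-\alpha_j)^2+\omega_j^2\bigr)^{d_j+1}$, where $d_j$ bounds the degrees of $p_j, q_j$, and verifies $P(D)f = 0$ term by term.

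\emph{Main obstacle.} The result is entirely classical and poses no genuine analytic difficulty; the only care required is the bookkeeping between the real quasi-exponential form and the complex eigenfunction basis, namely correctly pairing conjugate roots and matching polynomial degrees to root multiplicities. For this reason I would favour the explicit operator-factorization argument in the converse direction, since it makes the degree-versus-multiplicity correspondence most transparent.
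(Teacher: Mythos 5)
Your proposal is correct, but there is nothing in the paper to compare it against: the paper does not prove this lemma at all, stating it as a recalled classical fact with a pointer to \cite[Lemma 2.1]{bjork2004geometry}. Your argument is precisely the standard structure theory that underlies that reference: in one direction, the solution space of $P(D)f=0$ is spanned by the functions $x^k e^{\lambda x}$ as $\lambda$ ranges over the roots of $P$ counted with multiplicity, and realification of conjugate pairs produces exactly the quasi-exponential building blocks; in the converse direction, the span of $f$ and all its derivatives is a finite-dimensional $D$-invariant space, which forces a constant-coefficient linear relation among derivatives. One small piece of bookkeeping in that converse as you wrote it: linear dependence of $f, Df, \ldots, D^n f$ does not by itself put a nonzero coefficient on the top derivative, so you should instead take $k$ minimal such that $f, Df, \ldots, D^k f$ are dependent; independence of the first $k$ of these then forces the coefficient of $D^k f$ in the relation to be nonzero, yielding $D^k f = \sum_{i<k}\gamma_i D^i f$. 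Your alternative explicit annihilator $P(D) = \prod_i (D-\gamma_i)\prod_j \bigl((D-\alpha_j)^2+\omega_j^2\bigr)^{d_j+1}$ sidesteps this entirely, since the factors commute and each factor kills the corresponding summand of the quasi-exponential form; that version is airtight as written and is the one worth keeping.
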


\subsubsection{Existence of FDRs}
\label{subsubsection - existence of fdr constant}

In view of the FX analogy discussed in Remark \ref{rem:FX}, a straightforward adaptation of \cite[Proposition 3.2]{slinko2010finite}   yields the following characterization of the existence of FDRs in the constant volatility case (see also \cite[Theorem 3.2.3]{lanaro2019geometry} for a detailed proof).

\begin{proposition}
\label{characterization FDRs constant vol}
A multi-curve interest rate model $\mathcal{M}$ with constant volatility admits an FDR if and only if the function $\sigma_i^j$ is QE, for every $i=1,\ldots,d$ and $j=0,1,\ldots,m$. 
\end{proposition}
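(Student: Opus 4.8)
The plan is to use Theorem~\ref{thm:FDR}, which reduces the existence of an FDR to the finiteness of $\dim[\mathcal{L}]$, where $\mathcal{L}$ is the Lie algebra generated by $\hat\mu,\hat\sigma_1,\ldots,\hat\sigma_d$. Since the Lie algebra has already been computed explicitly in \eqref{Lie algebra with constant vol}, namely
\[
\mathcal{L}=\mathrm{span}\bigl\{\hat\mu,\hat\sigma_1,\ldots,\hat\sigma_d,\nu_i^k;\,k\in\mathbb{N}^*,\,i=1,\ldots,d\bigr\},
\]
the whole problem collapses to determining when this span is finite-dimensional. The vector fields $\hat\mu,\hat\sigma_1,\ldots,\hat\sigma_d$ contribute only finitely many dimensions, so $\dim[\mathcal{L}]<+\infty$ if and only if the family $\{\nu_i^k:k\in\mathbb{N}^*,\,i=1,\ldots,d\}$ spans a finite-dimensional space. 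My first step is therefore to reformulate the finiteness of $\mathrm{span}\{\nu_i^k\}$ purely in terms of the functions $\mathbf{F}^k\sigma_i^j=\partial_x^k\sigma_i^j$.

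The key observation is that each $\nu_i^k$ is built componentwise from the derivatives $\mathbf{F}^k\sigma_i^j$ of the volatility functions (the forward-rate components) together with the boundary evaluations $\mathbf{B}\mathbf{F}^{k-1}\sigma_i^j$ (the log-spread components). Since the boundary-evaluation components are scalars determined by the forward-rate components, the crucial requirement is that, for each fixed $i$, the collection of curves $\{(\mathbf{F}^k\sigma_i^0,\ldots,\mathbf{F}^k\sigma_i^m):k\in\mathbb{N}^*\}$ spans a finite-dimensional subspace of $\mathcal{H}^{m+1}$. Because differentiation acts diagonally across the $m+1$ forward-rate blocks, this holds exactly when, for each $j=0,1,\ldots,m$, the successive derivatives $\{\partial_x^k\sigma_i^j:k\in\mathbb{N}^*\}$ span a finite-dimensional space in $\mathcal{H}$. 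The next step is thus to invoke Lemma~\ref{QE functions characterization}: the set of all derivatives of a scalar function $f$ spans a finite-dimensional space if and only if $f$ satisfies a constant-coefficient linear ODE $f^{(n)}=\sum_{i=0}^{n-1}\gamma_i f^{(i)}$, which by Lemma~\ref{QE functions characterization} is equivalent to $f$ being QE.

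Assembling these pieces gives both directions. For sufficiency, if every $\sigma_i^j$ is QE, then for each $i,j$ the derivatives $\{\partial_x^k\sigma_i^j\}$ lie in a finite-dimensional space; taking the (finite) union over $j=0,\ldots,m$ and $i=1,\ldots,d$, together with the finitely many generators $\hat\mu,\hat\sigma_1,\ldots,\hat\sigma_d$, shows $\dim[\mathcal{L}]<+\infty$, so Theorem~\ref{thm:FDR} yields an FDR. For necessity, if $\dim[\mathcal{L}]<+\infty$ then $\mathrm{span}\{\nu_i^k\}$ is finite-dimensional; projecting onto the $j$-th forward-rate component shows that $\{\partial_x^k\sigma_i^j:k\in\mathbb{N}^*\}$ spans a finite-dimensional space for each fixed $i,j$, whence $\sigma_i^j$ satisfies a constant-coefficient linear ODE and is therefore QE by Lemma~\ref{QE functions characterization}.

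The step I expect to require the most care is the necessity direction, specifically justifying that finiteness of $\mathrm{span}\{\nu_i^k\}$ in the product space $\hat{\mathcal{H}}$ forces finiteness of the span of $\{\partial_x^k\sigma_i^j\}$ in each individual component $\mathcal{H}$. This is essentially a claim that finite-dimensionality is preserved under the coordinate projections $\hat{\mathcal{H}}=\mathcal{H}^{m+1}\times\R^m\to\mathcal{H}$, which is immediate for linear projections, but one must check that the index shift between the forward-rate entries ($\mathbf{F}^k$) and the log-spread entries ($\mathbf{B}\mathbf{F}^{k-1}$) does not create a degeneracy that hides an infinite-dimensional component; the diagonal action of $\mathbf{F}$ across the forward-rate blocks makes this transparent once the projection argument is set up. Everything else is the direct application of the FX-analogy adaptation of \cite[Proposition~3.2]{slinko2010finite}, and the computational details can be referred to \cite[Theorem~3.2.3]{lanaro2019geometry}.
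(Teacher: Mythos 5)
Your proposal is correct and follows essentially the same route as the paper: Theorem \ref{thm:FDR} plus the explicit Lie algebra computation \eqref{Lie algebra with constant vol}, reducing everything to finiteness of $\mathrm{span}\{\nu_i^k\}$, which the coordinate-projection argument (forward-rate blocks versus the finite-dimensional $\R^m$ log-spread block) and Lemma \ref{QE functions characterization} convert into the QE condition — this is precisely the adaptation of \cite[Proposition 3.2]{slinko2010finite} that the paper invokes and defers to \cite[Theorem 3.2.3]{lanaro2019geometry}. The worry you flag about the index shift $\mathbf{B}\mathbf{F}^{k-1}$ versus $\mathbf{F}^k$ is indeed harmless, exactly because the log-spread components live in $\R^m$ and so can neither create nor hide infinite-dimensionality.
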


If $\sigma_i^j$ is QE, for all $i=1,\ldots,d$, $j=0,1,\ldots,m$,
the Lie algebra $\mathcal{L}=\{\hat{\mu},\hat{\sigma}_1,\ldots,\hat{\sigma}_d\}_{\mathrm{LA}}$ satisfies
\begin{equation}
\label{dimension of L}
\text{dim}[\mathcal{L}]\leq 1+\sum_{i=1}^d(1+n_i)=:n,
\end{equation}
where $n_i:= \mathrm{dim}[\mathrm{span}\{\nu_i^k; k\in\mathbb{N}^*\}]$, for each $i = 1,\dots,d$. 
As a consequence of Lemma \ref{QE functions characterization}, for each $i=1,\ldots,d$, there exists an annihilator polynomial 
\begin{equation}
\label{polynomial annihilator}
M_i(\textbf{F}):= \sum_{h = 0}^{n_i} \alpha^h_i \textbf{F}^h
\end{equation} 
such that $M_i(\textbf{F})\sigma^j_i = 0$ for all $j = 0,1,\ldots,m$. By \eqref{Lie algebra with constant vol}, the tangential manifold of dimension $n$ is obtained by the composition of the integral curves of $\hat{\mu}$, $\hat{\sigma}_i$, $\nu^k_i$, for  $k = 1,\ldots,n_i$ and $i = 1,\ldots,d$.

In the following, we will use this notation for a generic element $z$ of the state space $\R^{n}$:
\be\label{notation}
z:= (z^0,z^0_1,\ldots,z^{n_1}_1,z^0_2,\ldots,z^{n_2}_2,\ldots,z^0_d,\ldots,z^{n_d}_d)\in\R^{n}.
\ee

\subsubsection{Construction of FDRs}

In order to construct explicitly an FDR, we apply the three-step procedure outlined at the end of Section \ref{subsection - the strategy}. In view of Theorem \ref{characterization FDRs constant vol}, it suffices to compute the integral curve of every vector field generating the Lie algebra given in \eqref{Lie algebra with constant vol}. Then, in step II, we compose the integral curves, thus obtaining the tangential manifold. Finally, in step III, we invert the consistency condition and obtain the following result, which generalizes to the multi-curve setting \cite[Proposition 5.2]{bjork2001existence} and \cite[Proposition 3.5]{slinko2010finite} (see \cite[Section 3.2.1]{lanaro2019geometry} for a detailed proof). 
We make use of the notation $S^j(x) := (\int_0^x \sigma^j_i(s) ds)_{i = 1,\dots,d}$, for $j = 0,1,\ldots,m$.

\begin{proposition}
Suppose that a multi-curve interest rate model $\mathcal{M}$ with constant volatility admits an FDR. In this case, the invariant manifold generated by $\hat{r}^M_0$ is parameterized as follows:
\begin{align*}
G^j(z,x)		&=	r^M_j(x+z^0)+\sum_{i=1}^d\sum_{k=0}^{n_i}\mathbf{F}^k\sigma^j_i(x)z^k_i+\frac{1}{2}\bigl(||S^j(x+z^0)||^2-||S^j(x)||^2\bigr)\\ 
				&\quad-(1-\delta^j_0)\sum_{i = 1}^d\beta^j_i\int_x^{x+z^0} \sigma^j_i(s)ds,\quad j = 0,1,\ldots,m;\\ 
G^{m+j}(z)	&=	y^M_{j} + \int_0^{z^0}\bigl(r^M_0(s)-r^M_j(s)\bigr)ds + \sum_{i=1}^d\sum_{k=1}^{n_i}(\mathbf{B}\mathbf{F}^{k-1}\sigma^0_i-\mathbf{B}\mathbf{F}^{k-1}\sigma^{j}_i)z^k_i+\sum_{i=1}^d\beta^j_iz^0_i\\ 
				&\quad +\frac{1}{2}\int_0^{z^0}\bigl(||S^0(s)||^2-||S^{j}(s)||^2\bigr)ds+\sum_{i = 1}^d\beta^j_i\int_0^{z^0}S^{j}_i(s)ds-\frac{1}{2}||\beta^{j}||^2z^0,\quad j = 1,\dots,m,
\end{align*}
where $\delta^j_0$ denotes the Kronecker delta between indexes $0$ and $j$.
In addition, the coefficients of the dynamics of the state process $(Z_t)_{t\geq0}$ in \eqref{state-space process} are given by
\begin{equation}
\label{constant vol - Z coefficients}
\begin{cases}
a^0=1,\\ 
a^0_i=0,							&\quad i=1,\dots,d,\\ 
a^k_i=z^{k-1}_i+z^{n_i}_i\alpha^k_i,		&\quad k=1,\dots,n_i,\  i=1,\dots,d,\\ 
b^0_i=0,\\ 
b_{i,i}^0=1,\\ 
b^k_{h,i}=0,						&\quad  h=1,\dots,d,\ h\neq i,\ k=1,\dots,n_i.
\end{cases}
\end{equation}
according to the notation introduced in \eqref{notation}. In \eqref{constant vol - Z coefficients}, the term $\alpha^k_i$ denotes the $k$-th coefficient of the annihilator polynomial $M_i$ given in \eqref{polynomial annihilator}, for $k=1,\dots,n_i$ and $i = 1,\dots,d$. 
\end{proposition}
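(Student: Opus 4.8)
The plan is to carry out the three-step construction described at the end of Section~\ref{subsection - the strategy}, using the explicit description \eqref{Lie algebra with constant vol} of the Lie algebra $\mathcal{L}$. For step~I, I would take as spanning vector fields the drift $\hat\mu$ together with the constant fields $\hat\sigma_1,\dots,\hat\sigma_d$ and $\nu_i^k$ for $k=1,\dots,n_i$ and $i=1,\dots,d$; by \eqref{Lie algebra with constant vol} and \eqref{dimension of L} these span $\mathcal{L}$ and, being $n$ in number, can be chosen independent, so that the resulting $G$ satisfies Assumption~\ref{assumption - immersion condition}. The essential point is to order the composition in \eqref{parameterized function G - finite-dimensional dimension} so that the flow of $\hat\mu$ is applied first to $\hat r^M_0$ and the constant fields afterwards. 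Since each constant field $\xi$ has the affine flow $e^{\xi z}\hat r=\hat r+z\xi$ and such flows mutually commute, this produces
\[
G(z)=e^{\hat\mu z^0}\hat r^M_0+V(z'),\qquad V(z'):=\sum_{i=1}^d z^0_i\hat\sigma_i+\sum_{i=1}^d\sum_{k=1}^{n_i}z^k_i\nu_i^k,
\]
with $z$ decomposed as in \eqref{notation}. This structure explains why the contributions of $\hat\sigma_i$ and $\nu_i^k$ appear \emph{unshifted} in $x$ in the final formulas.

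The core computation is the integral curve $e^{\hat\mu z^0}\hat r^M_0$, obtained by solving $\partial_{z^0}u=\hat\mu(u)$ with $u(0)=\hat r^M_0$ and $\hat\mu$ given by \eqref{constant coefficients}. Because $\mathbf{F}=\partial_x$ generates the shift semigroup on $\mathcal{H}$, the forward-rate blocks solve an inhomogeneous transport equation whose solution, by the method of characteristics, shifts the datum to $r^M_j(x+z^0)$ and accumulates the source $\sigma^j\cdot\mathbf{H}\sigma^j-(1-\delta^j_0)\beta^j\cdot\sigma^j$ along the characteristic. Recognizing $\sigma^j(v)\cdot S^j(v)=\tfrac12\frac{d}{dv}\|S^j(v)\|^2$ turns this accumulated term into $\tfrac12(\|S^j(x+z^0)\|^2-\|S^j(x)\|^2)$ together with the $\beta^j$-integral, yielding $G^j$. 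For the log-spread blocks, the drift $\mathbf{B}r^0-\mathbf{B}r^j-\tfrac12\|\beta^j\|^2$ is integrated in $z^0$ after substituting the already-computed boundary values $\mathbf{B}u^0(s),\mathbf{B}u^j(s)$, which reproduces the remaining terms of $G^{m+j}$. Adding $V(z')$ then contributes exactly $\sum_i\beta^j_i z^0_i$ and $\sum_{i,k}(\mathbf{B}\mathbf{F}^{k-1}\sigma^0_i-\mathbf{B}\mathbf{F}^{k-1}\sigma^j_i)z^k_i$ through the spread parts of $\hat\sigma_i$ and $\nu_i^k$, completing the stated parameterization.

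For step~III I would determine $a$ and $b_i$ from \eqref{conditions on the vector field a and b}. Since $\hat\mu$ is affine with constant linear part $L:=\partial_{\hat r}\hat\mu$ from \eqref{drift derivative constant}, differentiating $G(z)=e^{\hat\mu z^0}\hat r^M_0+V(z')$ gives $\partial_{z^0}G=\hat\mu(G)-LV$, $\partial_{z^0_i}G=\hat\sigma_i$ and $\partial_{z^k_i}G=\nu_i^k$. Using the identities $L\hat\sigma_i=[\hat\mu,\hat\sigma_i]=\nu_i^1$ and $L\nu_i^k=\nu_i^{k+1}$, the first equation in \eqref{conditions on the vector field a and b} with the choice $a^0=1$ reduces to matching $\sum_i a^0_i\hat\sigma_i+\sum_{i,k}a^k_i\nu_i^k$ against $LV=\sum_i z^0_i\nu_i^1+\sum_{i,k}z^k_i\nu_i^{k+1}$ in the independent basis $\{\hat\sigma_i,\nu_i^k\}$. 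The term $\nu_i^{n_i+1}$ is re-expressed through the closure relation of $\mathcal{L}$, derived by applying the annihilator $M_i(\mathbf{F})$ of \eqref{polynomial annihilator} to the forward components $\mathbf{F}^{k}\sigma^j_i$ and to the boundary components $\mathbf{B}\mathbf{F}^{k-1}(\sigma^0_i-\sigma^j_i)$ simultaneously; matching coefficients then yields $a^0_i=0$ and $a^k_i=z^{k-1}_i+z^{n_i}_i\alpha^k_i$ as in \eqref{constant vol - Z coefficients}. The second equation is immediate: since $\hat\sigma_i=\partial_{z^0_i}G$, the unique field $b_i$ is the coordinate direction $z^0_i$, giving $b^0_{i,i}=1$ and all other components zero.

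The step I expect to be the main obstacle is establishing the closure relation $\nu_i^{n_i+1}\in\mathrm{span}\{\nu_i^1,\dots,\nu_i^{n_i}\}$ with the precise coefficients governed by $M_i$. The delicate point is that the forward blocks of the $\nu_i^k$ have the form $\mathbf{F}^k\sigma^j_i$ whereas the log-spread blocks have the structurally different form $\mathbf{B}\mathbf{F}^{k-1}(\sigma^0_i-\sigma^j_i)$, so one must verify that the single scalar relation $M_i(\mathbf{F})\sigma^j_i=0$ forces the dependency in \emph{both} blocks at once; this is exactly what guarantees that the $\alpha^k_i$ entering $a^k_i$ are the annihilator coefficients of \eqref{polynomial annihilator} and that no $\hat\sigma_i$-component is generated (whence $a^0_i=0$). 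The transport-equation computation of $e^{\hat\mu z^0}\hat r^M_0$ is the other substantial calculation, but it is routine once the shift semigroup and the identity $\sigma^j\cdot S^j=\tfrac12(\|S^j\|^2)'$ are exploited.
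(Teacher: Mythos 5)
Your proposal is correct and follows essentially the same route as the paper: it instantiates the three-step procedure of Section \ref{subsection - the strategy}, taking $\hat\mu,\hat\sigma_1,\ldots,\hat\sigma_d,\nu_i^k$ as generators of $\mathcal{L}$, computing the integral curve of the affine field $\hat\mu$ (a transport equation solved by characteristics, with the identity $\sigma^j\cdot S^j=\tfrac12(\|S^j\|^2)'$ producing the quadratic terms) composed with the translation flows of the constant fields, and then inverting the consistency conditions \eqref{conditions on the vector field a and b} via the relations $L\hat\sigma_i=\nu_i^1$, $L\nu_i^k=\nu_i^{k+1}$ and the annihilator-induced closure $\nu_i^{n_i+1}\in\mathrm{span}\{\nu_i^1,\ldots,\nu_i^{n_i}\}$. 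Your observation that $M_i(\mathbf{F})\sigma_i^j=0$ forces the same linear dependency simultaneously in the forward blocks $\mathbf{F}^k\sigma_i^j$ and in the boundary blocks $\mathbf{B}\mathbf{F}^{k-1}(\sigma_i^0-\sigma_i^j)$ (since $\mathbf{F}$ and $\mathbf{B}$ applied to the identically vanishing function $M_i(\mathbf{F})\sigma_i^j$ still give zero) is exactly the point that makes the multi-curve extension work, and your ordering of the flows with $e^{\hat\mu z^0}$ applied first correctly reproduces the unshifted terms $\mathbf{F}^k\sigma_i^j(x)z_i^k$ in the stated parameterization.
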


\subsection{Constant direction volatility models}
\label{subsection - example constant direction volatility}

In this section, we consider a more general class of multi-curve interest rate models with non-constant volatility, exploiting the analogy with the two-economy HJM setup analysed in \cite{slinko2010finite} (see Remark \ref{rem:FX}). We aim at providing conditions ensuring the existence of FDRs for a model $\mathcal{M}$ determined by a volatility of the following form:
\begin{equation}
\label{constant direction volatility term}
\hat{\sigma}_i(\hat{r})(x)=
\bigl(
\varphi^0_i(\hat{r})\lambda_i^0(x),
\varphi^1_i(\hat{r})\lambda_i^1(x),
\ldots,
\varphi^m_i(\hat{r})\lambda^m_i(x),
\beta^1_i(\hat{r}),
\cdots,
\beta^m_i(\hat{r})
\bigr),
\end{equation}
for $i=1,\ldots,d$, where $\lambda^j_i\in\mathcal{H}$ and $\varphi^j_i$ and $\beta^j_i$ are smooth (in the Fr\'echet sense) scalar vector fields on $\hat{\mathcal{H}}$, for every $i=1,\ldots,d$ and $j=0,1,\ldots,m$. 
We introduce the following assumption.

\begin{assumption}
\label{non zero condition}
For all $i=1,\ldots,d$ and $j=0,1,\ldots,m$, it holds that $\varphi^j_i(\hat{r})\neq0$ and $\beta^j_i(\hat{r})\neq0$, for every $\hat{r}\in\hat{\mathcal{H}}$.
\end{assumption}


As a first step, we need to compute explicitly the drift term in the Stratonovich dynamics \eqref{forward-rate system} of the joint process $\hat{r}$. To this effect, we notice that, for each $j=0,1,\ldots,m$,
\begin{displaymath}
\begin{split}
\partial_{\hat{r}}\hat{\sigma}(\hat{r})\hat{\sigma}(\hat{r})	&=\sum_{i=1}^d\Biggl{(}\sum_{h=0}^m\lambda^j_i\partial_{r^h}\varphi^j_i(\hat{r})\varphi^h_i(\hat{r})\lambda^h_i+\sum_{h=1}^m\lambda^j_i\partial_{Y^h}\varphi^j_i(\hat{r})\beta^h_i(\hat{r})\Biggr{)},\\ 
\sigma^j(\hat{r})\cdot\textbf{H}\sigma^j(\hat{r})			&
=\sum_{i=1}^d(\varphi^j_i(\hat{r}))^2\lambda^j_i\int_0^{\cdot}\lambda^j_i(s)ds.
\end{split}
\end{displaymath}
We denote by $\partial_{{r}^h}\varphi^j_i(\hat{r})[\lambda^h_i]$ the Fr\'echet derivative of $\varphi^j_i$ with respect to the variable $r^h$ computed at $\hat{r}$, acting on the vector $\lambda^h_i$, for each $h,j=0,1,\ldots,m$ and $i=1,\ldots,d$. We also define
\begin{equation}
\label{Dji}
D^j_i(x):=\lambda^j_i(x)\int_0^x\lambda^j_i(s)ds,
\qquad \text{ for }j=0,1,\ldots,m\text{ and }i=1,\ldots,d.
\end{equation}
Hence, the Stratonovich dynamics of the $2m+1$ components of \eqref{forward-rate system} are given by
\begin{displaymath}
\begin{split}
dr^j_t	&=\Bigg(\textbf{F}r^j_t+\sum_{i=1}^d(\varphi^j_i(\hat{r}_t))^2D^j_i-\frac{1}{2}\sum_{i=1}^d\lambda^j_i\Biggl{(}\sum_{h=0}^m\varphi^h_i(\hat{r}_t)\partial_{r^h}\varphi^j_i(\hat{r}_t)[\lambda^h_i]+\sum_{h=1}^m\partial_{Y^h} \varphi^j_i(\hat{r}_t)\beta^h_i(\hat{r}_t)\\ 
		&\qquad+2(1-\delta^0_j)\varphi^j_i(\hat{r}_t)\beta^j_i(\hat{r}_t)\Biggr{)}\Bigg)dt+\sum_{i=1}^d\varphi^j_i(\hat{r}_t)\lambda^j_i\circ dW_t,
\qquad \text{for }j = 0,1,\ldots,m,\\ 
dY^j_t	&=\Bigg(\textbf{B}r^0_t-\textbf{B}r^j_t-\frac{1}{2}\sum_{i=1}^d(\beta^j_i(\hat{r}_t))^2-\frac{1}{2}\sum_{i=1}^d\Biggl(\sum_{h=0}^m\partial_{r^h} \beta^j_i(\hat{r}_t)[\lambda^h_i]\varphi^h_i(\hat{r}_t)+\sum_{h=1}^m\partial_{Y^h}\beta^j_i(\hat{r}_t)\beta^h_i(\hat{r}_t)\Biggr)\Bigg)dt\\ 
		&\qquad+\sum_{i=1}^d\beta^j_h(\hat{r}_t)\circ dW_t,
\qquad\qquad\qquad\qquad\qquad\qquad\qquad\qquad
\text{for } j = 1,\dots,m.
\end{split}
\end{displaymath}
Therefore, the Stratonovich drift term $\hat{\mu}(\hat{r})$ can be explicitly written as follows:
\begin{small}
\[
\hat{\mu}(\hat{r})=
\begin{pmatrix}
\textbf{F}r^0+\sum_{i=1}^d(\varphi^0_i(\hat{r}))^2D^0_i-\frac{1}{2}\sum_{i=1}^d\lambda^0_i\bigl(\sum_{h=0}^m\varphi^h_i(\hat{r})\partial_{r^h}\varphi^0_i(\hat{r})[\lambda^h_i] +\sum_{h=1}^m\partial_{Y^h} \varphi^0_i(\hat{r})\beta^h_i(\hat{r})\bigr)\\ 
\textbf{F}r^1+\sum_{i=1}^d(\varphi^1_i(\hat{r}))^2D^1_i-\frac{1}{2}\sum_{i=1}^d\lambda^1_i\bigl(\sum_{h=0}^m\varphi^h_i(\hat{r})\partial_{r^h}\varphi^1_i(\hat{r})[\lambda^h_i]\\ 
\qquad\qquad\qquad\qquad\qquad\qquad\qquad
+\sum_{j=1}^m\partial_{Y^h} \varphi^1_i(\hat{r})\beta^h_i(\hat{r})+2\varphi^1_i(\hat{r})\beta^1_i(\hat{r})\bigr)\\ 
\vdots\\ 
\textbf{F}r^m+\sum_{i=1}^d(\varphi^m_i(\hat{r}))^2D^m_i-\frac{1}{2}\sum_{i=1}^d\lambda^m_i\bigl(\sum_{h=0}^m\varphi^h_i(\hat{r})\partial_{r^h}\varphi^m_i(\hat{r})[\lambda^h_i]\\ 
\qquad\qquad\qquad\qquad\qquad\qquad\qquad
+\sum_{h=1}^m\partial_{Y^h} \varphi^m_i(\hat{r})\beta^h_i(\hat{r})+2\varphi^m_i(\hat{r})\beta^m_i(\hat{r})\bigr)\\ 
\textbf{B}r^0-\textbf{B}r^1-\frac{1}{2}\sum_{i=1}^d(\beta^1_i(\hat{r}))^2-\frac{1}{2}\sum_{i=1}^d\bigl(\sum_{h=0}^m\partial_{r^h} \beta^1_i(\hat{r})[\lambda^h_i]\varphi^h_i(\hat{r})+\sum_{h=1}^m\partial_{Y^h}\beta^1_i(\hat{r})\beta^h_i(\hat{r})\bigr)\\ 
\vdots\\ 
\textbf{B}r^0-\textbf{B}r^m-\frac{1}{2}\sum_{i=1}^d(\beta^m_i(\hat{r}))^2-\frac{1}{2}\sum_{i=1}^d\bigl(\sum_{h=0}^m\partial_{r^h} \beta^m_i(\hat{r})[\lambda^h_i]\varphi^h_i(\hat{r})+\sum_{h=1}^m\partial_{Y^h}\beta^m_i(\hat{r})\beta^h_i(\hat{r})\bigr)
\end{pmatrix}.
\]
\end{small}

The complexity of the Stratonovich drift $\hat{\mu}(\hat{r})$ arises from the fact that, for a multi-curve interest rate model $\mathcal{M}$ with volatility as in \eqref{constant direction volatility term}, each component of the volatility function $\hat{\sigma}(\hat{r})$ can depend on the whole vector of forward rate and log-spread processes.

\subsubsection{Existence of FDRs}

Differently from Section \ref{subsection - example constant volatility}, in the case of constant direction volatility the computation of the integral curve of $\hat{\mu}$ turns out to be more difficult. To overcome this obstacle, we will derive conditions ensuring that a distribution larger than $\{\hat{\mu},\hat{\sigma}_1,\dots,\hat{\sigma}_d\}_{\mathrm{LA}}$ is finite-dimensional. This will provide a sufficient condition for the existence of FDRs.

As a preliminary, we denote by $E_j$ the $j$-th element of the canonical basis of $\hat{\mathcal{H}}$, for each $j = 0,1,\ldots,2m$, and  introduce the following family of vector fields:
\begin{equation}
\label{N constant direction vol}
\mathcal{N}:=\bigl\{\xi^0,\xi^h_i,\eta^h_i,\gamma_k ; h=0,1,\ldots,m,\,i=1,\ldots,d, \,k=1,\ldots,m\bigr\},
\end{equation}
where
\begin{displaymath}
\xi^0:=
\begin{pmatrix}
\textbf{F}r^0\\ 
\textbf{F}r^1\\ 
\vdots\\ 
\textbf{F}r^m\\ 
\textbf{B}r^0-\textbf{B}r^1\\ 
\vdots\\ 
\textbf{B}r^0-\textbf{B}r^m
\end{pmatrix},
 \quad \xi_i^h:=\lambda^h_i E_h,\quad 
 \eta^h_i:=D^h_iE_h,\quad \gamma_k:=E_{m+k}.
\end{displaymath}
Making use of this notation, we can write
\begin{align}
\label{mu is inside}
\hat{\mu}(\hat{r})		&=\xi^0+\sum_{h=0}^m\sum_{i=1}^d\bigl{(}(\varphi^h_i(\hat{r}))^2\eta^h_i-\kappa^h_i\xi^h_i\bigr{)}-\sum_{h=1}^m\zeta^h\gamma_h,\\ 
\label{volatility is inside}
\hat{\sigma}_i(\hat{r})	&=\sum_{h=0}^m\varphi^h_i(\hat{r})\xi^h_i+\sum_{h=1}^m\beta^h_i(\hat{r})\gamma_h,
\end{align}
for $i =1,\ldots,d$, where 
\begin{align*}
\kappa^j_i	&:=\frac{1}{2}\Bigg(\sum_{h=0}^m\varphi^h_i(\hat{r})\partial_{r^h} \varphi^j_i(\hat{r})[\lambda^h_i]+\sum_{h=1}^m\bigl(\partial_{Y^h} \varphi^j_i(\hat{r})\beta^h_i(\hat{r})\bigr)+2(1-\delta^0_j)\varphi^j_i(\hat{r})\beta^j_i(\hat{r})\Bigg),\\ 
\zeta^j	&:=\frac{1}{2}\Bigg(\sum_{i=1}^d(\beta^j_i(\hat{r}))^2+\sum_{i=1}^d\Bigg{(}\sum_{h=0}^m\partial_{r^h} \beta^j_i(\hat{r})[\lambda^h_i]\varphi^h_i(\hat{r})+\sum_{h=1}^m\partial_{Y^h} \beta^j_i(\hat{r})\beta^h_i(\hat{r})\Biggr{)}\Bigg).
\end{align*}
It is easily seen that \eqref{mu is inside}-\eqref{volatility is inside} directly imply that $\mathcal{L}=\{\hat{\mu},\hat{\sigma}_1,\ldots,\hat{\sigma}_d\}_{\mathrm{LA}}\subseteq \mathcal{L}^1$, where
\begin{equation}
\label{L1 cdv}
\mathcal{L}^1 := \mathcal{N}_{\mathrm{LA}}.
\end{equation}
Consequently, if $\mathcal{L}^1$ is finite-dimensional, then also $\mathcal{L}$ is finite-dimensional. Moreover, the Lie algebra $\mathcal{L}^1$ has a much simpler structure than $\mathcal{L}$.
Similarly to the case of constant volatility models analysed in Section \ref{subsection - example constant volatility}, it can be proved that $\mathcal{L}^1$ is finite-dimensional if and only if $\lambda^j_i$ is a QE function, for every $j=0,1,\ldots,m$ and $i=1,\ldots,d$ (compare with Proposition \ref{characterization FDRs constant vol} and see also \cite[Section 3.2.1]{lanaro2019geometry} for additional details).
We have thus obtained the following result, which in particular provides an extension of \cite[Proposition 4.2]{slinko2010finite} to the multi-curve setting.

\begin{proposition}
\label{sufficient condition constant direction theorem}
If $\lambda^j_i$ is a QE function, for every $j=0,1,\ldots,m$ and $i=1,\ldots,d$, then the Lie algebra $\mathcal{L}=\{\hat{\mu},\hat{\sigma}_1,\ldots,\hat{\sigma}_d\}_{\mathrm{LA}}$ is finite-dimensional.
\end{proposition}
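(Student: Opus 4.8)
The plan is to exploit the inclusion $\mathcal{L}\subseteq\mathcal{L}^1=\mathcal{N}_{\mathrm{LA}}$ already recorded in \eqref{mu is inside}--\eqref{volatility is inside}: it suffices to prove that $\mathcal{L}^1$ is finite-dimensional, since any Lie subalgebra of a finite-dimensional Lie algebra is itself finite-dimensional. The decisive simplification is that, among the generators in $\mathcal{N}$ of \eqref{N constant direction vol}, only $\xi^0$ is a non-constant vector field: the fields $\xi^h_i=\lambda^h_iE_h$, $\eta^h_i=D^h_iE_h$ and $\gamma_k=E_{m+k}$ are all constant on $\hat{\mathcal{H}}$. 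Consequently the Lie bracket of any two constant generators vanishes, and the only brackets that can produce new directions are those involving $\xi^0$.

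First I would record that $\xi^0$ is linear, so its Fr\'echet derivative $\partial_{\hat{r}}\xi^0$ is the constant operator already displayed in \eqref{drift derivative constant}, acting as $\mathbf{F}$ on each forward-rate slot and as $\mathbf{B}r^0-\mathbf{B}r^j$ on the $j$-th spread slot. For a constant field $c$ one then has $[\xi^0,c]=\partial_{\hat{r}}\xi^0\,c$, which is again constant, while $[\xi^0,\xi^0]=0$. Iterating, $\mathcal{L}^1$ is spanned by $\xi^0$ together with the orbits $\{(\partial_{\hat{r}}\xi^0)^k c:k\geq0\}$ as $c$ ranges over the constant generators, so that dimension counting reduces entirely to controlling these orbits.

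Next I would compute the orbits using the explicit structure of $\partial_{\hat{r}}\xi^0$. Applying it repeatedly to $\xi^h_i=\lambda^h_iE_h$ leaves $\mathbf{F}^k\lambda^h_i$ in the forward slot $h$ and feeds $\mathbf{B}\mathbf{F}^{k-1}\lambda^h_i$ into the relevant spread slot(s); crucially, since $\xi^0$ depends only on the forward-rate coordinates, the spread entries generated along the orbit are slaved to the forward entries and never re-enter $\partial_{\hat{r}}\xi^0$. In particular $\gamma_k=E_{m+k}$ is annihilated after one step, contributing only itself. Hence the orbit of $\xi^h_i$ is finite-dimensional if and only if $\mathrm{span}\{\mathbf{F}^k\lambda^h_i:k\geq0\}$ is, which by Lemma \ref{QE functions characterization} holds exactly when $\lambda^h_i$ is QE. The identical argument applies to $\eta^h_i=D^h_iE_h$ with $\lambda^h_i$ replaced by $D^h_i$; here I would invoke the closure of the QE class under pointwise products and under integration to conclude that $D^h_i=\lambda^h_i\int_0^\cdot\lambda^h_i(s)\,ds$, as defined in \eqref{Dji}, is QE whenever $\lambda^h_i$ is.

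Putting these together, when every $\lambda^j_i$ is QE each of the finitely many generators of $\mathcal{N}$ has a finite-dimensional orbit under $\partial_{\hat{r}}\xi^0$, so $\mathcal{L}^1$ is finite-dimensional, and the inclusion $\mathcal{L}\subseteq\mathcal{L}^1$ yields $\mathrm{dim}[\mathcal{L}]<+\infty$. The main obstacle I anticipate is not conceptual but lies in verifying that the Lie algebra genuinely closes after these iterations: one must confirm that no bracket other than repeated application of $\partial_{\hat{r}}\xi^0$ to a constant field can escape the span (which rests on the vanishing of constant--constant brackets), and that the $\mathbf{B}$-type spread entries produced along each orbit do not open up additional uncontrolled directions. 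The latter is precisely where the observation that $\xi^0$ ignores the spread coordinates, so that those entries are determined by the forward entries one step earlier, becomes essential.
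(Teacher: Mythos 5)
Your proposal is correct and follows essentially the same route as the paper: it relies on the inclusion $\mathcal{L}\subseteq\mathcal{L}^1=\mathcal{N}_{\mathrm{LA}}$ from \eqref{mu is inside}--\eqref{volatility is inside} and then shows $\mathcal{L}^1$ is finite-dimensional by the same mechanism the paper invokes by analogy with the constant volatility case, namely that $\xi^0$ is the only non-constant (and linear) generator, so all iterated brackets reduce to $(\partial_{\hat{r}}\xi^0)^k c$ for constant generators $c$, whose spans are controlled by Lemma \ref{QE functions characterization} once one notes that $D^j_i$ inherits the QE property from $\lambda^j_i$. The paper relegates exactly this computation to the cited reference, so your write-up simply makes explicit what the paper leaves implicit.
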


\subsubsection{Construction of FDRs}

In view of Theorem \ref{thm:FDR}, the result of Proposition \ref{sufficient condition constant direction theorem} provides a sufficient condition for the existence of an FDR. We now turn to the construction of FDRs. In this section, to guarantee the existence of FDRs, we shall work under the following assumption.

\begin{assumption}
\label{assumption cdvol}
For every $j=0,1,\dots,m$ and $i=1,\dots,d$, the function $\lambda^j_i$ is QE.
\end{assumption}

Under Assumption \ref{assumption cdvol}, it can be easily verified that the functions $D^j_i$ given in \eqref{Dji} are QE. By Lemma \ref{QE functions characterization}, there exist $n^j_i\in\mathbb{N}$ and $p^j_i\in\mathbb{N}$, for all $i=1,\dots,d$ and $j=0,1,\dots,m$, such that 
\[
\textbf{F}^{n^j_i}\lambda^j_i=\sum_{k=0}^{n^j_i-1}c^j_{k,i}\textbf{F}^k\lambda^j_i
\qquad\text{ and }\qquad
\textbf{F}^{p^j_i}D^j_i =\sum_{k=0}^{p^j_i-1}d^j_{k,i}\textbf{F}^kD^j_i,
\] 
for suitable constants $c^j_{k,i}$ and $d^{j}_{k,i}$. In this case, due to the definition of the Lie algebra $\mathcal{L}^1$, the dimension of $\mathcal{L}^1$ is bounded by $n:= m+1+\sum_{i=1}^d\sum_{j=0}^m(n^j_i+p^j_i)$. 

We introduce the following notation for a generic vector $z\in\mathbb{R}^{n}$, obtained by concatenating all elements of the vectors $(x^j)_{j}\in \R^{m+1}$, $(z^j_{k,i})_{i,j,k}\in\mathbb{R}^{\sum_{i=1}^d\sum_{j=0}^mn^j_i}$ and $(x^j_{k,i})_{i,j,k}\in\mathbb{R}^{\sum_{i=1}^d\sum_{j=0}^mp^j_i}$:
\begin{equation}
\label{notation for coordinates 2} 
z= (x^0,\ldots,x^m,z^0_{0,1},\ldots,z^0_{n^0_1-1,1}, \ldots,z^m_{0,d},\ldots,z^m_{n^m_d-1,d},x^0_{0,1},\ldots,x^0_{p^0_1-1,1},\ldots,x^m_{0,d},\ldots,x^m_{p^m_d-1,d}).
\end{equation}

As explained in Section \ref{subsubsection - existence of fdr constant},  the tangential manifold of $\mathcal{L}^1$ can be constructed as follows:
\begin{equation}
\label{constant direction volatility tangential function}
G(z):=\prod_{i,j,k,h,l} e^{\textbf{F}^k\lambda^j_iE_jz^j_{k,i}}e^{\textbf{F}^hD^j_iE_jx^j_{h,i}}e^{\gamma_lx^l}e^{\xi^0x^0}\hat{r}^M_0,
\end{equation}
for an arbitrary initial point $\hat{r}^M_0\in\hat{\mathcal{H}}$. 
In the following, we denote by $r^M_j$ the $j$-th component of $\hat{r}^M_0$, for $j=0,1,\ldots,m$. 
The components of the function $G$ introduced in \eqref{constant direction volatility tangential function} are given by
\begin{equation}
\begin{aligned}
\label{finite-dimensional realizations mapping 2}
G^j(z,x)	&=r^M_j(x^0+x)+\sum_{i=1}^d\Bigg(\sum_{k=0}^{n^j_i-1}z^j_{k,i}\textbf{F}^k\lambda^j_i(x)+\sum_{k=0}^{p^j_i-1}x^j_{k,i}\textbf{F}^k\bigl{(}\lambda^j_i(x)D^j_i(x)\bigr{)}\Bigg),
\qquad j=0,\dots,m,\\ 
G^j(z)	&=y^M_j+\int_0^{x^0}(r^M_{0,0}(s)-r^M_{0,j}(s))ds+x^j,
\quad\qquad\qquad\qquad\qquad\qquad\qquad j=m+1,\dots,2m.
\end{aligned}
\end{equation}

At this point, we can determine the coefficients $a$ and $b$ of the state process $(Z_t)_{t\geq0}$ by requiring that condition \eqref{conditions on the vector field a and b} holds.
For simplicity of notation, we shall omit to denote the dependence on $z$ in the functions $a$ and $b$ and adopt a notation consistent with  \eqref{notation for coordinates 2}:
\[
a	= (a^0,\ldots,a^m,a^0_{0,1},\ldots,a^m_{n^m_d-1,d},\widetilde{a}^0_{0,1},\ldots,\widetilde{a}^m_{p^m_d-1,d})
\qquad\text{ and }\qquad
b	=(b_1,\ldots,b_d)^\top,
\]
where $b_h$, for each $h = 1,\ldots,d$, has the same representation of the vector $a$.
To determine $a$ and $b$, we must invert the consistency condition between the coefficients of the model and tangential manifold, as described in step III of the procedure outlined in Section \ref{subsection - the strategy}. In this way we obtain
\[
\begin{cases}
a^0				&=	1,\\ 
a^j_{0,i}			&=z^j_{n^j_i-1,i}c^j_{0,i}-\frac{1}{2}\Bigl{(}\sum_{h=0}^m\varphi^h_i(G(z))\partial_{r^h} \varphi^j_i(G(z))[\lambda^j_i]\\ 
				&\quad+\sum_{h=1}^m\beta^h_i(G(z))\partial_{Y^h} \varphi^j_i(G(z))+(1- \delta^j_0)2\varphi^j_i(G(z))\beta^j_i(G(z))\Bigr{)},\qquad\qquad j = 0,\dots,m,\\ 
a^j_{k,i}			&=z^j_{k-1,i}+z^j_{n^j_i-1,i}c^j_{k,i},\qquad\qquad\qquad\qquad\qquad\qquad\qquad\quad 
k=1,\dots, n^j_i-1, j = 0,\dots,m,\\ 
\widetilde{a}^j_{0,i}	&=(\varphi^j_i(G(z)))^2+x^j_{p^j_i-1,i}d^j_{0i},\qquad\qquad\qquad\qquad\qquad\qquad\qquad\qquad\qquad\qquad\, j = 0,\dots,m,\\ 
\widetilde{a}^j_{k,i}	&=x^j_{k-1,i}+x^j_{p^j_i-1,i}d^j_{k,i},\qquad\qquad\qquad\qquad\qquad\qquad\qquad\;\, k=1,\dots,p^j_i-1,\ j = 0,\dots,m,\\ 
a^{j}				&=\sum_{i=1}^d\Bigl(\sum_{k=0}^{n^0_i-1}z^0_{k,i}\textbf{F}^k\lambda^0_i(0)-\sum_{k=0}^{n^{j}_i-1}z^j_{k,i}\textbf{F}^k\lambda^j_i(0)-\frac{1}{2}(\beta^{j}_i(G(z)))^2\\ 
				& \quad -\frac{1}{2}\Bigl{(}\sum_{h=0}^m\partial_{r^h} \beta^{j}_iG(z))[\lambda^h_i]\varphi^h_i(G(z))+\sum_{h=1}^m\partial_{Y^h} \beta^{j}_i(G(z))\beta^h_i(G(z))\Bigr{)}\Bigr),\quad j = 1,\dots,m.
\end{cases}
\]
We follow the same procedure in order to determine $b$. We obtain, for each $h = 1,\dots,d$,
\[
\begin{cases}
b^0_h			&=0,\\ 
b^j_{0,h,h}		&=\varphi^j_h(G(z)),\\ 
b^j_{0,i,h}			&=0,\quad i\neq h,\quad i=1,\dots,d,\\ 
b^j_{k,i,h}			&=0,\quad k=1,\dots,n^j_i-1,\\ 
\widetilde{b}^j_{k,i,h}	&=0,\quad k=0,\dots,p^j_i-1,\\ 
b^0				&=0,\\ 
b^j_{h}			&=\beta^j_h(G(z)),\qquad j=1,\dots,m.\\ 
\end{cases}
\]

In conclusion, under Assumption \ref{assumption cdvol}, an FDR for a model $\mathcal{M}$ defined by a volatility $\hat{\sigma}$ of the form \eqref{constant direction volatility term} can be determined by the immersion $G$ defined in \eqref{finite-dimensional realizations mapping 2} and by the finite-dimensional state process $(Z_t)_{t\geq0}$,  whose drift and volatility coefficients have been explicitly computed above.
 
\begin{remark}
Assumption \ref{assumption cdvol} only represents a sufficient condition for the existence of FDRs. In the more specific case where $\varphi^j_i = \varphi_i$, for all $i = 1,\dots,d$ and $j =0,1,\dots,m$, necessary and sufficient conditions can be obtained for the existence of FDRs. We refer the interested reader to \cite[Section 3.3.2]{lanaro2019geometry} for a detailed analysis of this situation.
\end{remark}

\subsubsection{An example: Hull White model with non-constant volatility of the log-spread processes} 
\label{sec:example_HW}

To exemplify the construction of FDRs for constant direction volatility models described above, we consider a simple model with $m=2$ (i.e., we consider two distinct tenors for risk-sensitive rates, together with the risk-free rate) and $d=3$, with the following volatility structure:
\[
\hat{\sigma}(\hat{r})=
\begin{pmatrix}
\sigma^0e^{-a^0x} 	& 0                                	& 0                                \\ 
0                               	& \sigma^1e^{-a^1x}		& 0                                \\ 
0                           	& 0                            		& \sigma^2e^{-a^2x} \\
\beta^1_1                	& \beta^1_2Y^1_t       	& 0                                \\ 
\beta^2_1                	& 0                                	& \beta^2_3Y^2_t
\end{pmatrix},
\]
where $\beta^j_i,\sigma^j,a^j$ are positive constants, for $j=0,1,2$ and $i=1,2,3$. This volatility specification is consistent with the empirically observed fact that large values of the spreads are typically associated with high volatility, as large values of the spreads tend to occur in periods of financial turmoil. Moreover, we allow for correlation between forward rates and log-spread processes. 

We aim at constructing explicitly an FDR for this three-curve model.
We assume that the initial point $\hat{r}^M_0=(r^M_0,r^M_1,r^M_2,y^M_1,y^M_2)\in\hat{\mathcal{H}}$ is such that $y^M_1,y^M_2\neq0$. By continuity of $Y^j$, we can therefore assume that $Y^j_t\neq0$ for all $t\in[0,\tau)$, for some stopping time $\tau>0$ a.s. In this case, Assumption \ref{non zero condition} holds true and the Lie algebra generated by the coefficients of the model is finite-dimensional. 
Since the volatility functions of the forward rate processes are QE functions, Proposition \ref{sufficient condition constant direction theorem} can be applied. We therefore obtain that the Lie algebra
\begin{displaymath}
\mathcal{L}^1=\mathrm{span}\bigl{\{}\xi^0,\ \textbf{F}^k\lambda^j_iE_j,\ \textbf{F}^kD^j_iE_j,\ \gamma_l\ :\ j=0,1,2,\  i=1,2,3,\ l=1,2,\ k\in\mathbb{N}\bigr{\}}
\end{displaymath}
is finite-dimensional and contains $\{\hat{\mu},\hat{\sigma}_1,\hat{\sigma}_2,\hat{\sigma}_3\}_{\mathrm{LA}}$. In view of Proposition \ref{sufficient condition constant direction theorem}, this ensures the existence of FDRs for the model considered in this example.

To construct the FDRs, we first compute the dimension of $\mathrm{span}\{\textbf{F}^k\lambda^j_iE_j,\ \textbf{F}^kD^j_iE_j\ :\ k\in\mathbb{N}\}$. For each $i=1,2,3$, the polynomial annihilator $M^j_i$ of $\sigma^j_i$ is trivially $M^j_i(\gamma)=\gamma$ for $j\neq i-1$, while it is $M^j_i(\gamma)=\gamma+a^j$ for $j=i-1$. 
Analogously, since 
\[
D^j_i(x) =\frac{(\sigma^j)^2}{a^j}(e^{-a^jx}-e^{-2a^jx}),
\qquad\text{ for }j=i-1,
\]
and $0$ otherwise, the polynomial annihilator of $D^j_i$ is $P^j_i(\gamma) = \gamma$ for $j\neq i-1$, while for $j=i-1$ it is given by $P^{i-1}_i(\gamma) = \gamma^2 + 3a^j\gamma+2(a^j)^2$, for $i=1,2,3$. Therefore, we have that
\begin{align*}
n^j_i	&= \text{deg}(M^j_i) = 1,\quad \text{for }j = 0,1,2,\ i = 1,2,3,\\ 
p^j_i &= \text{deg}(P^j_i) = 
\begin{cases} 
2,\quad \text{for }j = i-1,\ i = 1,2,3,\\
1,\quad \text{otherwise}.
\end{cases}
\end{align*}
Referring to the family of vector fields introduced in \eqref{N constant direction vol}, we have that $\xi^j_i=0$ and $\eta^j_i=0$, for all $i=1,2,3$ and $j\neq i-1$. Hence, these vectors do not contribute to generate the Lie algebra $\mathcal{L}^1$, which has dimension $n= 2+1+\sum_{i=1}^3(n^{i-1}_i+p^{i-1}_i) = 12$.
In analogy with notation \eqref{notation for coordinates 2}, we denote a generic vector $z\in\mathbb{R}^{n}$ by $z = (x^0,x^1,x^2, z^0, z^1, z^2, x^0_0, x^0_1, x^1_0, x^1_1, x^2_0, x^2_1)$. 

The tangential manifold is given by the image of the function introduced in equation \eqref{constant direction volatility tangential function}. 
In the present example, the latter is given by
\begin{displaymath}
G(z)=\prod_{i,h,l} e^{\sigma^{i-1}_iE_{i-1}z^{i-1}}e^{\textbf{F}^hD^{i-1}_iE_{i-1}x^{i-1}_{h}}e^{\gamma_lx^l}e^{\xi^0x^0}\hat{r}^M_0.
\end{displaymath}

The process $Z_t = (X^0_t,X^1_t,X^2_t,Z^0_{t},Z^1_{t},Z^2_{t},X^0_{0,t},X^0_{1,t},X^1_{0,t},X^1_{1,t},X^2_{0,t},X^2_{1,t})$ satisfies \eqref{state-space process}, where:
\[
\begin{small}
a(Z_t) = 
\begin{pmatrix}
1\\ 
Z^0_{t}\sigma^0+(\sigma^0)^2X^0_{1,t}-Z^1_{t}\sigma^1-(\sigma^1)^2X^1_{1,t}-\frac{1}{2}(\beta^1_2)^2\bigl(y^M_1+\int_0^{X^0_t}(r^M_0(s)-r^M_1(s))ds\\ 
\qquad\qquad+X^1_t\bigr)\bigl(y^M_1+\int_0^{X^0_t}(r^M_0(s)-r^M_1(s))ds+X^1_t+1\bigr)-\frac{1}{2}(\beta^1_1)^2\\ 
Z^0_{t}\sigma^0+(\sigma^0)^2X^0_{1,t}-Z^2_t\sigma^2-(\sigma^2)^2X^2_{1,t}-\frac{1}{2}(\beta^2_3)^2\bigl(y^M_2+\int_0^{X^0_t}(r^M_0(s)-r^M_2(s))ds\\ 
\qquad\qquad+X^2_t\bigr)\bigl(y^M_2+\int_0^{X^0_t}(r^M_0(s)-r^M_2(s))ds+X^2_t+1\bigr)-\frac{1}{2}(\beta^2_2)^2\\ 
-a^0Z^0_{t}\\ 
-Z^1_{t}a^1-\beta^1_2\bigl(y^M_1+\int_0^{X^0_t}(r^M_0(s)-r^M_1(s))ds+X^1_t\bigr)\\ 
-Z^2_{t}a^2 -\beta^2_3\bigl(y^M_2+\int_0^{X^0_t}(r^M_0(s)-r^M_2(s))ds+X^2_t\bigr)\\ 
-2(a^0)^2X^0_{1,t}+1\\ 
X^0_{0,t}-3a^0X^0_{1,t}\\ 
-2(a^1)^2X^1_{1,t}+1\\ 
X^1_{0,t}-3a^1X^1_{1,t}\\ 
-2(a^2)^2X^2_{1,t}+1\\ 
X^2_{0,t}-3a^2X^2_{1,t}
\end{pmatrix},
\end{small}
\]
\[
\begin{small}
b(Z_t) =
\begin{pmatrix}
0&0&0\\ 
\beta^1_1&\beta^1_2\bigl(y^M_1+\int_0^{X^0_t}(r^M_0(s)-r^M_1(s))ds+X^1_t\bigr)&0\\ 
\beta^2_1&0&\beta^2_3\bigl(y^M_2+\int_0^{X^0_t}(r^M_0(s)-r^M_2(s))ds+X^2_t\bigr)\\ 
1&0&0\\ 
0&1&0\\ 
0&0&1\\ 
0&0&0\\ 
0&0&0\\ 
0&0&0\\ 
0&0&0\\ 
0&0&0\\ 
0&0&0\\ 
\end{pmatrix}.
\end{small}
\]

\subsection{Realizations through a set of benchmark forward rates and log-spreads}
\label{sec:benchmark}

Let us consider a model $\mathcal{M}$ admitting an $n$-dimensional realization. In general, the state process $(Z_t)_{t\geq0}$ of the FDR has no economic interpretation. In this section, we show that it is possible to construct another FDR of the same dimension $n$ determined by forward rates associated to a fixed set of benchmark maturities together with the log-spreads. This result is interesting in view of applications, where it is useful to find realizations for which the state process admits an economic interpretation and, in particular, can be readily deduced from market observables.

Given the existence of an $n$-dimensional realization, considering a point $\hat{r}=(r,y)\in\mathcal{G}\subseteq\hH$, we aim at constructing an FDR through the linear functional
\begin{equation}
\label{coordinates}
Z^h(r,y):=\sum_{j=0}^ma^h_j\,r^j(x_h)+\sum_{j=1}^mb^h_j\,y^j, 
\qquad \text{ for }h = 1,\ldots, n,
\end{equation}
where $(x_1,\ldots,x_n)$ is a fixed set of maturities and $a^h_j$ and $b^h_j$ are constants. 
The problem is equivalent to prove that $Z:\hH\rightarrow \R^n$ is a local system of coordinates for $\mathcal{G}$, for a suitable choice of the constants $a^h_j$ and $b^h_j$. 
The following proposition provides a positive answer to this problem, generalizing \cite[Theorem 3.3]{bjork2001existence} to the multi-curve setup. We remark that, differently from the classical single-curve case, in the multi-curve setup each component of the FDR determined by \eqref{coordinates} is given by a linear combination of $m+1$ forward rates and $m$ log-spreads.

\begin{proposition}
Let $\mathcal{M}$ be a multi-curve interest rate model admitting an $n$-dimensional realization. Then, for any vector $(x_1,\dots,x_n)\in\R^n$, where each $x_i$ is arbitrarily chosen except for a discrete subset of $\R_+$, the realization can be described by the inverse of the function \eqref{coordinates}, for suitable constants $a^h_j$ and $b^h_j$, for $h=1,\ldots,n$ and $j=0,1,\ldots,m$.
\end{proposition}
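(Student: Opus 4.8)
The plan is to show that the linear map $Z=(Z^1,\dots,Z^n)\colon\hat{\mathcal H}\to\R^n$ defined in \eqref{coordinates}, for a suitable choice of the constants $a^h_j,b^h_j$ and of the maturities, restricts to a local diffeomorphism of the invariant manifold $\mathcal G=\mathrm{Im}[G]$ onto an open subset of $\R^n$. Once this is established, the new state process $\widetilde Z_t:=Z(\hat r_t)=(Z\circ G)(Z_t)$ is a smooth reparametrization of $Z_t$, the map $\widetilde G:=G\circ(Z\circ G)^{-1}$ satisfies Assumption \ref{assumption - immersion condition} and parameterizes the same manifold $\mathcal G$, and the Stratonovich dynamics of $\widetilde Z$ are recovered from \eqref{conditions on the vector field a and b} via the chain rule. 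This yields an $n$-dimensional realization of $\mathcal M$ whose state variables are exactly the functionals \eqref{coordinates}. By the inverse function theorem it therefore suffices to prove that, at a point $G(z)=\hat r\in\mathcal G$, the Jacobian $J(z):=\partial_z(Z\circ G)(z)$ is invertible; since $Z$ is linear, its entries are $J_{hk}(z)=Z^h\bigl(\partial_{z_k}G(z)\bigr)$.

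Writing the basis vectors of the tangent space as $\partial_{z_k}G(z)=(g^0_k,\dots,g^m_k,w^1_k,\dots,w^m_k)\in\mathcal H^{m+1}\times\R^m$, which are linearly independent by Assumption \ref{assumption - immersion condition}, the Jacobian reads
\[
J_{hk}(z)=\sum_{j=0}^m a^h_j\,g^j_k(x_h)+\sum_{j=1}^m b^h_j\,w^j_k.
\]
The first step is a separation property for the family of functionals $\Phi_{x,a,b}(v):=\sum_{j=0}^m a_j\,v^j(x)+\sum_{j=1}^m b_j\,v^{m+j}$, indexed by $(x,a,b)\in\R_+\times\R^{m+1}\times\R^m$. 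A tangent vector annihilated by all such functionals has every forward-rate component vanishing at every maturity, hence identically zero in $\mathcal H$, and every log-spread component equal to zero, so it is the zero vector. Consequently these functionals separate the points of $T_{G(z)}\mathcal G$ and therefore span its $n$-dimensional dual $(T_{G(z)}\mathcal G)^{*}$; selecting $n$ of them that form a basis already produces constants and maturities making $J(z)$ nonsingular, i.e.\ a benchmark realization for \emph{some} choice of the maturities.

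To upgrade this to the stated genericity, I would regard $\det J(z)$ as a function of $(x_1,\dots,x_n)$ with the constants held fixed. Since each row $h$ depends on $x_h$ alone, expanding the determinant along one row reduces the question, in each variable $x_h$ separately, to the non-vanishing of $x\mapsto\sum_{j}a^h_j\,G^j(x)+\mathrm{const}$, where $G^j(x):=\sum_k C_{hk}\,g^j_k(x)$ and $C_{hk}$ are the (row-independent) cofactors. Generalizing the Wronskian argument of \cite[Theorem 3.3]{bjork2001existence}, one uses that the forward-rate components of the tangent vectors are real-analytic in the maturity variable, being generated by the integral curves of the quasi-exponential vector fields spanning the finite-dimensional Lie algebra $\mathcal L$. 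Given that the relevant combination is not identically zero (which the separation property guarantees can be arranged), the identity theorem for real-analytic functions of one variable forces its zeros, and hence the exceptional values of each $x_h$, to form a discrete subset of $\R_+$.

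The main obstacle is precisely this non-degeneracy step. In contrast to the single-curve setting of \cite{bjork2001existence}, here each coordinate functional couples the $m+1$ forward-rate evaluations with the $m$ log-spread components, so linear independence of the tangent vectors must be converted into non-vanishing of a genuinely coupled collocation determinant; moreover the tangent direction carrying $\mathbf{F}$ applied to the shifted initial curve is \emph{not} analytic in the maturity variable, which obstructs a naive use of the identity theorem. The plan to handle this is to exploit the separation property to fix the constants $a^h_j,b^h_j$ first, collapsing each row to a scalar function of a single maturity, and to assign the non-analytic time-shift direction to a dedicated coordinate, so that the collocation determinant governing the remaining, analytic directions falls within the scope of the Wronskian/identity-theorem argument.
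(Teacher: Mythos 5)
Your overall strategy coincides with the paper's: reduce the statement to the invertibility of the collocation matrix (your Jacobian $J(z)$, the paper's matrix $K_n(x)$) obtained by applying the functionals \eqref{coordinates} to a basis of the $n$-dimensional tangent space $T_{G(z)}\mathcal{G}$, and then use analyticity in the maturity variable to show that the exceptional maturities form a discrete set. Within this common strategy your execution differs in a useful way: the paper chooses the constants $(a^h_j,b^h_j)$ \emph{after} the maturities (it shows that the basis tangent vectors are nonzero off a discrete set and then picks $x$-dependent constants making $\det K_n(x)\neq 0$), whereas you fix the constants once and for all through the separation/duality argument (no nonzero tangent vector is annihilated by all functionals $\Phi_{x,a,b}$, hence these span the $n$-dimensional dual) and then propagate non-degeneracy in the maturities one variable at a time via the identity theorem. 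Both are valid; your base case is arguably cleaner, and the iterated one-variable argument makes the ``discrete exceptional set in each $x_i$'' structure of the statement more transparent than the paper's one-line polynomial argument.

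However, the ``main obstacle'' flagged in your last paragraph does not exist, and your justification of analyticity is the actual weak point. You derive analyticity of the tangent directions from the quasi-exponential structure of vector fields spanning the Lie algebra; this is both insufficient and out of place here (QE structure pertains to the constant-volatility class, whereas the proposition concerns an arbitrary model admitting an $n$-dimensional realization, and QE arguments indeed would not cover the time-shift direction involving the initial curve). The correct source of analyticity, and the one the paper invokes, is that \emph{every} element of $\mathcal{H}$ is real-analytic (see \cite[Proposition 4.2]{bjork2001existence}). Since $T_{G(z)}\mathcal{G}\subseteq\hat{\mathcal{H}}=\mathcal{H}^{m+1}\times\R^m$, every tangent vector --- including the one generated by the time shift, because $\mathbf{F}$ is a bounded operator mapping $\mathcal{H}$ into $\mathcal{H}$ and $r^M_j(\cdot+z^0)\in\mathcal{H}$ --- has real-analytic forward-rate components. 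Hence there is no non-analytic direction to ``assign to a dedicated coordinate,'' the identity theorem applies to every row of your expansion, and your argument closes without the hedged workaround.
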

\begin{proof}
We need to prove that the function $Z$ introduced in \eqref{coordinates} is a diffeomorphism between $\mathcal{G}$ and its image. We denote by $\partial_{\hat{r}}Z:T_{\hat{r}}\mathcal{G}\rightarrow \R^n$ the Fr\'echet derivative of $Z$, where $T_{\hat{r}}\mathcal{G}$ is the tangent space of $\G$ at $\hat{r}$. Since $\mathcal{G}\subseteq\hat{\mathcal{H}}$ has dimension $n$, $T_{\hat{r}}\mathcal{G}$ is an $n$-dimensional subspace of $\hat{\mathcal{H}}$. Let us consider a basis $(\hat{e}_1,\dots,\hat{e}_n)$ for $T_{\hat{r}}\mathcal{G}$, where we adopt the notation $\hat{e}_h=(e^{0}_h,\dots,e^{2m}_h)\in\hat{\mathcal{H}}$, for $h= 1,\dots,n$. Then, for a generic element $v\in T_{\hat{r}}\G$, there exists $\gamma = (\gamma_1,\dots,\gamma_n)^{\top}\in\R^n$ such that $v = \sum_{h=1}^n\gamma_h\hat{e}_h$. 
By linearity, the Fr\'echet derivative of $Z$ can be written as 
\[
\partial_{\hat{r}}Z\cdot v= \sum_{h=1}^n\gamma_h (\partial_{\hat{r}}Z\cdot \hat{e}_h) = K_n(x)\gamma,
\] 
where
\[
K_n(x) := 
\begin{pmatrix}
\alpha^1\cdot\hat{e}_1(x_1)&\cdots&\alpha^1\cdot\hat{e}_n(x_1)\\ 
\vdots&\ddots&\vdots\\ 
\alpha^n\cdot\hat{e}_1(x_n)&\cdots&\alpha^n\cdot\hat{e}_n(x_n)
\end{pmatrix},
\]
and 
\begin{equation}
\label{alpha-bom}
\alpha^h = (a^h_0,a^h_1,\dots,a^h_m,b^h_1,\dots,b^h_m)\in\mathbb{R}^{2m+1}, 
\qquad\text{ for } h= 1,\dots,n.
\end{equation}
Therefore, the function $Z$ is a local system of coordinates if $K_n(x)$ is invertible. 
Since $\mathcal{H}$ is composed by analytic functions (see, e.g., \cite[Proposition 4.2]{bjork2001existence}), a non-null element of $\mathcal{H}$ has isolated zeroes and we can find a discrete subset $N\subset\R_+$ such that $\hat{e}_h(x)\neq 0$ for all $h= 1,\dots,n$ and $x\notin N$. Since $\mathrm{det}(K_n(x))$ is a polynomial function of $(\hat{e}_h(x))_{h=1,\dots,n}$, we can always find a set of vectors $(\alpha^h)_{h = 1,\ldots,n}$, defined as in \eqref{alpha-bom} and dependent on $x$, such that $\mathrm{det}(K_n(x)) \neq0$ for every $x= (x_1,\dots,x_n)$, where $x$ is arbitrarily chosen in $\R^n_+$ except for a discrete set.
\end{proof}

\section{An alternative formulation of invariance} 
\label{section - alternative invariance}

In the previous sections, the log-spread processes $(Y^j)_{t\geq0}$, for $j=1,\ldots,m$, have been treated jointly with the forward rate curves. However, since the log-spread processes are inherently finite-dimensional, we can aim at a formulation of invariance that treats the log-spreads differently from the forward rate components.
More specifically, in this section we aim at understanding if the log-spread processes can be directly included in the state process $(Z_t)_{t\geq0}$ of an FDR.

\subsection{The role of log-spreads in the consistency problem}

Theorem \ref{invariance-thm} characterizes the consistency between a multi-curve interest rate model $\mathcal{M}$ and a manifold $\mathcal{G}$, which provides a joint representation of forward rates and log-spreads. As explained in Remark \ref{rem:joint_spreads}, in a multi-curve setting consistency cannot be addressed for the forward rate components alone. Motivated by this remark, in this section we investigate an alternative formulation of invariance, which directly includes the log-spreads in the state variables. 
More specifically, adopting the notation $\hat{r}_t= (r_t,Y_t)\in\hat{\mathcal{H}} $, we investigate under which conditions there exist a stopping time $\tau>0$ a.s., a process $(Z_t)_{t\geq0}$ taking values in $\mathcal{Z}\subseteq\R^n$ and a function $\bar{G}:\R^{m}\times\mathcal{Z}\rightarrow \nH^{m+1}$ such that
\begin{equation}
\label{invariance v2}
r_t(x) 	= \bar{G}(Y_t, Z_t,x)\text{ a.s.},
\qquad \text{ for all }x\in\R_+\text{ and }t\in[0,\tau). 
\end{equation}
This is equivalent to the notion of $\hat{r}$-invariance (Definition \ref{r invariance def}), for the immersion $G$ given by
\[
G(y,z) := (\bar{G}(y,z),y).
\]

We can rewrite as follows the Stratonovich dynamics \eqref{forward-rate system} of the joint process $\hat{r}$, distinguishing explicitly the forward rate components from the log-spread processes:
\begin{equation}
\label{fr+sp dynamics}
d\hat{r}_t= 
\begin{pmatrix}
dr_t\\ 
dY_t
\end{pmatrix}
=
\begin{pmatrix}
\mu(r_t, Y_t) dt + \sigma(r_t, Y_t) \circ dW_t\\ 
\gamma(r_t,Y_t) dt + \beta(r_t,Y_t)\circ dW_t
\end{pmatrix}.
\end{equation}

We define as follows the alternative notion of invariance considered in this section.

\begin{defin}
\label{y invariance def}
A parameterized family $\bar{G}$ is $(r,y)$-invariant under the action of  $\hat{r}=(r,y)$ if, for every initial point $(r_0,Y_0)$, there exists an a.s. strictly positive stopping time $\tau(r_0,Y_0)$ and a stochastic process $(Z_t)_{t\geq0}$, taking values in $\mathcal{Z}$ and with Stratonovich dynamics \eqref{eq:Stratonovich_Z}, such that condition \eqref{invariance v2} holds.
\end{defin}

Similarly to the equivalence between Definition \ref{invariance} and Definition \ref{r invariance def}, it can be easily shown that Definition \ref{y invariance def} is equivalent to Definition \ref{invariance} with $G:= (\bG,\mathbb{I}_m)$. As a consequence, the following result can be proved analogously to Theorem \ref{invariance-thm}, characterizing the validity of \eqref{invariance v2}.

\begin{proposition} 
Let $\bG:\R^{m}\times\mathcal{Z}\rightarrow\nH^{m+1}$ be a parameterized family such that $G:= (\bG,\mathbb{I}_m)$ satisfies Assumption \ref{assumption - immersion condition}. Then $\bar{G}$ is $(r,y)$-invariant under the action of  $\hat{r}=(r,y)$ if and only if, for every $(z,y)\in \mathcal{Z} \times \R^m$:
\[
\begin{cases}
\mu((\bG(y,z),y)) = \mathrm{Im}[\partial_z\bG(y,z)] + \partial_y\bG(y,z)\gamma(y,z),\\ 
\sigma_i((\bG(y,z),y)) = \mathrm{Im}[\partial_z\bG(y,z)] + \partial_y\bG(y,z)\beta_i(y,z),\quad \text{ for all }i = 1,\dots,d,
\end{cases}
\]
where $\partial_z\bG$ and $\partial_y\bG$ denote the Fr\'echet differentials of $\bG$ with respect to $z$ and $y$, respectively.
\end{proposition}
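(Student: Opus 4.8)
The plan is to reduce the statement to the invariance characterization already established in Theorem~\ref{invariance-thm}. As noted immediately before the proposition, Definition~\ref{y invariance def} is equivalent to Definition~\ref{invariance} applied to the manifold $\mathcal{G}=\mathrm{Im}[G]$ generated by the immersion $G=(\bG,\mathbb{I}_m)$; since $G$ is assumed to satisfy Assumption~\ref{assumption - immersion condition}, the set $\mathcal{G}$ is a genuine submanifold and $T_{G(y,z)}\mathcal{G}=\mathrm{Im}[\partial_{(y,z)}G]$. Theorem~\ref{invariance-thm} then applies verbatim and states that $\bG$ is $(r,y)$-invariant if and only if $\hat{\mu}(G(y,z))\in T_{G(y,z)}\mathcal{G}$ and $\hat{\sigma}_i(G(y,z))\in T_{G(y,z)}\mathcal{G}$ for all $i=1,\dots,d$ and all $(y,z)\in\R^m\times\mathcal{Z}$. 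The entire argument then reduces to rewriting these two tangency conditions in the split coordinates $(r,Y)$, exploiting the block structure of $G$.

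First I would record the splitting of the vector fields. Reading the Stratonovich dynamics \eqref{fr+sp dynamics} in the forward-rate/log-spread coordinates gives $\hat{\mu}=(\mu,\gamma)^{\top}$ and $\hat{\sigma}_i=(\sigma_i,\beta_i)^{\top}$, so that $\hat{\mu}(G(y,z))=(\mu((\bG(y,z),y)),\gamma(y,z))^{\top}$ and analogously for each $\hat{\sigma}_i$, where $\gamma(y,z)$ and $\beta_i(y,z)$ denote the Stratonovich drift and volatility of the log-spreads evaluated at the manifold point $(\bG(y,z),y)$. Next I would compute the tangent space. Because the second block of $G$ is the identity map $\mathbb{I}_m$ on $\R^m$, its Fr\'echet derivative acts on an increment $(u,w)\in\R^m\times\R^n$ by
\[
\partial_{(y,z)}G(y,z)(u,w)=\begin{pmatrix}\partial_y\bG(y,z)\,u+\partial_z\bG(y,z)\,w\\ u\end{pmatrix},
\]
whence
\[
T_{G(y,z)}\mathcal{G}=\bigl\{\,(\partial_y\bG(y,z)\,u+\partial_z\bG(y,z)\,w,\;u)^{\top}:\ (u,w)\in\R^m\times\R^n\,\bigr\}.
\]

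Then I would impose membership block by block. Matching the second ($\R^m$) component of $\hat{\mu}(G(y,z))$ against the tangent space forces the increment $u$ uniquely to $u=\gamma(y,z)$, since that component of a tangent vector equals $u$ itself. Substituting this value into the first block shows that $\hat{\mu}(G(y,z))\in T_{G(y,z)}\mathcal{G}$ holds if and only if there is $w\in\R^n$ with $\mu((\bG(y,z),y))=\partial_y\bG(y,z)\,\gamma(y,z)+\partial_z\bG(y,z)\,w$, that is, if and only if $\mu((\bG(y,z),y))\in\mathrm{Im}[\partial_z\bG(y,z)]+\partial_y\bG(y,z)\,\gamma(y,z)$; this is precisely the first stated condition (with the displayed equality read as membership in the affine subspace on the right). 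The identical computation applied to each $\hat{\sigma}_i$, with $u$ now forced to equal $\beta_i(y,z)$, yields the second stated condition, completing the equivalence.

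The argument involves no genuine analytic difficulty; the one point deserving care, and the crux of the proposition, is that the identity second block of $G$ pins down the $y$-increment $u$ as exactly the $\R^m$-component of the vector field under consideration. It is this forcing that converts the abstract tangency conditions of Theorem~\ref{invariance-thm} into the explicit affine-membership statements of the proposition; without the $\mathbb{I}_m$ block the increment $u$ would remain free and the resulting conditions would not decouple so cleanly.
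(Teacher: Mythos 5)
Your proposal is correct and follows essentially the same route as the paper: the paper proves this proposition by invoking the (asserted) equivalence of Definition \ref{y invariance def} with Definition \ref{invariance} for $G:=(\bG,\mathbb{I}_m)$ and then arguing ``analogously to Theorem \ref{invariance-thm}'', which is exactly your reduction; your block computation of $T_{G(y,z)}\mathcal{G}$ and the forcing of the $\R^m$-increment $u=\gamma(y,z)$ (resp.\ $u=\beta_i(y,z)$) simply makes explicit the step the paper leaves implicit. Your reading of the displayed equalities as membership in an affine subspace, and of $\gamma(y,z)$, $\beta_i(y,z)$ as the log-spread Stratonovich coefficients evaluated at the manifold point $(\bG(y,z),y)$, matches the paper's intent.
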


\subsection{Existence of FDRs in the form of Definition \ref{y invariance def}}

In this subsection, we address the issue of the existence of FDRs that can be realized in the form \eqref{invariance v2}. We start by observing that, if a multi-curve interest rate model $\mathcal{M}$ admits an FDR (in the sense of Definition \ref{fdr def}), then it is always possible to construct another FDR in the form \eqref{invariance v2}. Indeed, let us consider an FDR given by a function $G= (\bG,\wG)$ and a finite-dimensional process $(Z_t)_{t\geq0}$ such that $(r_t,Y_t) = (\bG(Z_t),\wG(Z_t))$. We can obviously obtain an FDR of the form \eqref{invariance v2} by considering $(r_t,Y_t) = (\bG(Z_t),Y_t)$ and the joint process $((Y_t,Z_t))_{t\geq0}$ with dynamics
\begin{displaymath}
d\begin{pmatrix}
Y_t\\ 
Z_t
\end{pmatrix}
= 
\begin{pmatrix}
\gamma(\bG(Z_t),Y_t) dt + \beta(\bG(Z_t),Y_t)\circ dW_t\\ 
a( Z_t) dt +b(Z_t) \circ dW_t\\ 
\end{pmatrix}.
\end{displaymath}

However, the above strategy of adding the whole vector $Y_t$ of log-spreads to the state variables $Z_t$ of an FDR might yield a joint process $((Y_t,Z_t))_{t\geq0}$ with redundant components. It is therefore of interest to determine conditions under which a more parsimonious FDR can be found (to this effect, compare also the two approaches described in Section \ref{subsection - HL NS} and see Remark \ref{parsimonious strategy}).

We start by noting that the embedding $G$ introduced in  \eqref{parameterized function G - finite-dimensional dimension} can be decomposed as 
\[
G(z) =( \bar{G}(z), \widetilde{G}(z)),
\] 
for $z\in\R^n$, where $\bar{G}$ takes values in $\mathcal{H}^{m+1}$ and $\widetilde{G}$ in $\R^m$. The following assumption ensures the possibility of replacing some components of the state vector $z$ by log-spreads.

\begin{condition}
\label{condition ywz}
There exists a subspace $\R^p\subseteq\mathbb{R}^m$, for $p\leq m$, that is diffeomorphic to a subspace of the state space $\mathcal{Z}\subseteq\R^n$ through the function $\widetilde{G}$. 
\end{condition}

Under Condition \ref{condition ywz}, we denote by $w$ the elements of the diffeormophic subspace of the state space $\mathcal{Z}$, while $\tilde{y}$ denotes a generic element of $\R^p$. For $z\in\mathcal{Z}$, we write $z = (\bar{z},w)\in\mathbb{R}^{n-p}\times\mathbb{R}^p$. 
Without loss of generality, we assume that $\tilde{y}$ represents the first $p$ components of the log-spread vector $y$, while $\bar{y}$ denotes the last $m-p$ components, so that $y=(\tilde{y},\bar{y})$.
Writing $\widetilde{G}= (\widetilde{G}_p,\widetilde{G}_{m-p})$, invariance implies that $\tilde{y} = \wG_{p}(\bar{z},w) =: \wG_p[\bar{z}](w)$ is invertible in $w$, for every  $\bar{z}\in\R^{n-p}$. Hence, there exists an inverse mapping $\widetilde{G}_{p}[\bar{z}]^{-1}$ such that $w = \widetilde{G}_{p}[\bar{z}]^{-1}(\tilde{y})$. 
Let us remark that Condition \ref{condition ywz} does not impose additional requirements on the model. Indeed, if a diffeomorphic subspace does not exist, then we can just assume that Condition \ref{condition ywz} holds with $p=0$.

To construct an FDR under Condition \eqref{condition ywz}, we first recall that FDRs are obtained by a set of generators of the Lie algebra $\{\hat{\mu},\hat{\sigma}_1,\ldots,\hat{\sigma}_d\}_{\mathrm{LA}}$ (see Section \ref{section - FDRs}). We denote the generators by $\xi_1,\ldots,\xi_n$. To construct an FDR including the log-spreads among the state variables, it suffices to add the last $m-p$ elements of the canonical basis of $\hat{\mathcal{H}}$ to the set of generators $(\xi_1,\ldots,\xi_n)$. 
Using notation \eqref{N constant direction vol}, we denote by $\gamma_k$ the $k$-th vector of the canonical basis, for $k=p+1,\ldots,m$. In line with Theorem \ref{thm:FDR}, a parsimonious FDR that includes the vector of log-spreads among the state variables can then be constructed under the following assumption.

\begin{assumption}
\label{finite aug lie alg}
The Lie algebra generated by $\xi_1,\dots,\xi_n,\gamma_{p+1},\dots,\gamma_{m}$ is finite-dimensional. 
\end{assumption}

Note that Assumption \ref{finite aug lie alg} does not necessarily hold even if $\{\xi_1,\ldots,\xi_n\}_{\mathrm{LA}}$ is finite-dimensional.
We now aim at obtaining necessary and sufficient conditions for the validity of Assumption  \ref{finite aug lie alg}. As a preliminary, we recall the notion of multi-index (see, e.g., \cite[Definition 7.4]{bjork2004geometry}).

\begin{defin}
A multi-index $\alpha\in\mathbb{Z}^{k}_+$ is any vector of dimension $k$ with nonnegative integer elements. For a multi-index $\alpha = (\alpha_1,\dots,\alpha_k)$, the differential operator $\partial^{\alpha}_{y}$ is defined as
\begin{displaymath}
\partial^{\alpha}_{y} := \frac{\partial^{\alpha_1}}{\partial y_{p+1}^{\alpha_1}}\frac{\partial^{\alpha_2}}{\partial y_{p+2}^{\alpha_2}}\cdots\frac{\partial^{\alpha_k}}{\partial y_k^{\alpha_k}}.
\end{displaymath}
\end{defin}

\begin{proposition}
If the Lie algebra
\[
\mathcal{N}:=\{\xi_1,\dots,\xi_n,\gamma_{p+1},\dots,\gamma_{m}\}_{\mathrm{LA}}
\]
is finite-dimensional, then
\begin{equation}
\label{necessary conditions}
\begin{cases}
\mathrm{dim}[\mathrm{span}\{\partial^{\alpha}_{y}\hat{\mu}(r,{y});\ \alpha\in\mathbb{Z}^{m-p}_+\}]<+\infty,\\ 
\mathrm{dim}[\mathrm{span}\{\partial^{\alpha}_{y}\hat{\sigma}_i(r,{y});\ \alpha\in\mathbb{Z}^{m-p}_+\}]<+\infty,\quad\text{ for all } i=1,\dots,d.
\end{cases}
\end{equation}
Conversely, if $\gamma_k$ commutes with $\hat{\mu}$ and $\hat{\sigma}_i$ for every $k = p+1,\dots,m$ and $i=1,\ldots,d$, then the Lie algebra $\mathcal{N}$ is finite-dimensional.
\end{proposition}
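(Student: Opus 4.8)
The plan is to exploit the fact that each $\gamma_k = E_{m+k}$ is a \emph{constant} vector field on $\hat{\mathcal{H}}$, so that $\partial_{\hat{r}}\gamma_k\equiv0$ and the Lie bracket collapses to a single term. First I would record the basic identity: for any smooth vector field $v$,
\[
[\gamma_k,v](\hat{r}) = -\,\partial_{\hat{r}}v(\hat{r})\,E_{m+k} = -\,\partial_{Y^k}v(\hat{r}),
\]
so that bracketing with $\gamma_k$ is, up to a sign, differentiation in the $k$-th log-spread direction. I would also note two structural facts that drive the whole argument: the $\gamma_k$ commute pairwise (being constant), and $\mathrm{ad}_{\gamma_k}:=[\gamma_k,\,\cdot\,]$ is a derivation of the Lie bracket by the Jacobi identity. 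Finally, since $\xi_1,\dots,\xi_n$ generate $\mathcal{L}:=\{\hat{\mu},\hat{\sigma}_1,\dots,\hat{\sigma}_d\}_{\mathrm{LA}}$, we have $\hat{\mu},\hat{\sigma}_1,\dots,\hat{\sigma}_d\in\mathcal{L}\subseteq\mathcal{N}$.

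For the necessary condition, I would argue that whenever $\mathcal{N}$ is finite-dimensional, every iterated bracket $[\gamma_{k_1},[\dots,[\gamma_{k_l},\hat{\mu}]\cdots]]$ lies in $\mathcal{N}$, and by the identity above equals $(-1)^{l}\,\partial_{Y^{k_1}}\cdots\partial_{Y^{k_l}}\hat{\mu}$. Choosing the sequence $(k_1,\dots,k_l)$ to realise a prescribed multi-index $\alpha\in\mathbb{Z}^{m-p}_+$ and invoking commutativity of mixed partials, I obtain $(-1)^{|\alpha|}\partial_y^{\alpha}\hat{\mu}\in\mathcal{N}$ for every $\alpha$; hence $\mathrm{span}\{\partial_y^{\alpha}\hat{\mu};\,\alpha\in\mathbb{Z}^{m-p}_+\}$ is contained in the finite-dimensional space $\mathcal{N}$, and the same reasoning applied to each $\hat{\sigma}_i$ yields \eqref{necessary conditions}.

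For the sufficient condition, the commutation hypothesis reads $\partial_{Y^k}\hat{\mu}=0$ and $\partial_{Y^k}\hat{\sigma}_i=0$ for all $k=p+1,\dots,m$ and $i=1,\dots,d$, i.e. $\mathrm{ad}_{\gamma_k}$ annihilates the generators of $\mathcal{L}$. Using that $\mathrm{ad}_{\gamma_k}$ is a derivation, I would show by induction on bracket length that it annihilates \emph{all} of $\mathcal{L}$: if $\mathrm{ad}_{\gamma_k}u=\mathrm{ad}_{\gamma_k}v=0$ then $\mathrm{ad}_{\gamma_k}[u,v]=[\mathrm{ad}_{\gamma_k}u,v]+[u,\mathrm{ad}_{\gamma_k}v]=0$. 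Consequently $[\gamma_k,\xi_j]=0$ for all $j,k$, and together with $[\gamma_k,\gamma_{k'}]=0$ and $[\mathcal{L},\mathcal{L}]\subseteq\mathcal{L}$ this shows that $\mathcal{L}+\mathrm{span}\{\gamma_{p+1},\dots,\gamma_m\}$ is already closed under the bracket; being the smallest Lie algebra containing the generators, it must coincide with $\mathcal{N}$. Since $\mathrm{dim}[\mathcal{L}]<+\infty$ in the FDR setting and only finitely many $\gamma_k$ are adjoined, $\mathcal{N}$ is finite-dimensional.

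The routine parts are the bracket identity and the derivation induction. The step I expect to require the most care is the bookkeeping in the necessary direction: verifying that iterated brackets of the constant fields $\gamma_k$ realise exactly the family $\{\partial_y^{\alpha}\}_{\alpha}$, so that no multi-index is missed and the order of differentiation is immaterial. This is precisely where the pairwise commutativity of the $\gamma_k$ and the smoothness of $\hat{\mu}$ and $\hat{\sigma}_i$ (Assumption \ref{smoothness coefficients}) are essential.
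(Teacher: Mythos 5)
Your proof is correct and follows essentially the same route as the paper: the necessity direction rests, as in the paper, on the identity $[\gamma_k,v]=-\partial_{Y^k}v$ for the constant fields $\gamma_k$ (so that iterated brackets with the $\gamma_k$ produce all the derivatives $\partial^{\alpha}_y\hat{\mu}$, $\partial^{\alpha}_y\hat{\sigma}_i$ inside the finite-dimensional $\mathcal{N}$), and the sufficiency direction rests on the Jacobi identity, i.e.\ the derivation property of $\mathrm{ad}_{\gamma_k}$. Your write-up merely makes explicit the induction on bracket length and the identification $\mathcal{N}=\mathcal{L}+\mathrm{span}\{\gamma_{p+1},\dots,\gamma_m\}$ that the paper leaves implicit.
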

\begin{proof}
The Lie algebra $\mathcal{N}$ contains all Lie brackets of the form
\[
[\xi_i,\gamma_k] = \partial_{\hat{r}} \xi_i\,\gamma_k - \partial_{\hat{r}}\gamma_k\,\xi_i = \partial_{y_k} \xi_i, \qquad \text{ for all }k = p+1,\dots,m. 
\]
Hence, all the differentials $\partial^{\alpha}_y\xi_i$ are contained in $\mathcal{N}$, for every multi-index $\alpha$. We first notice that $\mathcal{N}$ coincides with $\{\hat{\mu},\hat{\sigma}_1,\dots,\hat{\sigma}_d,\gamma_{p+1},\dots,\gamma_{m}\}_{\mathrm{LA}}$ and we observe that the vectors $\gamma_k$ commute with each other, so that their Lie bracket is null. Therefore, in order to have that $\mathrm{dim}[\mathcal{N}]<+\infty$, it is necessary that $\partial^{\alpha}_y\mu$ and $\partial^{\alpha}_y\sigma_i$, for $i = 1,\dots,d$, do not generate an infinite-dimensional distribution, for every $\alpha \in \mathbb{Z}^{m-p}_+$. This implies the necessity of condition \eqref{necessary conditions}.

On the other hand, assuming that $\gamma_k$ commutes with $\hat{\mu}$ and $\hat{\sigma}_i$ is equivalent to requiring that
\[
[ \hat{\mu},\gamma_k ] = 0
\qquad\text{ and }\qquad
[ \hat{\sigma}_i,\gamma_k ] = 0,\quad\text{ for all } i =1,\dots,d,
\]
for every $k = p+1,\dots,m$. By the Jacobi identity, successive Lie brackets commute with $\gamma_k$: indeeed, it holds that $[[\hat{\mu},\hat{\sigma}_i],\gamma_k] = - [[\hat{\sigma}_i,\gamma_k],\hat{\mu}] - [[\gamma_k,\hat{\mu}],\hat{\sigma}_i]= 0$. This implies that the commutativity of $\gamma_k$ with $\hat{\mu}$ and $\hat{\sigma}_i$ is a sufficient condition for $\mathcal{N}$ to be finite-dimensional.
\end{proof}

\begin{remark}[Constant direction volatility models]
In Section \ref{subsection - example constant direction volatility}, to prove the existence of FDRs for constant direction volatility models, we studied the Lie algebra $\mathcal{L}^1$, defined in \eqref{L1 cdv}, that is larger than the Lie algebra $\{\hat{\mu},\hat{\sigma}_1,\ldots,\hat{\sigma}_d\}_{\mathrm{LA}}$. In Proposition \ref{sufficient condition constant direction theorem}, we provided conditions ensuring that $\mathrm{dim}[\mathcal{L}^1]<+\infty$. Under these conditions, we can construct FDRs for which the log-spreads are included in the state variables. Indeed, Assumption \ref{finite aug lie alg} holds under the conditions of Proposition \ref{sufficient condition constant direction theorem}, since $\mathcal{L}^1$ already contains all the vector fields $(\gamma_1,\ldots,\gamma_m)$. 
\end{remark}

\section{Calibration of finite-dimensional realizations to market data} 
\label{section - calibration algorithm}

In this section, we study the calibration of a multi-curve interest rate model to market data, relying on the theoretical results presented in Section \ref{section - FDRs}. More specifically, we consider a model admitting FDRs and depending on a parameter vector $\theta$. The calibration procedure aims at determining the parameter vector $\theta^*$ that achieves the best fit to the market data. As a result of this procedure, the FDR associated to $\theta^*$ will represent the submanifold of $\hat{\mathcal{H}}$ that gives the best representation (in terms of mean-squared error) of the data under analysis. 

We consider a three-curve Hull-White model driven by a one-dimensional Brownian motion, for simplicity of presentation. The model is fully specified by the volatility
\[
\hat{\sigma}(\hat{r}) = 
( \sigma^0e^{-a^0x},	\sigma^1e^{-a^1x}, \sigma^2e^{-a^2x}, \beta^1, \beta^2 ),
\]
where $\beta^j,a^j,\sigma^j$ are positive constants, for $j=0,1,2$. 
This specification is a special case of the constant volatility models studied in Section \ref{subsection - example constant volatility} and satisfies the conditions of Proposition  \ref{characterization FDRs constant vol}. Indeed, the volatility functions are QE, since $(\textbf{F}+a^j)\sigma^j= 0$, for all $j=0,1,2$. 
For the model under analysis, the parameter vector is given by $\theta=(a^0,\sigma^0,a^1,\sigma^1,a^2,\sigma^2,\beta^1,\beta^2)$.
The algorithm described below for calibrating $\theta$ is based on the works of \cite{angelini2002consistent}, \cite{angelini2005consistent}  and \cite{slinko2010finite}.

\subsection{Construction of the FDRs}
\label{subsection - HL fdrs}

We first study the vector fields generating $\mathcal{L}:= \{\hat{\mu},\hat{\sigma}\}_{\mathrm{LA}}$. The annihilator polynomial of the forward rate components of the volatility $\hat{\sigma}$ is given by
\begin{displaymath}
M(\gamma) := (\gamma+a^0)(\gamma+a^1)(\gamma+a^2) = \gamma^3 + \alpha_3\gamma^2 + \alpha_2\gamma^1 + \alpha_1.
\end{displaymath}
By equation \eqref{dimension of L}, $\mathrm{dim}(\mathcal{L})= 5$ is the dimension of the FDRs. Accordingly, we consider a state vector in $\mathbb{R}^5$, denoted by $z=(z^0,z^0_1,z^1_1,z^2_1,z^3_1)$, in line with notation \eqref{notation}. 
By following the arguments of Section \ref{subsection - example constant volatility}, we can construct the tangential manifold determined by the composition of the integral curves of the generators of $\{\hat{\mu},\hat{\sigma}\}_{\mathrm{LA}}$. The generators are $\hat{\mu}$, $\hat{\sigma}$ and 
\begin{small}
\begin{align*}
\nu^1 	=  \begin{pmatrix}	-a^0\sigma^0e^{-a^0x}\\ 	-a^1\sigma^1e^{-a^1x}\\ 	-a^2\sigma^2e^{-a^2x}\\ 	\sigma^0 - \sigma^1\\ 	\sigma^0 - \sigma^2 \end{pmatrix},\quad
\nu^2 	=  \begin{pmatrix}	(a^0)^2\sigma^0e^{-a^0x}\\ 	(a^1)^2\sigma^1e^{-a^1x}\\ 	(a^2)^2\sigma^2e^{-a^2x}\\ 	-a^0\sigma^0 +a^1\sigma^1\\ 	-a^0\sigma^0 +a^2\sigma^2	 \end{pmatrix},\quad
\nu^3 	=  \begin{pmatrix}	-(a^0)^3\sigma^0e^{-a^0x}\\ 	-(a^1)^3\sigma^1e^{-a^1x}\\ 	-(a^2)^3\sigma^2e^{-a^2x}\\ 	(a^0)^2\sigma^0 -(a^1)^2\sigma^1\\ 	(a^0)^2\sigma^0 -(a^2)^2\sigma^2	 \end{pmatrix}.
\end{align*}
\end{small}
The composition of the integral curves of these vector fields yields the tangential manifold $G= (G^0,G^1,G^2,G^3,G^4)$, that can be explicitly computed as follows:
\begin{equation}
\begin{split}
\label{fdr fr HL}
G^j(z,x) 			&= r^M_j(x+z^0) +\sigma^je^{-a^jx}(z^0_1-a^jz^1_1+(a^j)^2z^2_1-(a^j)^3z^3_1)\\ 
				&\qquad +\frac{1}{2}\Bigl{(}\frac{\sigma^j}{a^j}\Bigr{)}^2e^{-2a^jx}\bigl(e^{-2a^jz^0}-1\bigr)- \frac{\sigma^j}{a^j}\left(\frac{\sigma^j}{a^j}-\delta^j_0\beta^j\right)e^{-a^jx}\bigl(e^{-a^jz^0}-1\bigr),\quad j = 0,1,2,\\ 
G^{2+j}(z)			&=(\sigma^0-\sigma^j)z^1_1+(-a^0\sigma^0+a^j\sigma^j)z^2_1+((a^j)^2\sigma^0-(a^j)^2\sigma^j)z^3_1+\beta^jz^0_1+y^M_j\\ 
				&\qquad +\int_0^{z^0}(r^M_0(s)-r^M_j(s))ds+\frac{1}{2}\Bigl{(}\frac{\sigma^0}{a^0}\Bigr{)}^2\Bigl(z^0-\frac{2}{a^0}\bigl(1-e^{-a^0z^0}\bigr)+\frac{1}{2a^0}\bigl(1-e^{-2a^0z^0}\bigr)\Bigr)\\ 
				&\qquad-\frac{1}{2}\Bigl{(}\frac{\sigma^j}{a^j}\Bigr{)}^2\Bigl{(}z^0-\frac{2}{a^j}\bigl(1-e^{-a^jz^0}\bigr)+\frac{1}{2a^j}\bigl{(}1-e^{-2a^jz^0}\bigr{)}\Bigr{)}+\frac{\sigma^j}{a^j}\beta^j\Bigl{(}z^0-\frac{1}{a^j}\bigl{(}1-e^{-a^jz^0}\bigr{)}\Bigr{)}\\ 
				&\qquad-\frac{1}{2}(\beta^j)^2z^0,
\qquad\qquad j = 1,2.
\end{split}
\end{equation}
The state process $(Z_t)_{t\geq0}$ is the solution to the SDE $dZ_t = A(Z_t)dt+B(Z_t)\circ dW_t$, where $A,B$ are vector fields on $\mathbb{R}^5$, respectively defined by the coefficients $a$ and $b$ introduced in system \eqref{constant vol - Z coefficients}. 
By \eqref{constant vol - Z coefficients}, the first component of $Z_t$ is simply given by $Z^0_t = t$, so that we write $Z_t = (t,Z_{1t})$. While  the dynamics of $(Z_{1t})_{t\geq0}$ can be derived explicitly, they are not needed in the following. 

\subsection{Initial families}
\label{init families}

The FDRs depend on the initial term structures. For the representation of the initial term structures $r^M_j$ of the forward rates, for $j=0,1,2$, we adopt the widely used Nelson-Siegel family (see \cite{nelson1987parsimonious}), given here in the following form:
\begin{equation}
\label{NS initial}
r^M_j(y,x)= y_0 + y_1e^{-a^j x} + y_2 x e^{-a^j x} 
=: {M}_j^0(x;a^j)\cdot {y},\quad\text{ for } j =0,1,2,
\end{equation}
where $y= (y_0,y_1,y_2)$ and ${M}_j^0(x;a^j):= ( 1 ,\ e^{-a^jx} ,\  xe^{-a^j} )$. Observe that in \eqref{NS initial} we directly use the parameter $a^j$ in the exponents.
The Nelson-Siegel family has been also adopted in \cite{slinko2010finite}, thus facilitating the comparison of our methodology with that work.
As a consequence of \eqref{NS initial}, the initial families of the forward rate components of the model depend linearly on the common vector $y$, the only difference being in the exponent $a^j$ that is specific to each forward rate. 

\subsection{The calibration procedure}
\label{cal proc}

We consider market data at daily frequency $\{t_0,\dots,t_N\}$. For each day $t\in\{t_0,\dots,t_N\}$, we extract from market data (by means of standard bootstrapping techniques, see below for more details) risk-free ZCB prices, prices of fictitious ZCBs associated to the risk-sensitive rates, for a set of maturities $\bar{x}:=\{x_1,\dots,x_n\}$, and the log-spreads:
\begin{equation}
\label{mk data}
\text{MK\_data}_t:=
\bigl(
B^0_t(x_1), \dots,B^0_t(x_n),B^1_t(x_1),\dots,B^2_t(x_n),Y^1_t,Y^2_t
\bigr)\in\R^{3n+2}.
\end{equation}

For every $t\in\{t_0,\dots,t_N\}$, we minimize the squared error between   the yields computed on the market data \eqref{mk data} and the yields generated by the FDR \eqref{fdr fr HL}, for all maturities in $\bar{x}$ and for all tenors. 
We denote by $G^j(\cdot; t,z_1, y;\theta)$ the FDR, highlighting the dependence on the parameter vector $\theta$ to be estimated. The argument $z_1$ represents the realization of the components $Z_{1t}$ of the state process, which need to be estimated at each date.  
We denote by $\text{Res}_{t}({z}_1,y;\theta) \in \R^{3n+2}$ the residual at date $t\in\{t_0,\ldots,t_N\}$, given as follows:
\begin{equation}
\label{residual formula}
\mathrm{Res}_{t}(z_1,y;\theta) := 
\begin{pmatrix}
\frac{1}{x_1}\bigl{(}-\int_0^{x_1}G^0(u; t,z_1,y;\theta)du - \log{B^0_t(x_1)}\bigr{)}\\
\vdots\\ 
\frac{1}{x_n}\bigl{(}-\int_0^{x_n}G^0(u;t,z_1,y;\theta)du - \log{B^0_t(x_n)}\bigr{)}\\
\frac{1}{x_1}\bigl{(}-\int_0^{x_1}G^1(u;t,z_1,y;\theta)du - \log{B^1_t(x_1)}\bigr{)}\\
\vdots\\ 
\frac{1}{x_n}\bigl{(}-\int_0^{x_n}G^2(u;t,z_1,y;\theta)du - \log{B^2_t(x_n)}\bigr{)}\\
G^{3}(t,z_1,y;\theta)-Y^1_t\\ 
G^{4}(t,z_1,y;\theta)-Y^2_t
\end{pmatrix}.
\end{equation}

The properties of the FDR derived in \eqref{fdr fr HL} and the choice of the initial family made in \eqref{NS initial} imply that the yields generated by the FDR are affine in $(z_1,y)$. Similarly as in \cite{angelini2002consistent}, this represents a significant advantage in the calibration algorithm, which is structured as follows:
\begin{enumerate}[label=\textbf{P.\arabic*}]
\item \label{cal procedure - step 1} 
For each $t\in\{t_0,\dots,t_N\}$, we minimize $\mathrm{Res}_{t}(z_1,y;\theta)$ with respect to the parameters $(z_1,y)$. Exploiting the affine structure of the yields, the SVD algorithm can be used to obtain $z_1(t,\theta)$ and $y(t,\theta)$, depending on the parameter vector $\theta$:
\[
\bigl(z_1(t,\theta),y(t,\theta)\bigr) := \mathrm{arg}\min_{(z_1,y)}\bigl\|\text{Res}_{t}(z_1,y;\theta)\bigr\|.
\]
\item \label{cal procedure - step 2} 
The (time-independent) parameter vector $\theta$ is estimated by minimizing the sum of the squared residuals obtained in the previous step along the entire dataset:
\[
\theta^* := \mathrm{arg}\min_{\theta}\sum_{h = 0}^N\bigl|\text{Res}_{t_h}(z_1(t_h,\theta),y(t_h,\theta);\theta)\bigr|^2.
\]
To compute the minimizer $\theta^*$, we adopt a reflective trust-region algorithm. 
\end{enumerate}

\subsection{Market data}
\label{real mk data}

For the risk-free rate we rely on OIS rates, while as risk-sensitive rates we consider 3M and 6M Euribor rates, corresponding to the most liquidly traded tenors. The dataset used in our analysis is given by daily market quotes from $10/08/2016$ until $19/11/2021$. Table \ref{table: mk data} presents a snapshot of the market instruments included in our dataset.
\begin{table}[h!]
\begin{small}
\begin{tabular}{lll}
\toprule
Interest rate curve	&	Market instrument	&	Quoted maturities\\ 
\midrule
Risk-free curve		&	OIS				&	1W - 2W - 3W - 1M - 2M - 3M - 4M - 5M - 6M - 7M - \\ 
				&					&	8M - 9M - 10M - 11M - 1Y - 15M - 18M - 21M - 2Y - \\
				&					&	3Y - 4Y - 5Y - 6Y - 7Y - 8Y - 9Y - 10Y\\ 
\hline
3M		&	FRA				&	1Mx4M - 2Mx5M - 3Mx6M - 4Mx7M - 5Mx8M - 6Mx9M - \\ 
				&					&	7Mx10M - 8Mx11M - 9Mx12M\\ 
				&	IRS				&	18M - 2Y - 3Y - 4Y - 5Y - 6Y - 7Y - 8Y - 9Y - 10Y\\ 
\hline
6M		&	FRA				&	1M+7M - 2Mx8M - 3Mx9M - 4Mx10M - 5Mx11M - 6Mx12M - \\ 
				&					&	9Mx15M - 12Mx18M\\ 
				&	IRS				&	2Y - 3Y - 4Y - 5Y - 6Y - 7Y - 8Y - 9Y - 10Y\\
\bottomrule
\end{tabular}
\end{small}
\caption{Summary of market data.}
\label{table: mk data}
\end{table}

On the basis of this market data, we compute the risk-free and risk-sensitive term structures, making use of the bootstrapping technique described in \cite{gerhart2020empirical}. The resulting term structures include the following maturities: $\{$1M, 2M, 3M, 4M, 5M, 6M, 9M, 1Y, 2Y, \dots, 10Y$\}$.

\subsection{Calibration results}
\label{results cal proc}

To assess the performance of the calibration algorithm, we compare the market data with the calibrated parameterized family at the end of the considered time window. More precisely, the calibrated parameterized family makes use of the parameter vector $\theta^*$ calibrated as explained in Section \ref{cal proc}, while the time-dependent parameters $(z,y)$ are estimated on the basis of the market data at the end of the time window.

By means of a stability analysis (see Section \ref{subsubsection - stability of theta} below), we have determined that a time window of four months yields the most stable results. We therefore consider a time window of four months, starting at 01/04/2021. 
The initial guesses $\theta_0$ for the parameters $\theta$ are given in Table \ref{theta 0}. The parameters $a^j,\beta^j$ in Table \ref{theta 0} are randomly chosen  in the interval $[0,1]$, while $\sigma^j$ is randomly chosen in the interval $[0,0.1]$.
The calibrated values are reported in Table \ref{theta est}.

\begin{table}[ht]
\centering
\begin{tabular}{ | c | c | c | c | c | c | c | c | }
 \hline
		 	& $\sigma$   			& $a$				& $\beta$  \\
 \hline
OIS 			& $\sigma^0 = 0.00285941$	&$a^0 = 0.53041117$  			& /  \\
 \hline
 Libor - 6M 	& $\sigma^1 =0.09546952	$	&$a^1 =  0.66253001$			& $\beta^1 =0.41734616$   \\
 \hline
  Libor - 6M 	& $\sigma^2 =0.09083773$ 	&$a^2 =0.65812121$ 			&$\beta^2 = 0.82477578$ \\
 \hline
\end{tabular}
\caption{\footnotesize{Initial guesses of the model parameters.}}
\label{theta 0}
\end{table}

\begin{table}[ht]
\centering
\begin{tabular}{ | c | c | c | c | c | c | c | c | }
 \hline
		 	& $\sigma$   			& $a$				& $\beta$  \\
 \hline
OIS 			& $\sigma^0 = 0.1643$	&$a^0 = 0.3719$  			& /  \\
 \hline
 Libor - 6M 	& $\sigma^1 =0.1590	$	&$a^1 =  0.3721$			& $\beta^1 =0.4814$   \\
 \hline
  Libor - 6M 	& $\sigma^2 =0.1598$ 	&$a^2 = 0.3727$ 			&$\beta^2 = 0.8825$ \\
 \hline
\end{tabular}
\caption{\footnotesize{Calibrated values of the model parameters.}}
\label{theta est}
\end{table}

The quality of the fit of the calibrated parameterized family to the market data at the end of the considered time window is illustrated in Figure \ref{results yields}. Apart from the shortest maturities, the quality of the fit appears satisfactory.
We can also notice that, due to the unusual monetary policy conditions of 2021, the yields are negative for all maturities. 
Since the spreads are spot processes, we can compare the calibrated log-spreads with respect to the log-spreads obtained from market data on the whole time window. This comparison is illustrated in Figure \ref{results spreads}.

\begin{figure}[ht]
\centering
\includegraphics[scale = 0.55]{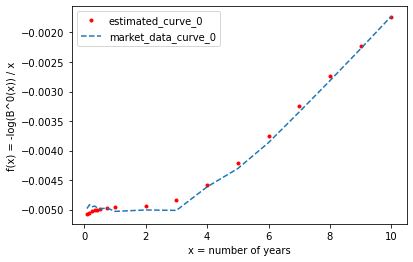}\hfil
\includegraphics[scale = 0.55]{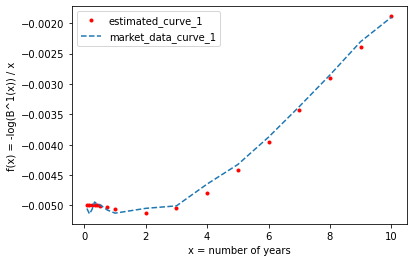}\hfil
\includegraphics[scale = 0.55]{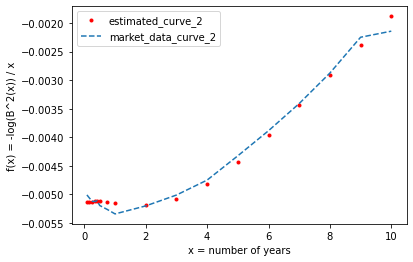}\\
\caption{\footnotesize{Comparison between market yield and the calibrated yield curve at the end of time window. Top panel: risk-free curve; central panel: 3M curve; bottom panel: 6M curve.}}
\label{results yields}
\end{figure}

\begin{figure}[ht]
\centering
\includegraphics[scale = 0.55]{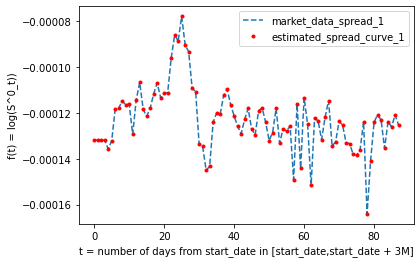}\hfil
\includegraphics[scale = 0.55]{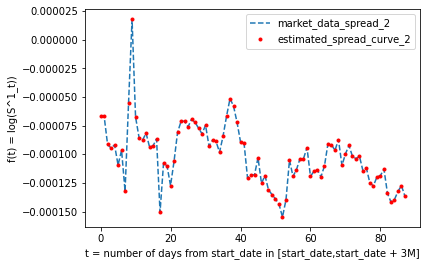}
\caption{\footnotesize{Comparison between market data and the calibrated log-spread for the whole time series. Top panel: 3M log-spread; bottom panel: 6M log-spread.}}
\label{results spreads}
\end{figure}

In Table \ref{table errors}, the relative errors are given. The errors have been computed as follows:
\begin{enumerate}
\item if $G^j(\bar{x})$ is the calibrated $j$-th yield curve and $M^j(\bar{x})$ is the $j$-th yield curve obtained from market data, both considered at the end of the time window, then 
\begin{equation}
\label{error yield}
\text{err}_{\text{yield curve}} := \frac{||G^j(\bar{x}) - M^j(\bar{x})||_n}{||M^j(\bar{x})||_n};
\end{equation}
 \item if $(Y^j_t)_{t\in\{t_0,\dots,t_N\}}$ is the estimated value of the $j$-th log-spread over the entire time window $\{t_0,\dots,t_N\}$ used for the calibration and $(M^j_t)_{t\in\{t_0,\dots,t_N\}}$ is the market value of the $j$-th log-spread in the considered time window, the relative error is
\begin{equation}
\label{error spread}
\text{err}_{\text{spread curve}} := \frac{||S^j - M^j||}{||M^j||}= \frac{\sqrt{\sum_{h = 0}^N(Y^j_{t_h} - M^j_{t_h})^2}}{\sqrt{\sum_{h = 0}^N(M^j_{t_h})^2}}.
\end{equation}
\end{enumerate}

\begin{table}[ht]
\begin{centering}
\begin{tabular}{llll}
\toprule
{} 			&OIS 	&3M 			&                     6M \\
\midrule
yields 		&0.01917	&0.01705 		&0.02385 \\
spread 		&      - 	& 6.92929e-07 	&8.491172e-07 \\
\bottomrule
\end{tabular}
\caption{\footnotesize{Relative errors.}}
\label{table errors}
\end{centering}
\end{table}

\subsubsection{Stability with respect to the length of the time window}
\label{subsection - length error}

When calibrating the model, it is essential to choose in a suitable way the length of the time window. Indeed, a too short time window does not convey sufficient information for a reliable estimation. On the contrary, a too long time window can also be problematic, since consistency is a local property by definition.

In Table \ref{error length} we report the relative errors obtained using time windows of different lengths. 
For the yield curves, we report the error as defined in equation \eqref{error yield}. For the spreads, we report the relative error between the estimated spread and the market data at the end of the time window. The calibration procedure is always initialized at $\theta_0$ as given in Table \ref{theta 0}. We consider time series of market data ending in 30/07/2021, with lengths 1M, 2M, 3M, 4M, 5M and 6M.
This analysis reveals that, for the dataset under consideration, the time length that achieves the best performance is four months, as shown in Table \ref{error length}.


\begin{table}[ht]
\center
\small
\begin{tabular}{ |c||c|c|c||c|c| }
\hline
\textbf{length of $\{t_0,\dots,t_N\}$} &\multicolumn{3}{c||}{\textbf{Yield curve}} & \multicolumn{2}{c|}{\textbf{Spread curve}} \\
\hline
{months} &      RFRs &  Euribor 3M &  Euribor 6M &     Spread 3M &     Spread 6M \\
\hline
1 &  0.0520 &    0.0524 &    0.0368 &  2.5736e-07 &  2.4474e-07 \\
2 &  0.0557 &    0.0489 &    0.0371 &  5.6650e-08 &  1.6761e-10 \\
3 &  0.0213 &    0.0218 &    0.0387 &  4.4926e-07 &  3.8488e-07 \\
4 &  0.0191 &    0.0171 &    0.0239 &  1.1579e-06 &  9.6982e-07 \\
5 &  0.0159 &    0.0163 &    0.0334 &  3.1638e-08 &  3.4686e-08 \\
6 &  0.0213 &    0.0214 &    0.0393 &  1.2390e-07 &  6.3183e-08 \\
\hline
\end{tabular}
\caption{\footnotesize{Relative error as a function of the length (in months) of the time window.}}
\label{error length}
\end{table}

\subsubsection{Stability of the time-independent parameters}
\label{subsubsection - stability of theta} 

In practice, the stability of the calibrated parameters represents an important property. We test this stability through the following procedure, where at each step the length of the considered time window is kept fixed at four months, on the basis of the findings reported in Section \ref{subsection - length error}:
\begin{enumerate}[label=\textbf{A.\arabic*}]
\item \label{step 1 stability} apply the calibration algorithm with a time window starting at day $d_0$;
\item \label{step 2 stability} perform the calibration over a time window starting at day $d_0 + 1$, using as initial guess the parameter values $\theta^*$ estimated at step \ref{step 1 stability};
\item repeat the previous step, rolling the time window by one day for 50 consecutive steps.
\end{enumerate}

Table \ref{stat theta 0 varying firs 50} reports the results of this procedure, giving the standard deviation of the calibrated parameters.
This remarkable stability can be partly explained by the procedure employed. Indeed, in line with the parameter recalibration procedure widely adopted in market practice, the initial guess $\theta_0$ for the calibration at iteration $i$ is chosen as the value $\theta^*$ estimated at iteration $i-1$. Since the time window is kept fixed at a length of four months, shifting the time window by one day at each step does not alter significantly the market information, thus explaining the stability of the calibrated parameters.

\begin{table}[ht]
\begin{tabular}{lllllllll}
\hline
{} &      $ a^0 $&  $ \sigma^0$ &   $    a^1$ & $  \sigma^1$ & $      a^2$ & $ \sigma^2$ &  $  \beta^1$ & $   \beta^2$ \\
\hline
avg &  0.371948 &  0.164252 &  0.372120 &  0.159068 &  0.372732 &  0.159813 &  0.481433 &  0.882557 \\
std &  0.000004 &  0.000006 &  0.000003 &  0.000006 &  0.000004 &  0.000004 &  0.000002 &  0.000003 \\
\hline
\end{tabular}
\caption{\footnotesize{Average and standard deviation of the parameters calibrated over $[d_0,d_0 + 50$ days].}}
\label{stat theta 0 varying firs 50}
\end{table}

\bibliographystyle{alpha}
\bibliography{BibMC}

\end{document}